\newif\ifcomments
\newcommand{\alice}{\mathcal{A}}
\newcommand{\bob}{\mathcal{B}}
\newcommand{\charlie}{\mathcal{C}}
\newcommand{\abc}{\left(\mathcal{A}, \mathcal{B}, \mathcal{C}\right)}
\newcommand{\aliceprime}{\alice'}
\newcommand{\bobprime}{\bob'}
\newcommand{\charlieprime}{\charlie'}
\newcommand{\abcprime}{\brackets{\aliceprime, \bobprime, \charlieprime}}
\newcommand{\alicetild}{\widetilde{\alice}}
\newcommand{\bobtild}{\widetilde{\bob}}
\newcommand{\charlietild}{\widetilde{\charlie}}
\newcommand{\abctild}{\brac{\alicetild, \bobtild, \charlietild}}
\newtheorem{theorem}{Theorem}
\newtheorem{lemma}[theorem]{Lemma}
\newtheorem{claim}{Claim}
\newtheorem{corollary}{Corollary}
\newtheorem{remark}{Remark}
\newtheorem{definition}{Definition}
\theoremstyle{definition}
\newcommand{\bit}{\left\{0,1\right\}}
\newcommand{\uniform}{\xleftarrow{\$}}
\newcommand{\pr}[1]{\Pr \left[ #1 \right] }
\newcommand{\from}{\leftarrow}
\newcommand{\secparam}{\lambda}
\newcommand{\ct}{\mathsf{CT}}
\newcommand{\ketbra}[2]{\ket{#1}\!\!\bra{#2}}
\newcommand{\ketbraX}[1]{\ket{#1}\!\!\bra{#1}}
\newcommand{\negl}{\mathsf{negl}}
\newcommand{\poly}{\mathsf{poly}}
\newcommand{\tr}{\mathsf{Tr}}
\newcommand{\given}{\mid}
\DeclareMathOperator*{\E}{\mathbb{E}}
\newcommand{\id}{\mathsf{id}}
\newcommand{\ind}{\mathsf{ind}}
\newcommand{\abs}[1]{\left| #1 \right|}
\newcommand{\absbig}[1]{\bigg| #1 \bigg|}
\newcommand{\inner}[2]{\langle #1,#2\rangle}
\newcommand{\brackets}[1]{\left( #1 \right)}
\newcommand{\bracketsSquare}[1]{\left[ #1 \right]}
\newcommand{\bracketsCurly}[1]{\left\{ #1 \right\}}
\newcommand{\brac}{\brackets}
\newcommand{\bracC}{\bracketsCurly}
\newcommand{\bracS}{\bracketsSquare}
\newcommand{\enc}{\mathsf{Enc}}
\newcommand{\dec}{\mathsf{Dec}}
\newcommand{\gen}{\mathsf{Gen}}
\newcommand{\ue}{\mathsf{UE}}
\newcommand{\edit}[1]{\textcolor{purple}{#1}}
\newcommand{\pnote}[1]{{\color{red} P:#1}}
\newcommand{\fatih}[1]{{\color{blue} F: #1}}
\newcommand{\hnote}[1]{{\color{magenta} HY: #1}}
\newcommand{\edit}[1]{}
\newcommand{\pnote}[1]{}
\newcommand{\fatih}[1]{}
\newcommand{\hnote}[1]{}
\newcommand{\N}{\mathbb{N}}
\newcommand{\C}{\mathbb{C}}
\newcommand{\R}{\mathbb{R}}
\newcommand{\F}{\mathbb{F}}
\newcommand{\cH}{\mathcal{H}}
\newcommand{\cM}{\mathcal{M}}
\newcommand{\cR}{\mathcal{R}}
\newcommand{\cS}{\mathcal{S}}
\newcommand{\nldist}[2]{T_{SSI}\left(#1, \allowbreak #2\right)}
\newcommand{\tracedist}[2]{T\left(#1, #2\right)}
\newcommand{\haar}{\mathscr{H}}
\newcommand{\unitarygp}{\mathscr{U}}
\renewcommand{\braket}[2]{\langle #1 | #2 \rangle}
\newcommand{\qkey}{\ket{\includegraphics[height=1.5ex]{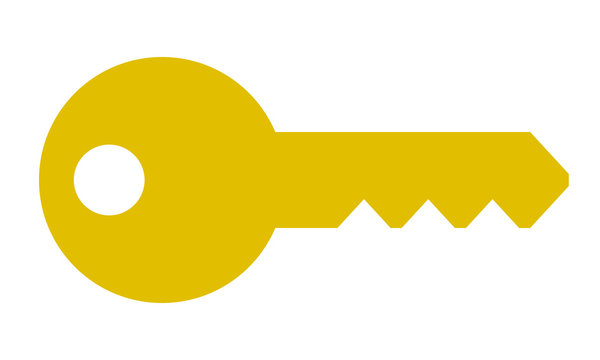}}\!}
\newcommand{\sqkey}{\ket{\includegraphics[height=1.2ex]{key.jpg}}\!}
\newcommand{\ekey}{{\sf ek}}
\newcommand{\distr}{{\cal D}}
\newcommand{\eps}{\varepsilon}
\newcommand{\adversary}{{\cal A}}
\newcommand{\bfB}{\textbf{B}}
\newcommand{\bfC}{\textbf{C}}
\newcommand{\prob}{{\sf Pr}}
\newcommand{\linop}{{\cal L}}
\newcommand{\qcipher}{\ket{\ct}}
\newcommand{\epr}{\mathsf{EPR}}
\newcommand{\cnot}{\mathsf{CNOT}}
\newcommand{\reg}[1]{{\color{gray}#1}}
\newcommand{\ignore}[1]{}
\renewcommand{\cal}[1]{\mathcal{#1}}
\newcommand{\ot}{\otimes}
\newcommand{\SymmSub}{\Pi_{\mathsf{Sym}}}
\newcommand{\Haar}{\mathrm{Haar}}
\newcommand{\Tr}{\mathsf{Tr}} 
\title{Simultaneous Haar Indistinguishability\\ with Applications to Unclonable Cryptography}
\author{Prabhanjan Ananth\thanks{prabhanjan@cs.ucsb.edu}\\ \small{UCSB} \and {Fatih Kaleoglu}\thanks{kaleoglu@ucsb.edu}\\ \small{UCSB} \and {Henry Yuen}\thanks{hyuen@cs.columbia.edu}\\ \small{Columbia University}}
\date{}
\begin{document}

\maketitle

\begin{abstract}
\noindent Unclonable cryptography is concerned with leveraging the no-cloning principle to build cryptographic primitives that are otherwise impossible to achieve classically. Understanding the feasibility of unclonable encryption, one of the key unclonable primitives, satisfying indistinguishability security in the plain model has been a major open question in the area. So far, the existing constructions of unclonable encryption are either in the quantum random oracle model or are based on new conjectures. 
\par We present a new approach to unclonable encryption via a reduction to a novel question about nonlocal quantum state discrimination: how well can non-communicating -- but entangled -- players distinguish between different distributions over quantum states? We call this task \emph{simultaneous state indistinguishability}. Our main technical result is showing that the players cannot distinguish between each player receiving independently-chosen Haar random states versus all players receiving the same Haar random state. 
\par We leverage this result to present the first construction of unclonable encryption satisfying indistinguishability security, with quantum decryption keys, in the plain model. We also show other implications to single-decryptor encryption and leakage-resilient secret sharing.

\end{abstract}
\newpage 
\tableofcontents
\newpage 

\newcommand{\ellpnorm}[2]{\left\|{#1}\right\|_{#2}}
\newcommand{\tvdist}[2]{d_{\mathsf{TV}}\brac{#1,#2}}
\newcommand{\bobx}{\widetilde{\text{Bob}}}
\newcommand{\charliex}{\widetilde{\text{Charlie}}}

\section{Introduction}
\label{sec:intro}

Quantum state discrimination~\cite{Hel69} is a foundational concept with applications to quantum information theory, learning theory and cryptography. In a state discrimination task, a party receives $\rho_x$ from an ensemble $\{\rho_x\}_{x \in {\cal X}}$ and has to determine which state it received. A compelling variant of this problem is concerned with the multi-party setting where there are two or more parties and each party receives a disjoint subset of qubits of $\rho_x$. This multi-party variant of state discrimination has also garnered interest from quantum information-theorists focused on the LOCC (local operations and classical communication) model and quantum data hiding~\cite{peres1991optimal,bennett1993teleporting,divincenzo2002quantum,chitambar2014everything}. Understanding the multi-party state discrimination problem in turn sheds light on the difficulty of simulating global measurements using local measurements \cite{RW12}.
\par An important aspect to consider when formulating the multi-party state discrimination problem is the resources shared between the different parties. If we allow the parties to share entanglement and also communicate with each other then this is equivalent to the original state discrimination task (against a single party) due to teleportation. Thus, in order for the multi-party setting to be distinct from the single-party setting, we need to disallow either shared entanglement or classical communication. This results in two different settings: 
\begin{itemize}
\item \textsc{Parties without shared entanglement}: In this setting, the parties are allowed to communicate using classical channels but they cannot share entanglement. The extensive research on quantum data hiding and LOCC~\cite{divincenzo2002quantum,GeaHiding02,hayden2005multiparty,MWW09,piani2014quantumness,bennett1999quantum,childs2013framework,chitambar2014asymptotic,chitambar2014everything,halder2019strong,nathanson2005distinguishing} are mainly concerned with this setting. 
\item \textsc{Parties with shared entanglement}: In this setting, the parties are allowed to share entanglement but they cannot communicate. On the contrary, this setting is relatively unexplored with the notable exception being the recent works of~\cite{MOST21,escola2022parallel}. There are good reasons to study multi-party state discrimination with shared entanglement. Firstly, it can be viewed as a subclass of semi-quantum games~\cite{buscemi2012all}, which are non-local games with quantum questions and classical answers. Secondly, it has connections to unclonable cryptography~\cite{Wiesner83,Aar09}, an emerging area in quantum cryptography, as discussed in~\cite{MOST21}. 
\end{itemize}

\paragraph{Our Work.} We focus on the setting when the parties share entanglement but are not allowed to communicate. We introduce a new concept called {\em simultaneous state indistinguishability} (SSI). In the two-party version of the problem, we have two parties (say, Bob and Charlie) who receive as input one of two bipartite states $\{\rho_0,\rho_1\}$ and they are supposed to distinguish. The first half of $\rho_b$ is given to Bob and the second half is given to Charlie for $b \in \bit$. 

We say that $\rho_0$ and $\rho_1$ are \emph{$\epsilon$-simultaneous state indistinguishable} if the probability that Bob and Charlie can simultaneously distinguish is at most $\epsilon$. That is, 
$\tvdist{x}{x'} \le \epsilon$, where $\tvdist{\cdot}{\cdot}$ denotes the total variation distance, 
$x = (x_B, x_C) \from (\text{Bob, Charlie})(\rho_0)$ denotes the random variable corresponding to the joint outputs of Bob and Charlie given input $\rho_0$, and similarly $x' = (x_B', x_C') \from (\text{Bob, Charlie})(\rho_1)$ denotes the random variable corresponding to their joint outputs when given input $\rho_1$.

This is related to a recent concept introduced by~\cite{MOST21} who considered the unpredictability (search) version of this problem whereas we are interested in indistinguishability. Looking ahead, for applications, it turns out that the indistinguishability notion is more amenable to carrying out proofs (e.g. in the security of unclonable encryption) compared to the unpredictability definition considered by~\cite{MOST21}. This is largely due to the fact that our notion is more compatible with the hybrid technique.
\par An interesting special case of simultaneous-state-indistinguishability is when Bob and Charlie either receive copies of the same state $\ket{\psi}$ drawn from some distribution $\distr$ or receive i.i.d. samples from $\distr$, i.e.
\begin{align*}
    \rho_0 = \E_{\ket{\psi} \sim \distr} {\psi}^{\otimes t} \ot \psi^{\ot t} \quad \text{vs.} \quad \rho_1 = \E_{\ket{\psi_B},\ket{\psi_C} \sim \distr} {\psi_B}^{\otimes t} \ot \psi_C^{\ot t}.
\end{align*}
Observe that if Bob and Charlie were allowed to make global entangled measurements then they can indeed distinguish by performing swap tests. However, it is not clear these two situations are distinguishable using local measurements, even with preshared entanglement between Bob and Charlie.

We study simultaneous state indistinguishability in the case that each party receives (copies of) a state drawn from the \emph{Haar measure}.\footnote{The Haar measure over states is the unique measure where for all fixed unitaries $U$, if $\ket{\psi}$ is Haar-distributed then so is $U \ket{\psi}$.} Specifically, we consider the following definition: 

\begin{quote}
    {{\bf $(d,t,\varepsilon)$-Simultaneous Haar Indistinguishability}}: We say that $(d,t,\varepsilon)$-simultaneous Haar indistinguishability holds if any two non-communicating and entangled adversaries Bob and Charlie can distinguish the following distributions with probability at most $\varepsilon$: 
\begin{itemize}
    \item Bob and Charlie each receive $t$ copies of $\ket{\psi}$, where $\ket{\psi}$ is a $d$-dimensional Haar state, 
    \item Bob receives $t$ copies of $\ket{\psi_B}$ and Charlie receives $t$ copies of $\ket{\psi_C}$, where $\ket{\psi_B},\ket{\psi_C}$ are i.i.d $d$-dimensional Haar states. 
\end{itemize} 
\end{quote}

\noindent In the default setting, both Bob and Charlie each output 1 bit. We also consider the setting when they output multiple bits. 

\par Variants of this problem have been studied in different contexts before. Independently, two works, namely, Harrow~\cite{Harrow23} and Chen, Cotler, Huang and Li~\cite{CCHL21} showed (for the case when $t=1$) that Bob and Charlie fail, except with probability negligible in the dimension $d$, in the above distinguishing experiment as long as they don't share any entanglement. In fact, Harrow's result proves something stronger: the indistinguishability holds even if the two parties exchange classical information (i.e., LOCC setting). Both works discuss the applications of this problem to well studied topics such as multiparty data hiding, local purity testing and separations between quantum and classical memory. Neither of the works~\cite{Harrow23,CCHL21} addresses the setting when Bob and Charlie can share an arbitrary amount of entanglement.

\paragraph{Main Result.} We show the following: 

\begin{theorem}[Informal]
For any $d,t \in \mathbb{N}$, $(d,t,\varepsilon)$-Simultaneous Haar indistinguishability holds for $\varepsilon = O\left( \frac{t^2}{\sqrt{d}} \right)$. 
\end{theorem}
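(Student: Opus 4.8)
The plan is to fix an arbitrary no-communication strategy and reduce its distinguishing advantage to a single Haar second-moment estimate. Write $\mathcal H_B=\mathcal H_C=(\C^d)^{\ot t}$; a strategy is a shared state $\sigma_{E_BE_C}$ together with binary POVMs $\{M^B_a\}_{a\in\bit}$ on $\mathcal H_B\ot\mathcal H_{E_B}$ and $\{M^C_b\}_{b\in\bit}$ on $\mathcal H_C\ot\mathcal H_{E_C}$ (the multi-bit case reduces here by coarse-graining the optimal distinguisher). By the Haar moment formula the inputs are $\rho_0=\SymmSub^{(2t)}/\binom{d+2t-1}{2t}$ and $\rho_1=\brac{\SymmSub^{(t)}/\binom{d+t-1}{t}}^{\ot2}$, which have identical marginals $\bar\rho:=\SymmSub^{(t)}/\binom{d+t-1}{t}=\E_{\ket\psi\sim\Haar}\psi^{\ot t}$ on $\mathcal H_B$ and on $\mathcal H_C$. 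Hence Bob's and Charlie's marginal outcome distributions coincide in the two worlds, so the four signed probability gaps $\Tr[(M^B_a\ot M^C_b)(\Delta\ot\sigma_E)]$, with $\Delta:=\rho_0-\rho_1$, all sum to zero along rows and columns of the $2\times2$ table; they are therefore equal up to sign, and
\[
  \tvdist{x}{x'}=\tfrac12\,\absbig{\Tr\big[(\tilde M^B\ot\tilde M^C)(\Delta\ot\sigma_E)\big]},\qquad \tilde M^B:=M^B_0-M^B_1,\ \ \tilde M^C:=M^C_0-M^C_1.
\]
Purifying $\sigma_E=\ketbraX{\theta}_{E_BE_C}$ (absorbing the purifying register into Charlie's side only helps the adversary) and using the identity $\Delta=\E_{\ket\psi}\big[\psi^{\ot t}_B\ot(\psi^{\ot t}-\bar\rho)_C\big]$, this becomes $\tfrac12\,\absbig{\E_{\ket\psi}\bra{\theta}(\mu^B_\psi\ot\nu^C_\psi)\ket{\theta}}$ with $\mu^B_\psi=\Tr_{\mathcal H_B}\!\big[\tilde M^B(\psi^{\ot t}\ot I_{E_B})\big]$ on $E_B$ and $\nu^C_\psi=\Tr_{\mathcal H_C}\!\big[\tilde M^C\big((\psi^{\ot t}-\bar\rho)\ot I_{E_C}\big)\big]$ on $E_C$, which satisfy $\opNorm{\mu^B_\psi}\le1$, $\opNorm{\nu^C_\psi}\le2$, and -- crucially -- $\E_{\ket\psi}\nu^C_\psi=0$.

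Next I would apply Cauchy--Schwarz to trade this bilinear quantity for two variances. Since $\E_{\ket\psi}\nu^C_\psi=0$, subtracting $\bar\mu:=\E_{\ket\psi}\mu^B_\psi$ from $\mu^B_\psi$ changes nothing, and Cauchy--Schwarz applied first to $\ket\theta$ and then to $\E_{\ket\psi}$ gives (writing $\theta_{E_B},\theta_{E_C}$ for the reduced states)
\[
  \tvdist{x}{x'}\le\tfrac12\,\sqrt{\E_{\ket\psi}\Tr\!\big[(\mu^B_\psi-\bar\mu)(\mu^B_\psi-\bar\mu)^\dagger\theta_{E_B}\big]}\;\cdot\;\sqrt{\E_{\ket\psi}\Tr\!\big[(\nu^C_\psi)^\dagger\nu^C_\psi\,\theta_{E_C}\big]}.
\]
The first factor is at most $1$ for free, since it equals $\E_{\ket\psi}\Tr[\mu^B_\psi(\mu^B_\psi)^\dagger\theta_{E_B}]-\Tr[\bar\mu\bar\mu^\dagger\theta_{E_B}]\le\Tr[\theta_{E_B}]=1$. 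So the whole theorem reduces to establishing
\[
  \E_{\ket\psi}\Tr\!\big[(\nu^C_\psi)^\dagger\nu^C_\psi\,\theta_{E_C}\big]=O\!\left(t^4/d\right),
\]
which yields $\tvdist{x}{x'}=O(t^2/\sqrt d)$.

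The heart of the argument is this last display, and it is where I expect the real work. Expanding $\tilde M^C=\sum_{ij}\ketbra{i}{j}_{\mathcal H_C}\ot W_{ij}$ in a fixed basis of $\mathcal H_C$ turns $\nu^C_\psi$ into $\sum_{ij}(\psi^{\ot t}-\bar\rho)_{ji}\,W_{ij}$, so the target expectation is $\sum_{ijkl}\E_{\ket\psi}\!\big[\overline{(\psi^{\ot t}-\bar\rho)_{ji}}\,(\psi^{\ot t}-\bar\rho)_{lk}\big]\,\Tr[W_{ij}^\dagger W_{kl}\,\theta_{E_C}]$. The Haar fourth-moment coefficient comes from $\E_{\ket\psi}\psi^{\ot 2t}$ via the symmetric-subspace/Weingarten formulas; after cancelling the $\bar\rho$ contributions it is a sum over permutations in $S_{2t}$ with denominator $\binom{d+2t-1}{2t}$, whose ``diagonal'' part is of order $1/d$ while the rest is smaller, with the ratio $\binom{d+t-1}{t}^2/\binom{d+2t-1}{2t}=\Theta\!\big(\binom{2t}{t}\big)$ producing the $\poly(t)$. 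To convert this into a bound one must control the index sums even though $E_C$ is unbounded: this is exactly where $\|\tilde M^C\|_{\mathsf{op}}\le1$ is used, in the form of the block inequalities $\sum_i W_{ij}^\dagger W_{ij}\le I_{E_C}$ for each $j$ and $\opNorm{\textstyle\sum_i W_{ii}}=\opNorm{\Tr_{\mathcal H_C}\tilde M^C}\le\dim(\mathcal H_C)$, which precisely cancel the dimensional factors from the sums over $i,j,k,l$. This phenomenon -- that unbounded pre-shared entanglement does not help, because Bob's and Charlie's inputs sit in tensor product with their entangled shares and the ancilla only enters through a bounded observable -- is the regime left open by Harrow~\cite{Harrow23} and by Chen--Cotler--Huang--Li~\cite{CCHL21}; making the Weingarten bookkeeping and these operator inequalities fit together for all $t$ at once is the main obstacle. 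The remaining ingredients -- justifying the Haar integral as a limit of finite ensembles, the purification and serialization of the shared state, and the coarse-graining reduction in the multi-bit case -- are routine.
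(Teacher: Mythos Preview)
Your reduction to a one-sided Haar variance is correct and is a genuinely different route from the paper's. The paper never separates Bob from Charlie by Cauchy--Schwarz on $\ket\theta$; it keeps the full bilinear form $\Tr[(M\ot N)(\psi_{\reg{B_1C_1}}\ot\Delta_{\reg{B_2C_2}})]$, writes the entanglement via Schmidt decomposition as $(\sigma^{1/2}\ot I)\ket\Omega$, and handles $\Delta$ by grouping $S_{2t}\setminus(S_t\times S_t)=\bigcup_{s\ge1}S_{2t}^{(s)}$ according to the number $s$ of registers crossing between the two halves. The combinatorial identity $\sum_{\sigma\in S_{2t}^{(s)}}P_\sigma=(t!)^2\binom{t}{s}^2\,(\SymmSub\ot\SymmSub)P_{\sigma_s}(\SymmSub\ot\SymmSub)$ collapses each layer to a single swap $\sigma_s$, after which a partial-transpose tensor-network identity and a Cauchy--Schwarz with an explicit resolution of identity on the $s$ swapped registers bound the $s$-th layer by roughly $\binom{t}{s}^2 s!\,d^{-s/2}$; summing over $s$ gives $O(t^2/\sqrt d)$.

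The gap is precisely the step you flag as the main obstacle: the claim $\E_{\ket\psi}\Tr[(\nu^C_\psi)^\dagger\nu^C_\psi\,\theta_{E_C}]=O(t^4/d)$ is asserted but not proved, and the two block inequalities you list are not sufficient beyond $t=1$. Unfolding, your target equals $\sum_{ijkl}\Delta_{(il),(jk)}\,\Tr[W_{ij}^\dagger W_{kl}\,\theta_{E_C}]$, a contraction of $\Delta$ against a positive form in the blocks of $\tilde M^C$. For $t\ge2$ the permutations in $\Delta$ generate contractions of the $W$-blocks across \emph{subsets} of the $t$ tensor factors (partial traces of $\tilde M^C$ over some copies but not others), and the inequalities $\sum_iW_{ij}^\dagger W_{ij}\le I$ and $\opNorm{\sum_iW_{ii}}\le d^t$ address only the two extreme cases. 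Note also that the ratio $\binom{d+t-1}{t}^2/\binom{d+2t-1}{2t}\sim\binom{2t}{t}$ you invoke is \emph{exponential} in $t$, not polynomial; extracting a $\poly(t)/d$ bound requires exactly the kind of layer-by-layer cancellation the paper's $S_{2t}^{(s)}$ decomposition provides. So your reduction relocates the hard analysis to one side but does not bypass it. As a side remark: since $\mu^B_\psi-\bar\mu$ has the same form as $\nu^C_\psi$, a proof of your variance bound would apply to \emph{both} Cauchy--Schwarz factors and yield $O(t^4/d)$ for the advantage itself---stronger than the paper's $O(t^2/\sqrt d)$ in the general-entanglement regime---so the approach may well be worth completing.
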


\noindent Our result complements the works of~\cite{Harrow23,CCHL21} by showing that it is not possible to distinguish i.i.d versus identical Haar states either using the entanglement resource or using classical communication.
\par In the case when $t=1$, we show that $(d,1,\epsilon)$-simultaneous Haar indistinguishability does not hold for $\epsilon = O\left( \frac{1}{d} \right)$, which suggests that the above bound cannot be improved significantly. Perhaps surprisingly, our attack even holds in the setting when Bob and Charlie do not share any entanglement. This further indicates that for the problem of simultaneous Haar indistinguishability, the gap between the optimal success probabilities in the entangled and the unentangled cases is small. 

\paragraph{Applications.} Besides being a natural problem, simultaneous Haar indistinguishability has applications to unclonable cryptography~\cite{Wiesner83,Aar09,Zha19,BL20,AL20,CMP20,CLLZ21,CG23a}. This is an area of quantum cryptography that leverages the no-cloning principle of quantum mechanics to design cryptographic notions for tasks that are impossible to achieve classically. \\ 

\noindent \underline{\textsc{Unclonable Encryption}}: Unclonable encryption is an encryption scheme with quantum ciphertexts that are unclonable. It was first introduced by Broadbent and Lord~\cite{BL20} and is now considered a fundamental notion in unclonable cryptography. There are two security notions of unclonable encryption, namely search and indistinguishability security, studied in the literature. The search security stipulates that any cloning adversary\footnote{A cloning adversary is a tri-partite adversary $(\alice,\bob,\charlie)$. $\alice$ receives as input an unclonable state and produces a bipartite state given to $\bob$ and $\charlie$ who are not allowed to communicate. Then, in the challenge phase, the cloning adversary receives as input $({\sf ch}_{\bob},{\sf ch}_{\charlie})$ and then gives ${\sf ch}_{\bob}$ to $\bob$ and ${\sf ch}_{\charlie}$ to $\charlie$. Finally, $\bob$ and $\charlie$ output their respective answers. Refer to~\cite{AKL23} for an abstract modeling of unclonable security notions.} after receiving a ciphertext of randomly chosen message $m$ should not be able to guess $m$ except with probability negligible in $|m|$. In the challenge phase, the cloning adversary receives as input $(k,k)$, where $k$ is the decryption key. The indistinguishability security imposes a stronger guarantee that any cloning adversary after receiving encryption of $m_b$, for a randomly chosen bit $b$ and adversarially chosen message pair $(m_0,m_1)$, cannot predict $b$ except with probability negligibly close to $\frac{1}{2}$. 
\par While we have long known that search security is feasible~\cite{BL20}, establishing indistinguishability security has remained an important and intriguing open problem. Unclonable encryption satisfying indistinguishability is achievable in the quantum random oracle model~\cite{AKLLZ22,AKL23} or in the plain model (i.e., without oracles) based on new unproven conjectures~\cite{AB23}. Various generic transformations are also known that convert one-time unclonable encryption to the public-key variant~\cite{AK21} or those that convert unclonable encryption for one-bit messages to multi-bit messages~\cite{Hiroka23}. Given that unclonable-indistinguishable encryption has interesting applications to copy-protection~\cite{AK21,CG23}, it is important we completely settle its feasibility in the plain model. Perhaps embarassingly, we do not how to establish feasibility in the plain model even under strong assumptions such as indistinguishability obfuscation~\cite{BGIRSVY01,GGHRSW13}!
\par We consider the notion of unclonable encryption, where the decryption keys are allowed to be quantum keys. This only affects the challenge phase of the security experiment, where the cloning adversary now receives as input many copies\footnote{We consider a security notion where the adversary receives at most $t$ copies of the quantum decryption key and $t$ is fixed ahead of time.} of the quantum decryption key. 

We show the following: 

\begin{theorem}[Informal]
\label{thm:intro:ue}
There is an unclonable encryption scheme, with quantum decryption keys, satisfying indistinguishability security in the plain model.  
\end{theorem}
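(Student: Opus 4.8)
The plan is to build a scheme whose ciphertext is (copies of) Haar-random states arranged so that, for every fixed plaintext, the ciphertext has exactly the same distribution — so that the ciphertext can be simulated without knowing the key — while the quantum decryption key supplies precisely the correlation needed to decode, and then to reduce indistinguishability security to a simultaneous state indistinguishability statement that is controlled by the main SHI theorem. Concretely, I would use states of dimension $d=d(\secparam)$ chosen superpolynomial in $\secparam$ but with $\log d = \poly(\secparam)$, so that the scheme remains efficient; the quantum decryption key would be at most $t$ copies of an associated Haar-random state (possibly together with public classical decoding data), and decryption would be a projective measurement, or a sequence of swap tests against the key, that succeeds with overwhelming probability. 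The two features I need from the encoding are: (i) the marginal distribution of the ciphertext is independent of the plaintext, and (ii) no ``copyable'' information extracted from the ciphertext — anything a cloning $\alice$ could pass to both $\bob$ and $\charlie$ — reveals the plaintext with non-negligible bias. Property (ii) is exactly what simultaneous Haar indistinguishability will buy me.

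For the security proof, fix a cloning adversary $\abc$ and a challenge message pair, and let $\sigma_b$ be the bipartite state held by $\bob$ and $\charlie$ entering the challenge phase when the hidden bit is $b$ — that is, $\alice$'s output on an encryption of $m_b$, tensored with the copies of the quantum key given to $\bob$ and to $\charlie$. First I would note that if the adversary wins with advantage $\delta$ over $\tfrac12$, then the joint output distributions of $(\bob,\charlie)$ on $\sigma_0$ and on $\sigma_1$ differ by at least $2\delta$ in total variation distance, so it suffices to show that $\sigma_0$ and $\sigma_1$ are $\negl$-simultaneous state indistinguishable. I would do this with two hybrids: (a) replace the single quantum key that $\bob$ and $\charlie$ both receive by two independently sampled quantum keys, and (b) use property (i) to conclude that, once the keys are decoupled, neither $\bob$'s nor $\charlie$'s local state carries any information about $b$, so the probability that both output $b$ collapses to $\tfrac12$. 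The essential point is that hybrid (a) is an instance of $(d,t,\varepsilon)$-simultaneous Haar indistinguishability — the SHI theorem holds against an \emph{arbitrary} preshared entanglement, and here that entanglement is whatever $\alice$ hands to $\bob$ and $\charlie$, which property (i) makes a fixed bipartite state, while the copies of the quantum key play the role of the Haar challenge states — so the theorem bounds its cost by $\varepsilon = O(t^2/\sqrt d)$, negligible for my choice of $d$, and hybrid (b) is free. Altogether the one-bit scheme is secure up to $\tfrac12 + O(t^2/\sqrt d) = \tfrac12 + \negl$.

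Finally I would promote the one-bit scheme to arbitrary polynomial-length messages and to the exact indistinguishability definition using the known generic transformations — one-bit to many-bit unclonable encryption, a hybrid over message coordinates, and, if needed, security amplification — all of which preserve both the plain model and the use of quantum decryption keys.

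The hard part will be the first step: engineering the encoding so that (i) and (ii) hold at the same time, i.e. the ciphertext is plaintext-independent as a marginal and yet correlated with the key tightly enough for (near-)perfect decryption, while every successful cloning strategy is forced to distinguish one shared Haar state from two independent ones (this is what rules out, for instance, $\alice$ measuring the ciphertext and broadcasting the outcome). The technical heart of the argument is then verifying hybrid (b) against an \emph{arbitrary} first-stage $\alice$ — showing that decoupling the two quantum keys really does erase every trace of $b$ from each local view, no matter how $\alice$ processed the ciphertext, and reconciling this with the correlation decryption requires between ciphertext and key — and carrying the $t^2/\sqrt d$ loss through the analysis, which is precisely where the a priori bound $t$ on the number of key copies is used.
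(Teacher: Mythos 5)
Your plan captures the opening move — use simultaneous Haar indistinguishability to replace the one shared quantum key with two independently drawn keys — and this is indeed how the paper starts its hybrid argument. But your hybrid~(b) is where the proposal breaks: you claim that once the keys are decoupled, ``neither $\bob$'s nor $\charlie$'s local state carries any information about $b$, so the probability that both output $b$ collapses to $\tfrac12$.'' Plaintext-independence of each \emph{local marginal} does not give you this. Bob and Charlie still hold correlated (indeed entangled) residue of $\alice$'s processing of the ciphertext, and that ciphertext encodes the plaintext. Concretely, in the paper's construction the ciphertext is $(\enc_\ue(k,x),\,\widetilde b\oplus m)$ and the quantum key carries $\sum_y\alpha_y\ket{y}\ket{\inner{y}{x}\oplus\widetilde b}$. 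After decoupling, Bob has (a share of) something that encodes $x$ via $\enc_\ue(k,x)$, plus $\widetilde b\oplus m$, plus $(y_\bob,\inner{y_\bob}{x}\oplus\widetilde b)$ — so Bob locally knows $m\oplus\inner{y_\bob}{x}$, and Charlie likewise knows $m\oplus\inner{y_\charlie}{x}$. Their joint success is exactly a simultaneous Goldreich--Levin decoding problem, and showing it has success probability close to $1/2$ is \emph{not} free: the paper needs a new ``correlated simultaneous quantum Goldreich--Levin'' lemma (\Cref{lem:gl_cor}), then a reduction to the weak (search) unclonable security of an underlying UE scheme, carried through several further hybrids including a re-randomization of the distinct-type samples and a guessing step that consumes the $t$-copy budget. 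That machinery is the technical heart of the application and is entirely absent from your argument.

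Two further, smaller points. First, your construction sketch — ``ciphertext is (copies of) Haar-random states \ldots decryption would be \ldots a sequence of swap tests against the key'' — is not the paper's construction, and if taken literally it is insecure: $\alice$ can split the $t$-copy ciphertext between $\bob$ and $\charlie$, and each can then run swap tests against the shared key and recover $m$ with high probability. The paper instead uses the Haar state only to generate Goldreich--Levin samples against a hidden string $x$ that is itself protected by a weakly unclonable encryption; the unclonability of the ciphertext, not of the key, is what prevents both parties from learning $x$. Second, your final step (``promote the one-bit scheme to arbitrary polynomial-length messages \ldots using the known generic transformations'') over-reaches: the paper's result is one-bit, one-time, and it explicitly flags as an open problem whether the known transformations of \cite{AK21,Hiroka23} carry over to quantum decryption keys. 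So the correct statement of what is proved is one-bit one-time unclonable-indistinguishable encryption with quantum decryption keys, not the full-length version you aim for.
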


\noindent Ours is the first work to show that unclonable-indistinguishable encryption exists in the plain model albeit with quantum decryption keys. We note that this relaxation (i.e. decryption keys being quantum states) does not have a significant effect on the essence of unclonable encryption. A potential approach to transform this scheme into another scheme where the decryption keys are binary strings is to generate the decryption key to be an obfuscation of the setup algorithm that produces decryption keys. It is an intriguing problem to formalize the requirements of the underlying quantum obfuscation scheme. At the bare minimum, we require that the quantum obfuscation scheme satisfies the property that the obfuscated program can be represented as a binary string. While achieving quantum obfuscation has been a difficult open problem~\cite{BK21,BM22,BKNY23}, obfuscating special classes of quantum algorithms (those that capture the setup algorithm of~\Cref{thm:intro:ue}) could be relatively more tractable. 

\par Our scheme supports one-bit messages and is one-time secure. It is an interesting future direction to extend the works of~\cite{AK21} and~\cite{Hiroka23} to generically achieve  unclonable encryption with quantum decryption keys in the public-key setting and for longer messages. \\
%\par While the previous works considered unclonable encryption with classical decryption keys, we instead allow the decryption keys to be quantum states. We note that this relaxation neither affects the essence of unclonable encryption nor does it make the problem of achieving indistinguishability much simpler. \hnote{this sentence appears above two lines up...} \\
%\noindent This improves upon the previous works who show the existence of unclonable encryption in oracle models~\cite{AKLLZ22,AKL23} or based on new conjectures~\cite{AB23}. We note that while the previous works considered classical decryption keys, allowing the decryption keys to be quantum doesn't make the problem much simpler. Indeed, as will be evident in this work, we need to do quite a bit of work to prove the above theorem. \\

\noindent \underline{\textsc{Single-Decryptor Encryption}}: Single-decryptor encryption~\cite{GZ20} is a sister notion of unclonable encryption, where instead of requiring the ciphertexts to be unclonable, we instead require the decryption keys to be unclonable. Constructions of single-decryptor encryption in different settings are known from a variety of assumptions~\cite{GZ20,CLLZ21,AKL23,KY23}. There are two important security notions considered in the literature. In the {\em independent} setting, in the challenge phase, the cloning adversary gets two independently generated ciphertexts while in the {\em identical} setting, it gets copies of the same ciphertext. All the known constructions of single-decryptor encryption~\cite{CLLZ21,AKL23,KY23} are in the independent setting and specifically, there are no known constructions in the identical setting. This should not be surprising in light of~\cite{GZ20} who showed the equivalence between single-decryptor encryption with identical ciphertexts and unclonable encryption, which suggests the difficulty in achieving the identical ciphertext setting.
\par We prove the following. 

\begin{theorem}[Informal]
There is a single-decryptor encryption scheme, with quantum ciphertexts, satisfying identical indistinguishability security.  
\end{theorem}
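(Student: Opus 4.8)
The plan is to obtain the single-decryptor scheme from our unclonable encryption scheme of \Cref{thm:intro:ue} by reversing the roles of the ciphertext and the decryption key, in the spirit of the equivalence of Gavinsky and Zhandry~\cite{GZ20}. Recall that in unclonable encryption the \emph{ciphertext} is the unclonable quantum object while the \emph{decryption key} is what is handed to both $\bob$ and $\charlie$ in the challenge phase (each receiving $t$ copies of the \emph{same} key); in single-decryptor encryption with identical ciphertexts the \emph{decryption key} is the unclonable object while the \emph{ciphertext} is what is handed identically to $\bob$ and $\charlie$. These two pictures line up once we swap names: take the single-decryptor secret key $\sk$ to be a fresh unclonable-encryption ciphertext, let the single-decryptor ciphertext consist of (copies of) the corresponding unclonable-encryption decryption key, and let the bit encrypted by the single-decryptor scheme play the role of the bit that the unclonable-encryption cloning adversary must recover. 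Because the decryption keys of the scheme in \Cref{thm:intro:ue} are quantum states, the resulting single-decryptor ciphertexts are quantum states, which is exactly the regime claimed; decryption just runs unclonable-encryption decryption (a swap-test / gentle-measurement procedure), so correctness is inherited.

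For security I would argue by contradiction. Suppose $\abc$ breaks identical-indistinguishability of the single-decryptor scheme: $\alice$ splits $\sk$ into a bipartite state, $\bob$ and $\charlie$ each receive the same challenge ciphertext $\enc(m_b)$, and both output $b$ with advantage $\varepsilon$. Unwinding the name-swap, $\abc$ is precisely a cloning adversary against the indistinguishability security of the underlying unclonable encryption scheme: $\alice$ receives the unclonable-encryption ciphertext and produces a bipartite state, and in the challenge phase $\bob$ and $\charlie$ each receive $t$ copies of the decryption key. Since \Cref{thm:intro:ue} --- and ultimately simultaneous Haar indistinguishability --- bounds the advantage of any such cloning adversary by a quantity that is negligible in the dimension, we conclude $\varepsilon$ is negligible. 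The one point to verify here is that the ``identical ciphertext'' challenge of the single-decryptor game maps onto the ``same decryption key handed to both parties'' challenge of the unclonable-encryption game; this holds essentially by construction, since both operations amount to giving $\bob$ and $\charlie$ copies of one and the same quantum object.

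The main obstacle is making the name-swap \emph{syntactically} sound: a ciphertext must depend on the encrypted message, whereas the decryption key of an unclonable-encryption scheme is generated before any message is chosen. To get around this I would not use a bare copy of the unclonable-encryption decryption key as the single-decryptor ciphertext, but rather --- following~\cite{GZ20} --- compose the unclonable-encryption scheme with an ordinary encryption layer (or, in the plain-model one-time setting matching \Cref{thm:intro:ue}, a one-time-pad / controlled-bit-flip trick) so that the single-decryptor ciphertext carries the message while the unclonable component remains message-independent, with the unclonable-encryption decryption key serving only as the access token inside that ciphertext. A second, more technical obstacle --- inherited from the proof of \Cref{thm:intro:ue} --- is that the bipartite state $\alice$ produces from $\sk$ may be correlated with the Haar randomness that appears later in the challenge ciphertext; the reduction must therefore be arranged so that the only advantage $\bob$ and $\charlie$ can extract from this correlation is exactly the advantage permitted by the \emph{entangled} simultaneous Haar indistinguishability bound, which is handled by the same hybrid argument that underlies \Cref{thm:intro:ue}. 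An alternative route, which I would pursue in parallel, is to build the single-decryptor scheme directly from simultaneous Haar indistinguishability by mirroring the unclonable-encryption construction with the ciphertext and key interchanged, reducing its identical-indistinguishability security straight to the main technical theorem; this avoids relying on the exact form of the~\cite{GZ20} transformation but requires re-proving the analogous correctness and security statements from scratch.
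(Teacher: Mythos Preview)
Your alternative route---building the single-decryptor scheme directly from simultaneous Haar indistinguishability---is precisely what the paper does, and is in fact its primary approach rather than a fallback. The paper gives the SDE construction and direct security proof first (\Cref{sec:sde}, \Cref{thm:sde_security}) and only then derives unclonable encryption from it (\Cref{sec:ue}) via the very name-swap you describe, but run in the reverse direction: the UE security proof reduces to the SDE experiment (the proof of \Cref{thm:ue_qkeys_sec} observes that its Hybrid~1 is a relabeling of the SDE game). So your primary plan of taking \Cref{thm:intro:ue} as a black box and swapping would be circular within this paper's dependency graph, though the GZ20-style correspondence you invoke is exactly right, and your handling of the syntactic obstacle (inserting a one-time-pad bit so the ciphertext can carry the message) matches what the paper does with~$\widetilde{b}$.

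The one substantive omission in your sketch of the direct route is the \emph{correlated simultaneous Goldreich-Levin} step (\Cref{lem:gl_cor}), which is a genuinely new ingredient. After applying SHI to replace the identical Haar state by two independent ones, Bob and Charlie effectively hold $(r, \inner{r}{x} \oplus m)$ and $(r', \inner{r'}{x} \oplus m)$ for independent $r, r'$ but a \emph{shared} pad bit $m$; their hint bits are therefore correlated via $b \oplus b' = \inner{r}{x} \oplus \inner{r'}{x}$, and the existing independent-challenge simultaneous Goldreich-Levin of~\cite{AKL23} does not directly apply. The paper shows, via a no-signalling argument, that such a correlated predictor can be converted into one succeeding under a relaxed XOR criterion with \emph{uncorrelated} hint bits, after which the~\cite{AKL23} extractor suffices. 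You defer this to ``the same hybrid argument that underlies \Cref{thm:intro:ue}'', which is acceptable at the proposal level, but be aware that this lemma is the crux of the reduction from the post-SHI hybrid down to the weak unclonable security of the base scheme.
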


\noindent Ours is the first work to demonstrate the feasibility of single-decryptor encryption in the identical-challenge setting, albeit with quantum ciphertexts.
\par In addition to its applications to unclonable cryptography, simultaneous Haar indistinguishability can be used to construct leakage-resilient secret sharing. \\

\noindent \underline{\textsc{Leakage-Resilient Secret Sharing}}: Leakage-resilient cryptography~\cite{KR19} is an area of cryptography that is concerned with the goal of building cryptographic primitives that are resilient to side-channel attacks. We are interested in designing secret sharing schemes that are leakage resilient. In a leakage-resilient secret sharing scheme, a leakage function is applied on each share and we require the guarantee that all the leakages put together are not sufficient enough to compromise the security of the secret sharing scheme. 
Leakage-resilient secret sharing is a well studied topic~\cite{goyal2018non,srinivasan2019leakage,aggarwal2019stronger,CGGKLMZ20,benhamouda2021local} with applications to leakage-resilient secure multi-party computation~\cite{benhamouda2021local}. 
\par We consider a notion of leakage-resilient secret sharing, where we allow the parties holding the shares to be entangled with each other. We now require the guarantee that security should still hold even if each share is individually leaked. Moreover, we consider a relaxed requirement where the shares are allowed to be quantum. Just like the works in the classical setting, we consider the bounded leakage model. That is, if the number of qubits of each share is $m$ then we allow for some $\lfloor \frac{c}{n} \rfloor$ fraction of bits of leakage from each share, where $c$ is some constant and $n$ is the number of parties\footnote{We set $m \gg n$.}. 
\par We show the following: 

\begin{theorem}[Informal]
There is a 2-out-$n$ leakage-resilient secret sharing scheme with the following properties: (a) the shares are quantum, (b) the number of bits of leakage on each share is $\lfloor \frac{c \cdot m}{n} \rfloor$, where $c$ is some constant and the size of each share is $m$ qubits, and (c) the parties can share arbitrary amount of entanglement.  
\end{theorem}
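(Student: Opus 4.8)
\emph{The plan} is to realize the scheme directly from Haar random states, reconstruct with the symmetric-subspace test, and reduce leakage resilience to the multi-output version of our main theorem via a hybrid over the $n$ parties.

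\emph{Construction.} Fix a dimension $d = d(\secparam)$ (say $d \ge 2^{\secparam}$) and a copy parameter $t$, so each share has $m = t\lceil \log d\rceil$ qubits. To share $b \in \bit$: if $b = 0$, sample one Haar random $\ket{\psi} \in \mathbb{C}^d$ and give party $i$ the register $\ket{\psi}^{\otimes t}$ for all $i$; if $b = 1$, sample i.i.d.\ Haar random $\ket{\psi_1}, \dots, \ket{\psi_n}$ and give party $i$ the register $\ket{\psi_i}^{\otimes t}$. Reconstruction from any two parties $i \ne j$: the $2t$ registers they jointly hold are measured with $\bracC{\Pi_{\mathsf{Sym}}, \identity - \Pi_{\mathsf{Sym}}}$, the projection onto the symmetric subspace of $(\mathbb{C}^d)^{\otimes 2t}$ (equivalently, $t$ pairwise SWAP tests); output $0$ on the symmetric outcome, $1$ otherwise. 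If $b = 0$ the joint register is $\ket{\psi}^{\otimes 2t}$, which is symmetric, so reconstruction outputs $0$ with certainty; if $b = 1$ the two states are independent Haar states and the symmetric outcome occurs with probability $\binom{d+2t-1}{2t}/\binom{d+t-1}{t}^2 = \Theta\brac{4^{-t}/\sqrt{t}}$, up to an extra $e^{-\Omega(d)}$ term accounting for the event that their overlap is atypically large. Thus the correctness error is $O\brac{4^{-t}/\sqrt{t}} + e^{-\Omega(d)}$, which is negligible once $t = \omega(\log\secparam)$.

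\emph{Privacy and leakage resilience.} Any single share is $\ket{\phi}^{\otimes t}$ for a Haar random $\ket{\phi}$ in \emph{both} $b = 0$ and $b = 1$, so the single-party marginals coincide and perfect privacy against one party holds. For leakage resilience the adversary fixes leakage maps $\ell_1, \dots, \ell_n$ of output length $\lfloor cm/n \rfloor$ each, applies $\ell_i$ to (party $i$'s share together with its portion of an arbitrary globally shared entangled state), sees all $n$ outputs, and guesses $b$. Define hybrids $H_0, \dots, H_n$: in $H_k$, parties $1, \dots, k$ receive $t$ copies each of one common Haar state while parties $k+1, \dots, n$ receive $t$ copies each of fresh i.i.d.\ Haar states, so $H_n$ is the $b=0$ experiment and $H_0 = H_1$ is the $b=1$ experiment. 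To bound the gap between $H_{k-1}$ and $H_k$, invoke simultaneous Haar indistinguishability with Charlie being party $k$ (holding $t$ copies) and Bob being the coalition $\range{n}\setminus\bracC{k}$: Bob holds $\le (k-1)t$ copies of the common state plus registers for parties indexed above $k$ that are independent of everything and hence sampleable by Bob himself; the global entangled state is entanglement jointly held by Bob and Charlie; Bob's output is the concatenation $(\ell_i)_{i \ne k}$ of $\le (n-1)\lfloor cm/n\rfloor \le cm$ bits and Charlie's is $\ell_k$ of $\lfloor cm/n\rfloor$ bits. Under $H_k$, Bob and Charlie receive copies of the \emph{same} Haar state; under $H_{k-1}$, of \emph{independent} Haar states; so the hybrid gap is exactly a simultaneous-Haar-distinguishing advantage, bounded by the multi-output version of our main theorem with dimension $d$ and copy parameter $nt$ (Bob and Charlie discarding the unused registers, Bob sampling on his own the registers for parties $> k$) and total output $\le 2cm$ bits. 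Summing the $n$ hybrid steps bounds the overall distinguishing advantage.

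\emph{Main obstacle.} The delicate step is the last one: the multi-output SSI bound carries a factor that grows with the total number of revealed bits — roughly $2^{\Theta(cm)}\cdot \poly(nt)/\sqrt{d}$ for the reduction above — so to keep the advantage negligible one needs $cm \ll \tfrac12\log d$, and since $m = \Theta(t\log d)$ this pins down $c$ as a small constant chosen relative to $t$. Squeezing out a genuinely universal constant $c$ while keeping the correctness error negligible is the quantitative crux; it hinges on how sharp a multi-output simultaneous Haar indistinguishability bound one can prove, together with careful bookkeeping in choosing $t$ and $d$ as functions of $\secparam$ and, if correctness is amplified by parallel repetition, an extra inner hybrid over the repeated instances that again exploits that each instance's share is distributed identically under the two secrets.
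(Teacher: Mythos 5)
Your construction is the same as the paper's (identical vs.\ independent Haar states, $t$ copies per party, SWAP/symmetric-subspace reconstruction), and the reduction to multi-output simultaneous Haar indistinguishability is also the core of the paper's argument. The route you take to the many-party reduction is different, though: you run a \emph{linear} hybrid $H_0, \dots, H_n$ over parties, reducing each step $H_{k-1} \leftrightarrow H_k$ directly to a two-party SSI instance with Bob being the coalition $[n]\setminus\{k\}$ (holding $\le (k-1)t$ copies) and Charlie being party $k$; the paper instead proves a standalone ``SSI against $q$ parties'' lemma (\Cref{lem:ssi_many_party}) by a binary-tree/doubling induction over $O(\log n)$ levels, and then applies the combined \Cref{cor:ssi_many_party}. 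Quantitatively the two give the same bound up to constants: your sum $\sum_{k=2}^n ((k-1)t)^2/\sqrt{d}$ and the paper's $2q\eps$ loss with $qt$-copy SSI both land at $O(n^3 t^2/\sqrt{d})$, then both pick up the $2^{\Theta(n\ell)}$ union-bound factor for the multi-bit outputs. The paper's tree hybrid is structured as a reusable modular lemma; your direct hybrid is a bit simpler and makes the asymmetry between the coalition and the single party explicit. You also diagnose the same quantitative bottleneck the paper flags as an open question: the $2^{\Theta(n\ell)}$ union-bound factor forces $n\ell = O(\log d)$, which is exactly why the construction picks $d = 2^{4n\ell + \secparam}$.

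Two small slips, neither of which affects the result: (i) the projection onto the full symmetric subspace of $(\C^d)^{\otimes 2t}$ is \emph{not} equivalent to $t$ pairwise SWAP tests (the latter project onto the much larger intersection of $t$ two-register symmetric subspaces); either test works — the paper in fact uses a majority-of-SWAP-tests rule — but ``equivalently'' overstates it. (ii) The acceptance probability under independent Haar states is $\approx 1/\binom{2t}{t} = \Theta(4^{-t}\sqrt{t})$, not $\Theta(4^{-t}/\sqrt{t})$; still negligible for $t = \omega(\log\secparam)$, as you need.
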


\noindent In fact, our construction satisfies a stronger security guarantee where the adversary can receive $p(n)$ number of copies of its share, where $p(\cdot)$ is an arbitrary polynomial. 
\par A recent interesting work by~\cite{cakan2023unbounded} also considers leakage-resilient secret sharing schemes with quantum shares. However, there are notable differences. Firstly, they consider the setting when there can be unbounded amount of classical bits of leakage from each quantum share whereas we consider bounded leakage. On the other hand, we allow the parties to be entangled whereas they mainly focus on the LOCC setting. In fact, they show that it is not possible to achieve unbounded amount of leakage in the shared entanglement setting even with two parties; this is the reason we focus on the setting of bounded leakage. However, there seems to be a large gap between the amount of leakage leveraged in the impossibility result in~\cite{cakan2023unbounded}\footnote{Their impossibility result seems to require $O(2^m)$ bits of total leakage from all the parties.} and the leakage that we tolerate in our feasibility result. It is an interesting problem to close the gap. Finally, we allow each party to get arbitrary polynomially many copies of its share whereas~\cite{cakan2023unbounded} doesn't satisfy this guarantee. 

\subsection{Technical Overview}

\subsubsection{Simultaneous Haar Indistinguishability}
We formally define the notion of simultaneous state indistinguishability (SSI), of which simultaneous Haar indistinguishability is a special case. We consider a non-local distinguisher $(\text{Bob}, \text{Charlie}, \rho)$ where $\text{Bob}$ and $\text{Charlie}$ are spatially separated quantum parties who share an entangled state $\rho$. Given two distributions $\distr_1, \distr_2$ over bipartite states, we can write the distinguishing advantage of $(\text{Bob}, \text{Charlie}, \rho)$ as $\tvdist{x}{x'}$, where 
$x = (x_B,x_C)$ is the random variable corresponding to the output of Bob and Charlie when they get as input $\rho \otimes \psi$ where $\ket{\psi}$ is sampled from $\distr_1$. Here, $x_B$ refers to Bob's output and $x_C$ refers to Charlie's output. Similarly, $x' = (x_B',x_C')$ is the random variable corresponding to Bob and Charlie's outputs when they receive $\rho \otimes \psi'$ where $\ket{\psi'}$ is sampled from $\distr_2$. 

For fixed $\distr_1, \distr_2$ we can ask the question: \emph{What is the maximal distinguishing advantage if $\text{Bob}$ and $\text{Charlie}$ are restricted to output $n$ classical bits?}.
We first limit our attention to a special case of this problem such that $n=1$ as well as: \begin{enumerate}
    \item $\distr_1$ outputs two identical Haar random states $\ket{\psi} \ot \ket{\psi}$.
    \item $\distr_2$ outputs two independent Haar random states $\ket{\psi_B} \ot \ket{\psi_C}$.
\end{enumerate}
In both $\distr_1$ and $\distr_2$, the first half of the state will be given to Bob and the second half will be given to Charlie. Note that if we restrict our attention to Bob (or Charlie) alone, then the two cases are perfectly indistinguishable. Therefore, Bob and Charlie need to work collectively in order to achieve a non-trivial distinguishing advantage.
\par For now assume that the pre-shared entanglement consists of some arbitrary number ($r$) of EPR pairs, denoted by $\ket{\epr}^{\ot r}$. Let $M$ and $N$ be the measurements (formally POVM elements) applied by Bob and Charlie, respectively. It is without loss of generality to assume that $M$ and $N$ are projective. In this case, we can write the distinguishing advantage of Bob and Charlie as \begin{align*}
    \tr \bracS{ \brac{M \ot N} \bracC{ \epr^{\ot r} \ot \brac{  \E_{\ket{\psi}} \psi \ot \psi - \E_{\psi_B, \psi_C} \psi_B \ot \psi_C } } }.
\end{align*}
The second expectation equals the maximally mixed state $(\id \ot \id)/d^2$, where $d$ is the dimension of the Haar random states. The first expectation equals the maximally mixed state over the symmetric subspace\footnote{See~\cite{Harrow13} for an introduction to the symmetric subspace.}, which is $O(1/d)$ close to $(\id \ot \id + F)/d^2$, where $F$ is the operator that swaps two registers. Hence, we can approximate the advantage as 
\begin{align*}
    \frac{1}{d^2}\tr \bracS{ \brac{M \ot N} \bracC{ \epr^{\ot r} \ot F } }.
\end{align*}
We examine this expression in terms of tensor network diagrams\footnote{See \cite{Mele24} for an introduction to tensor network diagrams.} \cite{Penrose71}. Up to normalization, the effect of the EPR pairs (followed by trace) is to connect the entangled registers of $M$ and $N$ in reverse (i.e. after partially transposing one of the projectors), whereas the effect of $F$ (followed by trace) is to connect the registers of $M$ and $N$ containing the Haar states. Overall, we observe that the expression above equals \begin{align}
    \frac{1}{2^r d^2} \tr \bracS{ M \cdot N^{\top_P} }, \label{eq:tensor1}
\end{align} 
where $\top_P$ denotes the partial transpose operation. Notice that this is the Hilber-Schmidt inner-product of $M$ and $N^{\top_P}$, hence by Cauchy-Schwarz we can bound it by \begin{align*}
    \frac{1}{2^r d^2} \ellpnorm{M}{2} \cdot \ellpnorm{N^{\top_P}}{2} = \frac{1}{2^r d^2} \ellpnorm{M}{2} \cdot \ellpnorm{N}{2}\le \frac{1}{2^r d^2} \sqrt{2^r d} \cdot \sqrt{2^r d} = \frac{1}{d},
\end{align*}
where we used the fact that $\ellpnorm{M}{2}^2$ equals the rank of $M$. Together with the previous approximation error we had, this gives us a bound of $O(1/d)$. 
\par This bound is in fact tight, which can be seen by a simple attack where Bob and Charlie measure and output the first qubit of their input state. Moreover, this attack is unentangled, leading to the interesting conclusion that entangled attacks are no more powerful than unentangled attacks.

\paragraph{The Case of Many Copies.} Now we generalize our argument to the case when Bob and Charlie each get $t$ copies of their input. Similar to the $t=1$ case, we can write the projection onto the symmetric subspace over $t$ registers as \begin{align*}
    \Pi^t = \frac{1}{t!} \sum_{\sigma \in S_t} P_\sigma,
\end{align*}
where $P_\sigma$ is a register-wise permutation operator. Using this identity for Bob's and Charlie's registers alike, the independent case yields a sum of terms of the form $P_{\sigma_B} \ot P_{\sigma_C}$. On the other hand, the identical case will give us a sum of $P_\sigma$ for $\sigma \in S_{2t}$. We can match the coefficients up to $O(t^2/d)$ error, and thus we can approximate the advantage  as \begin{align*}
    \frac{1}{d^{2t}} \Tr \bracS{ (M \ot N) \, \Big \{ \epr^{\ot r} \ot \sum_{\sigma \in S_{2t}^*} P_\sigma  \Big \} },
\end{align*}
where $S_{2t}^*$ is the set of permutations over $2t$ registers that cannot be expressed as a product of two permutations over $t$ registers. A natural idea would be to bound this expression separately for each $P_\sigma$, but it would yield a factor of $\poly(t!)$ which blows up very fast. Our idea instead is to group permutations based on how far they are from a product permutation. In more detail, we define $S_{2t}^{(s)}$ as the set of permutations $\sigma$ which obeys the identity \begin{align*}
    P_\sigma = \brac{P_{\sigma_1} \ot P_{\sigma_2}} P_{\sigma_s} \brac{P_{\sigma_3}\ot P_{\sigma_4}}
\end{align*}
for some permutations $\sigma_1,\sigma_2,\sigma_3,\sigma_4$ over $t$ registers, where $\sigma_s$ is a fixed permutation that swaps $t$ of Bob's registers with $t$ of Charlie's registers. Indeed, $S_{2t}^{(s)}$ is the set of permutations $\sigma$ that make $s$ \emph{swaps} across Bob's and Charlie's registers. Moreover, by a combinatorial argument, we can compute the average of $P_\sigma$ by averaging the identity above over $\sigma_1,\sigma_2,\sigma_3,\sigma_4$, so that \begin{align*}
    \sum_{\sigma \in S_{2t}^{(s)}} P_\sigma = C_s \brac{\Pi^t_B \ot \Pi^t_C} P_{\sigma_s} \brac{\Pi^t_B \ot \Pi^t_C},
\end{align*}
where $C_s$ is a constant that depends on $s,t$, and $\Pi^t_B,\Pi^t_C$ are projections onto the symmetric subspace over Bob's and Charlie's $t$ registers, respectively. Using the generalization of \cref{eq:tensor1} we show that \begin{align*}
    \frac{1}{d^{2t}} \Tr \bracS{ (M \ot N) \, \Big \{ \epr^{\ot r} \ot \sum_{\sigma \in S_{2t}^{(s)}} P_\sigma  \Big \} } \le \binom{t}{s}^2 s!d^{-s},
\end{align*}
and summing this over $s=1,2,\dots,t$ gives us a bound of $O(t^2/d)$. Keep in mind that $s=0$ is excluded from the sum because it corresponds to product permutations.

\paragraph{The Case of General Entanglement.} Now suppose the entangled state $\ket{\Omega}$ shared between Bob and Charlie is arbitrary. Intuitively, we don't expect this relaxation to help the adversary a lot since the number of EPR pairs above was unbounded, yet it requires a proof to show this. Recall that $\ket{\Omega}$ can be written as \begin{align*}
    \ket{\Omega} =\sum_{i} \sqrt{\lambda_i} \ket{u_i}\ket{v_i}
\end{align*}
for some choice of bases $(u_i), (v_i)$ for the registers of Bob and Charlie, known as the \emph{Schmidt decomposition}. An equivalent way to write this is\footnote{Here we are implicitly assuming that the dimensions of Bob and Charlie's registers both equal the same power of 2. This is merely for convenience and does not affect the analysis.} \begin{align*}
    \ket{\Omega} = (\sigma^{1/2} \ot V) \ket{\epr^{\ot r}}
\end{align*}
for some density matrix $\sigma$, unitary $V$, and integer $r$. The unitary $V$ can be safely ignored by changing the basis of Charlie's projection $N$. Using the cyclicity of trace we can absorb $\sigma^{1/2}$ in $M$ and get a similar expression for the distinguishing advantage as the maximally entangled state, where $M$ is replaced with $(\sigma^{1/2}\ot \id) M (\sigma^{1/2}\ot \id)/2^r$, with $d'$ being the dimension of Bob's share of the entangled state. Following the same analysis as the EPR case results in a bound that scales with $\sqrt{d'}$. In order to get a bound that does not depend on the amount of entanglement we perform a more refined and involved analysis coupled with a more careful application of Cauchy-Schwarz, which yields a bound of $O(t^2/\sqrt{d})$ on the distinguishing advantage. An interesting open question is whether the gap between the EPR and non-EPR cases is inherent.

\subsubsection{Applications}

\paragraph{Unclonable Encryption.}
The search (weak) security of unclonable encryption (UE) was known since its formal introduction \cite{BL20}, yet strong (CPA-style/indistinguishability) security has been an open problem. There are fundamental reasons why this problem has been difficult, including the following: \begin{enumerate}
    \item Because the adversary learns the secret key in the challenge phase of the unclonable security experiment, it is hard to leverage traditional cryptographic tools -- wherein revealing the secret key tantamounts to the compromise of security -- in the construction of unclonable encryption.
    \item There is a lack of straightforward equivalence between unpredictability and indistinguishability in the unclonability setting. The former is used to define CPA-style security and is not transitive, hence unfriendly to hybrid arguments.
    \item Due to the simultaneous nature of the security experiment, extraction techniques that work for a single party often fail against two or more entangled parties.
\end{enumerate}
To elaborate on the third bullet further, one can hope to deploy classical techniques for search-to-decision reductions in this setting. For instance, it has been shown that random oracles can be used to go from weak security to strong security in UE \cite{AKLLZ22,AKL23}. In the plain model, a common classical tool is the Goldreich-Levin extraction technique \cite{GL89}, using which one can try the following folklore construction of UE: \begin{enumerate}
    \item The key consists of $(k,r)$, where $k$ is a key for a weakly secure UE scheme and $r \in \bit^n$ is a random string.
    \item To encrypt a single-bit message $m \in \bit$, sample a random message $x\in \bit^n$, then output encryption of $x$ using $k$, as well as $\inner{r}{x} \oplus m$.
    \item To decrypt, first use $k$ to recover $x$ using the decryption procedure of the weakly secure UE scheme and then recover $m$.
\end{enumerate}
To prove unclonable security of this construction, one needs the \emph{identical-challenge} version of simultaneous Goldreich-Levin, where Bob and Charlie will get the same $r$ as challenge. This is unknown even though the independent-challenge version is known \cite{KT22,AKL23}.
\par Our main insight is to make the string $r$ in the key come in superposition, i.e. from a quantum state $\sum \alpha_r \ket{r}$. Intuitively, if Bob and Charlie were to measure $r$ in the computational basis, then they would effectively receive independent values of $r$, meaning that we can hope to use independent-challenge Goldreich-Levin. Accordingly, we look for a state $\sum \alpha_r \ket{r}$ such that (1) Bob and Charlie cannot simultaneously distinguish whether or not this state has been measured in the computational basis, and (2) the computational basis measurement results in a uniform value of $r$. 
\par Perhaps the most natural candidate for this task is to pick a Haar random state. This allows us to apply our simultaneous Haar indistinguishability result. Nonetheless, there still remain some technical challenges in the application of this concept. To begin with, we need to adapt the construction slightly to incorporate the newly acquired quantumness of $r$. Our solution is as follows: 
 
\begin{enumerate}
    \item The key is partially quantum: \begin{itemize}
        \item The (classical) \emph{encryption key} consists of $(k,x,\widetilde{b})$, where $x$ is a random message and $\widetilde{b} \in \bit$ is a single-bit one-time-pad.
        \item The (quantum) \emph{decryption key} consists of $k$ and a state $\sum \alpha_r\ket{r}\ket{\inner{r}{x} \oplus \widetilde{b}}$, where $\sum \alpha_r\ket{r}$ is a Haar random state.
    \end{itemize} 
    \item To encrypt a single-bit message $m \in \bit$, output encryption of $x$ using $k$, as well as $\widetilde{b} \oplus m$.
    \item To decrypt, first use $k$ to recover $x$, then coherently recover $\widetilde{b}$ followed by $m$.
\end{enumerate}
Using the simultaneous Haar indistinguishability, we can show that our construction is secure via the hybrid method. Note that we can do this because our notion of simultaneous-state-indistinguishability is strong enough that it is amenable to the use of hybrids.
\par In more detail, we reach an indistinguishable hybrid experiment where the  Bob and Charlie get keys which use independently generated Haar random states. Equivalently, Bob gets $(r, \inner{r}{x} \oplus \widetilde{b})$ and Charlie gets $(r', \inner{r'}{x}\oplus \widetilde{b})$ for independent $r,r'$.
\par Next, we move to a hybrid which is the weak security (i.e. search security) experiment of the underlying unclonable encryption scheme, so that Bob and Charlie need to output $x$ each given the key $k$. Unfortunately, even though $r$ and $r'$ are independent in the previous hybrid, independent-challenge simultaneous Goldreich-Levin \cite{KT22,AKL23} is insufficient due to the correlation between the bits $b=\inner{r}{x} \oplus \widetilde{b}$ and $b'=\inner{r'}{x} \oplus \widetilde{b}$, namely $b \oplus b' = \inner{r}{x} \oplus \inner{r'}{x}$. To overcome this issue, we prove exactly what we need, which we call \emph{correlated simultaneous quantum Goldreich-Levin\footnote{As a side note, this lemma resolves an open question in \cite{KT22}, implying that their construction achieves a more desirable notion of security.}} \footnote{Previously, the work of~\cite{AB23} explicitly stated the correlated Goldreich-Levin problem over large finite fields as a conjecture.} (\Cref{lem:gl_cor}), which can be summarized as follows:
\begin{quote}
    \textsc{Correlated Goldreich-Levin:} Suppose that Bob is given input $(r,b)$ and Charlie is given input $(r',b')$, where $r,r'$ are independent strings and $b,b'$ are uniform bits satisfying the correlation $b \oplus b' = \inner{r}{x} \oplus \inner{r'}{x}$. If (Bob, Charlie) can output $(\inner{r}{x}, \inner{r'}{x})$ with probability $1/2 + 1/\poly$, then there is an extractor $(\textrm{ExtBob},\textrm{ExtCharlie})$ that extracts $(x,x)$ from (Bob, Charlie) with probability $1/\poly$.
\end{quote}
To prove this lemma we first tackle the correlation between $b$ and $b'$. Consider $\bobx$ who takes as input $r$, samples $b$ himself uniformly, and runs Bob on input $(r,b)$ to obtain $b_B$; similarly consider $\charliex$ who takes $r'$ as input, samples $b'$ and runs Charlie on input $(r',b')$ to obtain $b_C$. Now that the input bits $b,b'$ are uncorrelated, ($\bobx, \charliex$), who output $(b_B, b_C)$, are expected to have worse success probability than (Bob, Charlie). However, we can in turn relax the success criterion for ($\bobx, \charliex$) to merely output bits $(b_B,b_C)$ that satisfy $b_B \oplus b_C = \inner{r}{x} \oplus \inner{r'}{x}$ in order to counteract this lack of correlation. In other words, now ($\bobx, \charliex$) are additionally allowed to be \emph{both incorrect}. Indeed, we show that the success probability of ($\bobx, \charliex$) in this case is at least that of (Bob, Charlie), i.e. $1/2 + 1/\poly$. To show this fact, we define $E$ as the event that $b \oplus b' = \inner{r}{x} \oplus \inner{r'}{x}$. Conditioned on $E$, Bob and Charlie will output $(\inner{r}{x}, \inner{r'}{x})$ with probability $1/2 + 1/\poly$ by our assumption. In addition, the event $E$ is independent of Bob's (or Charlie's) marginal output due to the fact that the players' correlation satisfies no-signalling. To see this, notice that the bits $b$ and $b'$ can each independently control the event $E$. We utilize this important observation to show the desired result.
Another way to interpret this reduction is as follows: the correlation that (Bob, Charlie) require in order to output $(\inner{r}{x}, \inner{r'}{x})$ appears as a correlation in the output of ($\bobx, \charliex$), who take uncorrelated bits as input.

\par After this reduction, it seems that we still cannot use the original independent-challenge simultaneous Goldreich-Levin because of the relaxed success criterion above. Luckily, by examining the proof of \cite{AKL23} we see that this condition is sufficient without additional work for the existence of (ExtBob, ExtCharlie) who can extract $(x,x)$ simultaneously. 

\paragraph{Many-Copy Security.} For $t$-copy security, where Bob and Charlie get $t$ copies of the decryption key in the unclonable security experiment, we need $t$ to be at most linear in $n$, for otherwise Bob and Charlie can learn $x$ using Gaussian elimination. In the proof, we similarly reach a hybrid where the Haar random states given to Bob and Charlie are independent. Then, we need an extra step where we switch to a hybrid in which Bob gets as input $(r_i, \inner{r_i}{x} \oplus \widetilde{b})$ for independent samples $r_1,\dots,r_t$ (instead of $(r_1,\dots,r_t)$ being generated from $t$ copies of a Haar random state) and similarly Charlie gets $(r_i', \inner{r_i'}{x} \oplus \widetilde{b})$ for independently generated $r_1',\dots,r_t'$. We show that the success probability of Bob and Charlie does not decrease from this change. To show this, we argue that that given $(r_i, \inner{r_i}{x} \oplus \widetilde{b})$ Bob can prepare $$ \brac{\sum_{r} \alpha_r \ket{r}\ket{\inner{r}{x} \oplus \widetilde{b}} }^{\otimes t} $$
where $\ket{\varphi} = \sum_r \alpha_r \ket{r}$ is a Haar random state. In the expression above, $\ket{r}$ corresponds to the register that holds Goldreich-Levin samples, which are generated from $t$ copies of a Haar random state rather than $t$ independent samples. Recall that $\varphi^{\ot t}$ can be written as a random vector in the \emph{type basis}. Using this fact, Bob can coherently apply a random permutation $\sigma \in S_t$ to the values $\ket{r_i, \inner{r_i}{x} \oplus \widetilde{b}}$, after which he can uncompute the permutation $\sigma$. We can argue similarly for Charlie, since their inputs originate from independent Haar distributions. This argument in fact requires the strings $r_i$ to be distinct, which fortunately holds with high probability by the birthday bound. 
\par Note that the input above given to Bob can be thought of as $(r, \inner{r}{x} \oplus \widetilde{b})$ alongside $t-1$ random samples of $(r_i, \inner{r_i}{x})$. In the final step, we apply our correlated Goldreich-Levin result in the presence of this extra information to reach the search security experiment for the weakly secure UE scheme. In this experiment, the extra information can be guessed by Bob and Charlie, hence if $t$ is bounded by a linear function of $n$ the security still holds.

\paragraph{Single-Decryptor Encryption.}
Single-decryptor encryption is a primitive that closely resembles unclonable encryption. It is an encryption scheme in which the decryption key is unclonable. In the security experiment, Alice tries to clone a quantum decryption key and split it between Bob and Charlie, who then try to decrypt a ciphertext they receive using their shares of the key. Depending on the correlation of these ciphertexts, one can define \emph{identical-challenge} or \emph{independent-challenge} security. We adapt our construction of unclonable encryption with quantum decryption keys to construct single-decryptor encryption with quantum ciphertexts. Our construction can be summarized as follows: \begin{enumerate}
    \item The encryption key consists of a key $k$ as well as a random message $x$ for a weakly secure UE scheme. The quantum decryption key contains $k$ and encryption of $x$ using $k$.
  
    \item To encrypt a one-bit message $m$, output $k$ as well as $\sum \alpha_r \ket{r} \ket{\inner{r}{x} \oplus m}$, where $\sum \alpha_r \ket{r}$ is a Haar random state.
    \item To decrypt, first recover $x$ and then coherently compute $m$.
\end{enumerate}

We can show that this construction is secure if Bob and Charlie are given $t$ copies of the same ciphertext for $t=O(|x|)$. The proof is nearly identical to the security proof of our UE construction above. For more clear exposition, in the technical sections we first present our construction of single-decryptor encryption (\Cref{sec:sde}), followed by that of unclonable encryption (\Cref{sec:ue}).  

\paragraph{Classical-Leakage-Resilient Secret-Sharing.}
As another application of simultaneous Haar indistinguishability, we construct a 2-out-of-$n$ quantum secret-sharing scheme for a single-bit classical message. The construction is as follows: \begin{enumerate}
    \item The shares of bit $b=0$ are identical Haar random states $\ket{\psi}$ and the shares of $b=1$ are independent Haar random states $\ket{\psi_i}$. In addition, we give $t$ copies of their share to all parties $1,2,\dots,n$.
    \item Any two parties $i,j$ can recover the message by applying $t$ SWAP tests between their secrets. If $m=0$, then all the tests will succeed. On the other hand, if $m=1$, then with high probability the independent Haar random states held by $i$ and $j$ will be almost orthogonal, therefore the number of SWAP tests that pass will be concentrated near $t/2$ by a Chernoff bound.
\end{enumerate}
Formally, we show that $m$ remains hidden in the presence of $\ell$-bits of classical leakage from each party. This amounts to showing a many-party and many-bit generalization of simultaneous Haar indistinguishability (SHI), that is, either all parties get (copies of) the same Haar random state or they get (copies of) independent Haar random states. Firstly, we go from 1-bit SHI to $\ell$-bit SHI for 2 parties. This can be achieved by a union bound, which incurs a multiplicative loss of $2^{2\ell}$ in security. And then we show an equivalence between SHI in the cases of (1) $2$ parties each getting $O(nt)$ copies and (2) $O(n)$ parties each getting $t$ copies. This can be seen as distributing a fixed number of Haar random states among more parties. In the proof we use $O(\log n)$ hybrids, doubling the number of parties at each hybrid. We show by an additional hybrid argument that we incur a multiplicative loss of $O(1)$ at each step, hence a multiplicative loss of $O(n)$ in total after $O(\log n)$ steps. Putting everything together, we show that we can allow $\ell = O(\log d/n)$ bits of leakage from each party, where $\log d$ is the number of qubits of each copy of a share. The number of copies ($t$) given to each party can be an arbitrary polynomial, which lets us amplify the correctness of the scheme. An interesting open question is to get rid of the exponential dependence on the number of bits leaked ($\ell n$), which would imply that our construction tolerates unbounded polynomial leakage.
\newcommand{\ellonenorm}[1]{\ellpnorm{#1}{1}}
\newcommand{\sphere}{{\cal S}}
\newcommand{\innerhs}[2]{\inner{#1}{#2}_{\sf HS}}
\newcommand{\SymmSubspace}{\cS_\mathsf{sym}}
\newcommand{\opnorm}[1]{\left\|#1 \right\|_{\mathsf{op}}}
\newcommand{\type}{\mathsf{type}}
\newcommand{\distinct}{\Lambda}

\section{Notation and Preliminaries}

\paragraph{Notation.} We write $\log := \log_e$ to denote the natural logarithm. We write $[n] := \bracC{1,\dots,n}$. We use the notation $\inner{\cdot}{\cdot} : \bit^n \times \bit^n \to \bit$ to denote the inner product over $\F_2^n$, i.e. for classical strings $x,y \in \bit^n$ we have $\inner{x}{y} := \sum_{i=1}^n x_iy_i \pmod{2}$. We denote the set of $d$-dimensional pure quantum states by $\sphere(\C^d) := \bracC{ \ket{\psi} \in \C^d \; : \; \braket{\psi}{\psi} = 1 }$. We sometimes write $\psi := \ketbraX{\psi}$ for simplicity. For a complex matrix (or operator) $A$, we write $A^\top$ to denote its transpose and $\overline{A}$ to denote its entry-wise complex conjugation, both with respect to the computational basis.

\paragraph{Total Variation Distance.} For random variables $x,x'$ over a set $X$, their total variation distance is defined as $$ \tvdist{x}{x'} := \max_{S \subseteq X} \pr{ x\in S } - \pr{x' \in S} = \frac{1}{2} \sum_{x \in X} \abs{ \pr{ x\in S } - \pr{x' \in S} }. $$

\paragraph{Quantum Computing.} Quantum registers are denoted using the font $\reg{X}$. The dimension of the Hilbert space associated with $\reg X$ is denoted by $\dim(\reg X)$. A quantum polynomial-time (QPT) algorithm $\alice$ is one which applies a sequence of polynomially many basic operations (universal quantum gate, qubit measurement, initializing qubit to $\ket{0}$).
\par A \emph{pseudo-deterministic} quantum algorithm $\alice$ takes as input a quantum state $\ket{\psi}$ and classical randomness $r$. It always outputs the same quantum state $\ket{\varphi}$ for fixed $(\ket{\psi}, r)$.

\paragraph{Haar Measure.} We denote by $\unitarygp_d$ the unitary group over $\C^d$. $\haar(\unitarygp_d)$ denotes the unique Haar measure over $\unitarygp_d$, and $\haar_d$ denotes the uniform spherical measure induced by $\haar(\unitarygp_d)$ on $\C^{d}$. States sampled from $\haar_d$ are referred to as \emph{Haar random states}. 

\paragraph{Non-Local Adversaries.}

\noindent A \emph{non-local adversary} is a tuple $\adversary = (\bob,\charlie,\rho_\reg{BC})$, where $\bob$, $\charlie$ are physically separated quantum parties and $\rho_\reg{BC}$ is a bipartite state shared between them. In addition, $\alice$ takes as input a bipartite state $(\sigma_\reg{B' C'})$, simultaneously computes $\bob$ on registers $\reg{B,B'}$ and computes $\charlie$ on registers $\reg{C,C'}$. Furthermore, we say that $\adversary$ is efficient if $\bob$ and $\charlie$ are quantum polynomial-time (QPT) algorithms. If part of the input $\sigma_\reg{B' C'}$ is classical, we write $\sigma_\reg{B' C'} = \brac{x, x', \sigma'_\reg{B' C'} }$, where $x$ is given to $\bob$ and $x'$ is given to $\charlie$. 

\par Below we list a slightly stronger version of the Simultaneous Quantum Goldreich-Levin lemma proved in \cite{AKL23}. The proof given in \cite{AKL23} works as is for the statement below, but we rewrite it in full in \Cref{sec:additional_proofs} for the sake of completeness.

\begin{lemma}[Simultaneous Quantum Goldreich-Levin]
\label{lem:quantumGL}
Let $x \in \bit^n$ be a random variable. Suppose a non-local adversary $\adversary = (\bob,\charlie,\rho)$ given input $(r,r',\sigma)$, where $r,r' \in \bit^n$ are i.i.d. uniform strings, can output bits $(b,b')$ satisfying $b \oplus b' = \inner{r}{x} \oplus \inner{r'}{x}$ with probability at least $1/2 + \varepsilon$. Then, there exists a non-local (extractor) adversary $\aliceprime = (\bobprime, \charlieprime, \rho)$ which can output $(x,x)$ with probability $\poly(\varepsilon)$ given input $\sigma$. Furthermore, $\bobprime$ and $\charlieprime$ run $\bob$ and $\charlie$ as subprotocols once, respectively.
\end{lemma}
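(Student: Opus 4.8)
\textbf{Proof plan for \Cref{lem:quantumGL}.}

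The plan is to adapt the standard (classical) Goldreich–Levin / Kushilevitz–Mansour argument to the simultaneous two-party setting, making sure that the quantum side effects (coherent access to the adversary, rewinding) are handled carefully. First I would reduce to the standard GL setup: observe that for a non-local adversary outputting $(b,b')$ with $b \oplus b' = \inner{r}{x}\oplus\inner{r'}{x}$ with probability $1/2+\varepsilon$, by averaging there is a noticeable-measure set of ``good'' randomness strings for the shared state (or we just work with the expected behaviour). The key move is to notice that $\bob$'s marginal behaviour, as a function of his challenge $r$, computes a bit $b$; when we XOR Bob's and Charlie's outputs we get a bit that agrees with $\inner{r}{x}\oplus\inner{r'}{x}$ with advantage $\varepsilon$. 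Because $r$ and $r'$ are independent, by a standard conditioning / averaging argument there exists a fixing of $r'$ (and of Charlie's coins and share of $\rho$) such that, over random $r$, the composite ``$\bob$'s output XOR the fixed bit $\inner{r'}{x}\oplus(\text{Charlie's output on }r')$'' predicts $\inner{r}{x}$ with advantage $\varepsilon$ on average. This is precisely a (quantum) predictor for the Goldreich–Levin hardcore bit of $x$ under challenge $r$, with the caveat that the ``prediction'' is the XOR of Bob's bit with a quantity that Charlie's side must supply; the symmetric statement fixes $r$ for Charlie. The real subtlety is that we need \emph{both} extractors to output $x$ \emph{simultaneously} on the \emph{same} run, so we cannot independently fix $r'$ for Bob's analysis and $r$ for Charlie's analysis.

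To handle the simultaneity, the plan is to run the Goldreich–Levin extraction ``in parallel'' on both sides using correlated challenges of the special Bellare–Goldwasser/BKKT form. Concretely, ExtBob and ExtCharlie jointly (without communicating, but using shared randomness built into $\rho$ or using public randomness) pick the GL query structure: pick $k = O(\log(n/\varepsilon))$ random strings $s_1,\dots,s_k \in \bit^n$, guess the inner products $\inner{s_j}{x}$ (there are only $\poly(n/\varepsilon)$ guesses, contributing a $\poly(\varepsilon)$ factor to the success probability), and for each coordinate $i\in[n]$ and each subset $T\subseteq[k]$ form the query $r = e_i \oplus \bigoplus_{j\in T} s_j$. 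ExtBob feeds $r$ to $\bob$, and ExtCharlie feeds the \emph{same} $r$ (or an independent copy $r'$ of the same form) to $\charlie$; the point is that the \emph{pairs} $(r,r')$ they use are of the right joint distribution because each is individually (close to) uniform, and the XOR relation that the adversary promises to satisfy lets Bob reconstruct $\inner{e_i}{x}=x_i$ by majority vote over $T$ \emph{provided} he knows $\inner{r'}{x}$ — but $\inner{r'}{x}$ is a linear combination of $x_i$ and the guessed values $\inner{s_j}{x}$, so once Charlie runs the symmetric procedure both parties end up with consistent estimates. Since $\bob$, $\charlie$ are quantum we run each on a fresh copy of its half of $\rho$ for each query (the lemma only promises $\rho$, but the statement says $\bobprime,\charlieprime$ run $\bob,\charlie$ once — so in fact we must be more careful: we do the GL combination \emph{coherently/sequentially with rewinding of a single invocation}, exactly as in \cite{AKL23}). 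I would follow the \cite{AKL23} template here: use the fact that a single run of $\bob$ on a superposition of queries, combined with the quantum Goldreich–Levin of Adcock–Cleve / the coherent majority trick, extracts $x$ from one invocation.

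Putting it together: ExtBob, on input $\sigma$ (and his half of $\rho$), internally simulates the challenger by preparing a uniform superposition over challenges $r$ of the GL form, runs $\bob$ once on this superposition coherently, applies the quantum GL decoding (Hadamard/majority) to recover a candidate for $x$ XORed with an affine function of the guessed $\inner{s_j}{x}$ values, corrects using the guesses, and outputs $x$; ExtCharlie does the symmetric thing with $\charlie$. The XOR promise $b\oplus b' = \inner{r}{x}\oplus\inner{r'}{x}$ guarantees that Bob's recovered affine function and Charlie's recovered affine function are consistent with the \emph{same} $x$, so both output $x$ and the joint output is $(x,x)$ with probability $\poly(\varepsilon)$ (the polynomial loss coming from: the $2^k=\poly(n/\varepsilon)$ guesses, the GL success amplification, and constants). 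I would then remark that the only change relative to \cite{AKL23} is bookkeeping — their proof already produces a simultaneous extractor, and inspecting it shows it does not actually use the success criterion ``$b\oplus b'$ computed correctly'' in a way stronger than what is stated here — hence the ``works as is'' claim, with the full rewrite deferred to \Cref{sec:additional_proofs}.

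\medskip
\noindent\textbf{Main obstacle.} The delicate point is the \emph{single invocation / simultaneity} constraint: ordinarily Goldreich–Levin runs the predictor many times on many queries, and moreover here we must coordinate Bob's and Charlie's (non-communicating) extractions so that they agree on which $x$ they output. The trick is that the XOR structure of the success condition already encodes the needed cross-correlation, so fixing the GL query-combination randomness to be shared (or derived from $\rho$) makes the two independent coherent-GL extractions land on the same $x$; and the ``run $\bob$ once'' requirement is met by the coherent (superposition-over-queries + quantum-GL-decoding) implementation rather than classical rewinding. Verifying that these two coherent GL procedures, on Bob's and Charlie's sides respectively, are jointly consistent — i.e. that with $\poly(\varepsilon)$ probability \emph{the same} run yields $(x,x)$ and not $(x,x')$ with $x\neq x'$ — is where the bulk of the care goes, and it is exactly the content of the \cite{AKL23} argument being imported.
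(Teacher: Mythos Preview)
Your proposal mixes two incompatible approaches and never fully commits to either. The classical Goldreich--Levin machinery you describe first --- picking $k=O(\log(n/\varepsilon))$ basis strings $s_1,\dots,s_k$, guessing the $2^k$ inner products $\inner{s_j}{x}$, and majority-voting over subsets $T$ --- inherently requires many invocations of the predictor and is incompatible with the ``run $\bob$, $\charlie$ once'' constraint. You notice this and pivot to the coherent Adcock--Cleve approach, but then in ``Putting it together'' you reintroduce the structured queries ``of the GL form'' and the guessed values. These two ingredients do not combine; the structured-query/guessing layer should be dropped entirely.

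The paper's proof is much simpler and uses none of the classical GL scaffolding. Each of $\bobprime,\charlieprime$ independently prepares a uniform superposition over \emph{all} $r\in\bit^n$ (not a structured subset), runs $U_B^r$ (resp.\ $U_C^{r'}$) coherently, applies a single $Z$ gate to the output-bit register, uncomputes $U_B^r$ (resp.\ $U_C^{r'}$), and measures the $r$-register in the Hadamard basis. No shared randomness, no guessing, no majority vote. The analysis decomposes the post-adversary state into four branches with amplitudes $\alpha,\beta,\theta,\gamma$ according to whether Bob's and Charlie's output bits equal $\inner{r}{x},\inner{r'}{x}$ respectively; the XOR hypothesis says $\E(|\alpha|^2+|\gamma|^2)\ge 1/2+\varepsilon$, hence $\E(|\alpha|^2-|\beta|^2-|\theta|^2+|\gamma|^2)\ge 2\varepsilon$. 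The two $Z$ gates flip the signs of exactly the $\beta,\theta$ branches, and after uncomputing and QFT the coefficient of $\ket{x}_\reg{B}\ket{x}_\reg{C}\ket{\varphi_x}$ is $\E_{r,r'}(|\alpha|^2-|\beta|^2-|\theta|^2+|\gamma|^2)$; Jensen then gives joint success probability $\ge 4\varepsilon^2$.

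Your ``main obstacle'' --- ensuring Bob's and Charlie's extractions land on the \emph{same} $x$ --- dissolves in this approach: there is no separate per-party analysis to reconcile. One computes the joint amplitude on $\ket{x}\ket{x}$ directly, and the reason the weaker XOR hypothesis suffices is structural: ``both correct or both wrong'' is precisely the event that makes the sign pattern $+\alpha,-\beta,-\theta,+\gamma$ add constructively. This is the key observation you are missing, and it replaces all of the averaging/fixing/consistency reasoning in your plan.
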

\par A \emph{$q$-party non-local adversary} $\alice=(\alice_1,\dots,\alice_q, \rho)$ is defined similarly, where $\rho$ is a quantum state shared across $\alice_1,\dots,\alice_q$. A $2$-party non-local adversary is simply a non-local adversary.

\par A \emph{cloning adversary} is defined similarly to a non-local adversary, with an additional splitting algorithm at the start. Formally, it is a tuple $\abc$ of quantum algorithms, where $\alice$ (on behalf of $\abc$) takes as input a quantum state $\rho$ and outputs a bipartite state $\rho_\reg{BC}$. $(\bob, \charlie, \rho_\reg{BC})$ then acts as a non-local adversary, taking an additional bipartite state $\sigma_\reg{B'C'}$ (referred to as the challenge) as input.

\paragraph{Matrix Norms.} 

The \emph{Frobenius norm}, or the \emph{Hilbert-Schmidt norm} of a complex matrix $A = \brac{A_{i,j}}_{i,j}$ is defined as
\begin{align*}
    \ellpnorm{A}{2} := \brac{ \sum_{i,j} \abs{A_{i,j}}^2}^{1/2} = \sqrt{\tr \brac{ A^\dagger A } }.
\end{align*}
$\|\cdot \|_2$ is induced by the \emph{Hilbert-Schmidt inner-product}, defined as \begin{align*}
    \innerhs{A}{B} := \sqrt{\tr(A^\dagger B)}.
\end{align*}
The \emph{Loewner order} is a partial order over matrices, where $X \le Y$ if and only if $Y-X$ is positive semidefinite. For $X,Y \ge 0$, we have $\tr(XY) = \tr(Y^{1/2}XY^{1/2}) \ge 0$, which implies the following lemma.
\begin{lemma} \label{lem:monotone}
    Let $0 \le M \le N$ be operators over $\reg{XY}$, then $\ellpnorm{\tr_{\reg Y} M }{2} \le \ellpnorm{\tr_{\reg Y} N }{2}$.
\end{lemma}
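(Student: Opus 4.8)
The plan is to reduce the statement to the single‑system inequality ``$0 \le A \le B$ implies $\ellpnorm{A}{2}\le \ellpnorm{B}{2}$'', and then prove that inequality directly from the fact recalled just above the lemma. For the reduction: the partial trace $\tr_{\reg Y}$ is a completely positive map — it can be written as $X \mapsto \sum_i (\identity \ot \bra{i})\, X\, (\identity \ot \ket{i})$ — hence positive, hence monotone for the Loewner order. Applying it to $0 \le M \le N$ yields $0 \le \tr_{\reg Y} M \le \tr_{\reg Y} N$. Since both $A := \tr_{\reg Y} M$ and $B := \tr_{\reg Y} N$ are self‑adjoint and PSD, we have $\ellpnorm{A}{2}^2 = \tr(A^\dagger A) = \tr(A^2)$ and likewise for $B$, so it suffices to show $\tr(A^2) \le \tr(B^2)$ whenever $0 \le A \le B$.

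For the single‑system step I would argue as follows. Conjugating the inequality $A \le B$ by $A^{1/2} \ge 0$ preserves the order, giving $A^2 = A^{1/2} A A^{1/2} \le A^{1/2} B A^{1/2}$; taking traces and using cyclicity, $\tr(A^2) \le \tr(A^{1/2} B A^{1/2}) = \tr(AB)$. Conjugating $A \le B$ by $B^{1/2}$ instead gives $B^{1/2} A B^{1/2} \le B^{1/2} B B^{1/2} = B^2$, so $\tr(AB) = \tr(B^{1/2} A B^{1/2}) \le \tr(B^2)$. Chaining the two estimates, $\tr(A^2) \le \tr(AB) \le \tr(B^2)$, as desired. (Equivalently, one may observe $\tr(B^2) - \tr(A^2) = \tr\bracS{(B-A)(B+A)} \ge 0$, since $B - A \ge 0$, $B + A \ge 0$, and $\tr(XY) \ge 0$ for PSD $X,Y$ by the recalled fact.)

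This is a routine lemma, so there is no real obstacle; the only point deserving care is that operator squaring is \emph{not} operator monotone, so one cannot conclude $A^2 \le B^2$ and must instead pass through the trace — via the conjugation trick above or the difference‑of‑squares identity. The second mild point worth stating explicitly is the positivity of the partial trace, which is what transports $0 \le M \le N$ to $0 \le \tr_{\reg Y} M \le \tr_{\reg Y} N$.
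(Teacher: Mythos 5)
Your proposal is correct and takes essentially the same route as the paper: transport the Loewner inequality $0 \le M \le N$ through the (positive) partial trace, then chain $\tr(A^2) \le \tr(AB) \le \tr(B^2)$ via the fact that $\tr(XY) \ge 0$ for PSD $X,Y$. Your ``difference-of-squares'' parenthetical is exactly the paper's one-line argument; the conjugation-trick version is a slightly more verbose but equivalent way of deriving the same two trace inequalities.
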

\begin{proof}
    $0 \le \tr_{\reg Y} M \le \tr_{\reg Y} N$, hence $\ellpnorm{\tr_{\reg Y} M }{2}^2 = \tr \brac{\tr_{\reg Y} M}^2 \le \tr \brac{\tr_{\reg Y} M}\brac{\tr_{\reg Y} N} \le \tr \brac{\tr_{\reg Y} N}^2 = \ellpnorm{\tr_{\reg Y} N }{2}^2$.
\end{proof}

\noindent The \emph{operator norm} of a matrix $A$, denoted by $\opnorm{A}$, is defined as the largest singular value of $A$. 

\paragraph{Symmetric Subspace.}

Let $\reg{A_1},\dots,\reg{A_t}$ be quantum registers where each $\reg{A_j}$ corresponds to a copy of $\C^d$. We define the symmetric subspace over $\reg{A_1},\dots,\reg{A_t}$ as \begin{align*}
    \SymmSubspace(\reg{A_1}\dots\reg{A_t}) := \mathsf{span} \bracC{\ket{\psi}^{\otimes t}\; : \; \ket{\psi} \in \C^d}.
\end{align*}
Below we list some well-known properties of symmetric subspaces\footnote{We refer the reader to \cite{Harrow13} for a detailed discussion of symmetric subspaces.}. We have \begin{align*}
    \dim \SymmSubspace(\reg{A_1}\dots\reg{A_t}) = \binom{d+t-1}{t}.
\end{align*}
The expectation of $t$ copies of a Haar state is maximally mixed in the symmetric subspace, i.e. \begin{align}
    \E_{\ket{\psi} \from \haar_d} \psi^{\otimes t} = \frac{\SymmSub(\reg{A_1\dots A_t})}{\dim \SymmSubspace(\reg{A_1}\dots\reg{A_t})} = \frac{\SymmSub(\reg{A_1\dots A_t})}{\binom{d+t-1}{t}}, \label{eq:sym_sub}
\end{align}
where $\SymmSub(\reg{A_1\dots A_t})$ denotes the projector onto $\SymmSubspace(\reg{A_1}\dots\reg{A_t}).$

\par We will denote the symmetric group over $t$ elements by $S_t$. For a permutation $\sigma \in S_t$ and $t$ quantum registers of equal dimension, we define the (register-wise) permutation operator $P_\sigma : \cH^{\otimes t} \to \cH^{\otimes t}$ as $$ P_\sigma := \sum_{x_1,\dots,x_t} \ket{x_{\sigma(1)},\dots,x_{\sigma(t)}}\bra{x_1\dots x_t}. $$

We have the following useful identity:
\begin{lemma} \label{lem:sum_perm}
    \begin{align*}
        \SymmSub(\reg{A_1\dots A_t}) = \frac{1}{t!} \sum_{\sigma \in S_t} P_\sigma,
    \end{align*}
        where $P_\sigma$ permutes $\reg{A_1, \dots ,A_t}$.
\end{lemma}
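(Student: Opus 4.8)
\textbf{Proof proposal for \Cref{lem:sum_perm}.}

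The plan is to show the two operators agree by showing they have the same action on a spanning set of the symmetric subspace, and annihilate its orthogonal complement. First I would verify that $\frac{1}{t!}\sum_{\sigma \in S_t} P_\sigma$ is a projector: it is clearly Hermitian since $P_\sigma^\dagger = P_{\sigma^{-1}}$ and summing over all of $S_t$ is symmetric under inversion, and idempotence follows from the group law $P_\sigma P_\tau = P_{\sigma\tau}$ together with the fact that for each fixed $\tau$ the map $\sigma \mapsto \sigma\tau$ is a bijection of $S_t$, so $\left(\frac{1}{t!}\sum_\sigma P_\sigma\right)^2 = \frac{1}{(t!)^2}\sum_{\sigma,\tau} P_{\sigma\tau} = \frac{1}{(t!)^2} \cdot t! \sum_{\rho} P_\rho = \frac{1}{t!}\sum_\rho P_\rho$. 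Hence it suffices to identify its range.

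Next I would show the range is exactly $\SymmSubspace$. For the inclusion $\supseteq$: for any $\ket{\psi}$, $P_\sigma \ket{\psi}^{\otimes t} = \ket{\psi}^{\otimes t}$, so $\frac{1}{t!}\sum_\sigma P_\sigma$ fixes every $\ket{\psi}^{\otimes t}$ and therefore fixes their span, which is $\SymmSubspace$ by definition. For the inclusion $\subseteq$: the range of the projector consists of vectors $\ket{\phi}$ with $P_\tau \ket{\phi} = \ket{\phi}$ for all $\tau$ (since if $\ket{\phi} = \frac{1}{t!}\sum_\sigma P_\sigma \ket{\chi}$ then $P_\tau \ket{\phi} = \frac{1}{t!}\sum_\sigma P_{\tau\sigma}\ket{\chi} = \ket{\phi}$ by re-indexing), i.e. the range is contained in the subspace of permutation-invariant vectors. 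It then remains to note that the span of the $\ket{\psi}^{\otimes t}$ is \emph{all} of the permutation-invariant subspace — this is the one nontrivial ingredient, and it is exactly a standard fact about the symmetric subspace (see \cite{Harrow13}), provable e.g. by a polarization/derivative argument: differentiating $\ket{\psi(s)}^{\otimes t}$ along curves in state space shows that symmetrized product vectors $\frac{1}{t!}\sum_\sigma P_\sigma(\ket{\psi_1}\otimes\cdots\otimes\ket{\psi_t})$ lie in the span, and these symmetrized products span the full invariant subspace since the $\ket{\psi_1}\otimes\cdots\otimes\ket{\psi_t}$ span the whole space and symmetrization is surjective onto the invariant subspace.

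Combining these, $\frac{1}{t!}\sum_\sigma P_\sigma$ is a projector whose range is $\SymmSubspace(\reg{A_1}\cdots\reg{A_t})$, which is precisely the definition of $\SymmSub(\reg{A_1 \dots A_t})$, proving the identity. The main obstacle is the density statement in the previous paragraph — that permutation-invariant vectors are spanned by symmetrized products (equivalently by the $\ket{\psi}^{\otimes t}$) — but since the excerpt already cites \cite{Harrow13} for the basic properties of the symmetric subspace (including its dimension formula, which is equivalent to this claim via a dimension count), I would simply invoke it rather than reprove it, keeping the proof to the short computation that $\frac{1}{t!}\sum_\sigma P_\sigma$ is the symmetric projector.
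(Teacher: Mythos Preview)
Your argument is correct and is the standard way to establish this identity. However, there is nothing to compare against: the paper does not prove \Cref{lem:sum_perm} at all. It is stated as a ``useful identity'' in the preliminaries, under the umbrella footnote referring the reader to \cite{Harrow13} for properties of the symmetric subspace, and then used without further justification. So your write-up already goes beyond what the paper does; if anything, you could shorten it to match the paper's treatment by simply citing \cite{Harrow13}.
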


Given a maximally mixed state over the symmetric subspace, performing a partial trace leaves the state maximally mixed over the symmetric subspace on the remaining registers. Formally, we have the following lemma.

\begin{lemma} \label{lem:sym_part_tr}
Let $\dim(\reg{A_i}) = d$ for $i \in [t]$, then \begin{align*}
    \tr_\reg{A_{s+1}\dots A_t} \frac{\SymmSub(\reg{A_1\dots A_t})}{\binom{d+t-1}{t}} =  \frac{\SymmSub(\reg{A_1\dots A_s})}{\binom{d+s-1}{s}}.
\end{align*}
\end{lemma}

\begin{proof} By \cref{eq:sym_sub} and linearity of expectation, we have
    \begin{align*}
        \tr_\reg{A_{s+1}\dots A_t} \frac{\SymmSub(\reg{A_1\dots A_t})}{\binom{d+t-1}{t}} = \tr_\reg{A_{s+1}\dots A_t} \E_{\ket{\psi} \from \haar_d} \psi^{\ot t} = \E_{\ket{\psi} \from \haar_d} \psi^{\ot s} = \frac{\SymmSub(\reg{A_1\dots A_s})}{\binom{d+s-1}{s}}.
    \end{align*}
\end{proof}

\par Denote the computational basis for $\reg{A_i}$ with strings $x \in X$, with $|X| = d$. Given $(x_1,\dots,x_t) \in X^t$, we define the \emph{type state} as \begin{align*}
    \ket{\type(x_1,\dots,x_t)} := \frac{1}{t!}\sum_{\sigma \in S_t} P_\sigma \ket{x_1\dots x_t} = \SymmSub(\reg{A_1 \dots A_t}) \ket{x_1, \dots, x_t}
\end{align*}
The type states form an orthonormal basis of the symmetric subspace, with the caution that the map $(x_1, \dots, x_t) \to \ket{\type(x_1,\dots,x_t)}$ is not injective. Let $\distinct_{t,d}$ be the set of $t$-tuples satisfying $x_1 \ne \dots \ne x_t$. We have the following lemma, which follows by the birthday bound:
\begin{lemma} \label{lem:distinct_type}
    \begin{align*}
         \tracedist{\E_{(x_1,\dots,x_t) \from \distinct(t,d)} \ketbraX{\type(x_1,\dots,x_t)}} {\frac{\SymmSub(\reg{A_1\dots A_t})}{\binom{d+t-1}{t}}} \le O\brac{\frac{t^2}{d}}.
    \end{align*}
\end{lemma}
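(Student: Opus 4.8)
The plan is to compare the two density operators $\rho_{\mathrm{dist}} := \E_{(x_1,\dots,x_t) \from \distinct(t,d)} \ketbraX{\type(x_1,\dots,x_t)}$ and $\rho_{\mathrm{sym}} := \SymmSub(\reg{A_1\dots A_t})/\binom{d+t-1}{t}$ by relating both of them to a single auxiliary state: the average of the type projectors over \emph{all} tuples, $\rho_{\mathrm{all}} := \E_{(x_1,\dots,x_t) \from X^t} \ketbraX{\type(x_1,\dots,x_t)}$, where the $x_i$ are i.i.d.\ uniform over $X$. First I would bound $\tracedist{\rho_{\mathrm{dist}}}{\rho_{\mathrm{all}}}$. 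The distribution of $(x_1,\dots,x_t)$ uniform on $X^t$ lands in $\distinct(t,d)$ except with probability at most $\binom{t}{2}/d = O(t^2/d)$ by the birthday bound, and conditioned on landing in $\distinct(t,d)$ it is exactly the uniform distribution on $\distinct(t,d)$. Hence $\rho_{\mathrm{all}} = (1-p)\rho_{\mathrm{dist}} + p\,\tau$ for some density operator $\tau$ and $p \le O(t^2/d)$, so $\tracedist{\rho_{\mathrm{dist}}}{\rho_{\mathrm{all}}} \le p \le O(t^2/d)$.

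Next I would show that $\rho_{\mathrm{all}}$ is in fact \emph{close to} (or, with the right bookkeeping, exactly equal to a mild perturbation of) $\rho_{\mathrm{sym}}$. The cleanest route: since $\ket{\type(x_1,\dots,x_t)} = \SymmSub \ket{x_1\dots x_t}$ only up to normalization — more precisely $\SymmSub\ket{x_1,\dots,x_t}$ has squared norm $m(x_1,\dots,x_t)/t!$ where $m$ is the number of distinct permutations of the tuple — on $\distinct(t,d)$ this norm is exactly $\sqrt{1/\,\text{(number of distinct orderings)}}$... rather, on a distinct tuple all $t!$ orderings are distinct so $\|\SymmSub\ket{x_1\dots x_t}\|^2 = 1/t!$ wait, let me instead use the direct computation. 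Compute $\E_{(x_1,\dots,x_t)\from X^t} \SymmSub \ket{x_1\dots x_t}\bra{x_1\dots x_t} \SymmSub = \SymmSub \big(\tfrac{1}{d^t}\sum_{x_1,\dots,x_t}\ket{x_1\dots x_t}\bra{x_1\dots x_t}\big)\SymmSub = \SymmSub (I/d)^{\otimes t} \SymmSub = \frac{1}{d^t}\SymmSub$. Now $\rho_{\mathrm{all}}$ differs from this only in the per-tuple normalization: $\ketbraX{\type(x_1,\dots,x_t)} = \SymmSub\ket{x_1\dots x_t}\bra{x_1\dots x_t}\SymmSub \big/ \|\SymmSub\ket{x_1\dots x_t}\|^2$, and on the distinct tuples (a $1-O(t^2/d)$ fraction under uniform sampling) this normalization factor is exactly $t!$. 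So the restriction of $\rho_{\mathrm{all}}$ to distinct tuples is $\frac{t!}{d^t}\SymmSub$ restricted appropriately; comparing $\frac{t!}{d^t}$ with $\binom{d+t-1}{t}^{-1} = \frac{t!\,(d-1)!}{(d+t-1)!} = \frac{t!}{d(d+1)\cdots(d+t-1)}$, the ratio is $\prod_{j=0}^{t-1}\frac{d+j}{d} = \prod_{j=0}^{t-1}(1+j/d)$, which is $1 + O(t^2/d)$ for $t = O(\sqrt d)$. Tracking these two $O(t^2/d)$-size discrepancies (the mass on non-distinct tuples, and the normalization ratio) and invoking the triangle inequality for trace distance gives $\tracedist{\rho_{\mathrm{all}}}{\rho_{\mathrm{sym}}} \le O(t^2/d)$, and combining with the first step yields the claim.

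The main obstacle is the bookkeeping in the second step: one has to be careful that $\ket{\type(x_1,\dots,x_t)}$ is genuinely normalized (the excerpt's displayed formula $\ket{\type} = \frac{1}{t!}\sum_\sigma P_\sigma\ket{x_1\dots x_t}$ is the unnormalized symmetrization, while "the type states form an orthonormal basis" indicates the intended object is normalized), and that on $\distinct(t,d)$ the normalization constant is the clean value $t!$ while off $\distinct(t,d)$ it is some smaller value — but since we only ever sum $\ketbraX{\type(\cdot)}$ over tuples with a fixed weight, the non-distinct tuples contribute total mass $O(t^2/d)$ regardless of their exact normalization, so they can simply be absorbed into the error term. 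A slicker alternative that sidesteps the normalization entirely: observe $\rho_{\mathrm{sym}}$ is the \emph{unique} density operator supported on the symmetric subspace that is invariant under the unitary channel $\ket\psi \mapsto \Sigma\ket\psi$ for all permutation matrices $\Sigma$ of the $d$ computational basis states of one register, and under $\reg{A_i}\leftrightarrow\reg{A_j}$ register swaps; $\rho_{\mathrm{dist}}$ has the latter symmetry exactly and the former up to the $O(t^2/d)$ error coming from collisions, so a symmetrization/averaging argument pins it to $\rho_{\mathrm{sym}}$ within $O(t^2/d)$. I would present the first, more computational route since it makes the constant explicit, but either works.
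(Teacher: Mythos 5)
Your approach is correct and, since the paper dispatches this lemma in a single phrase (``follows by the birthday bound''), your write-up is in fact considerably more detailed than the paper's own justification. The two-step comparison through $\rho_{\mathrm{all}} := \E_{x \from X^t}\ketbraX{\type(x)}$ works: the birthday bound gives $\tracedist{\rho_{\mathrm{dist}}}{\rho_{\mathrm{all}}} \le \binom{t}{2}/d$, and for the second leg one can check that $\rho_{\mathrm{all}} \le \frac{t!}{d^t}\SymmSub$ (since the per-tuple normalization factor $1/\|\SymmSub\ket{x}\|^2$ is $t!$ on distinct tuples and strictly smaller on non-distinct ones), so $\|\frac{t!}{d^t}\SymmSub - \rho_{\mathrm{all}}\|_1 = \tr(\frac{t!}{d^t}\SymmSub) - 1 = \prod_{j=0}^{t-1}(1+j/d) - 1 = O(t^2/d)$, and $\|\frac{t!}{d^t}\SymmSub - \rho_{\mathrm{sym}}\|_1$ equals the same quantity; the triangle inequality then closes the argument. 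Two small complaints about the presentation. First, you initially state the squared norm of $\SymmSub\ket{x}$ incorrectly (as $m/t!$ with $m$ the number of distinct orderings, which would give $1$ on distinct tuples) before self-correcting to $1/t!$; one should simply write $\|\SymmSub\ket{x}\|^2 = |\mathrm{Stab}(x)|/t!$. Second, the phrase ``the restriction of $\rho_{\mathrm{all}}$ to distinct tuples is $\frac{t!}{d^t}\SymmSub$ restricted appropriately'' is too vague to stand on its own — the clean statement is the operator inequality $\rho_{\mathrm{all}} \le \frac{t!}{d^t}\SymmSub$, with equality of traces up to the $O(t^2/d)$ error quoted above.

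A slicker route that avoids $\rho_{\mathrm{all}}$ entirely: the normalized type states $\ket{\type(x)}$ for $x$ ranging over \emph{unordered} sets of $t$ distinct elements are exactly orthonormal and number $\binom{d}{t}$, and the uniform average over $\distinct(t,d)$ puts equal weight $t!/d^{\underline{t}} = \binom{d}{t}^{-1}$ on each. So $\rho_{\mathrm{dist}}$ is the maximally mixed state on a $\binom{d}{t}$-dimensional subspace of the $\binom{d+t-1}{t}$-dimensional symmetric subspace, while $\rho_{\mathrm{sym}}$ is maximally mixed on the whole thing. For nested projectors $P \le Q$, $\tracedist{P/\mathrm{rk}\,P}{Q/\mathrm{rk}\,Q} = 1 - \mathrm{rk}\,P/\mathrm{rk}\,Q$ exactly; here this is $1 - \binom{d}{t}/\binom{d+t-1}{t} = 1 - \prod_{j=0}^{t-1}\frac{d-j}{d+j} \le \frac{t(t-1)}{d}$, giving the bound in one line with explicit constant.

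Finally, I would caution you against the proposed ``slicker alternative'' via symmetry: the uniqueness claim is false as stated. Even at $t=1$, the symmetric subspace is all of $\C^d$ and the pure state $\ketbraX{\mathbf{1}}$ with $\ket{\mathbf{1}} = \frac{1}{\sqrt d}\sum_x \ket{x}$ is also invariant under all basis permutations $\Sigma$; the commutant of the permutation representation on $(\C^d)^{\ot t}$ is generally nontrivial, so permutation invariance alone does not pin down $\rho_{\mathrm{sym}}$. Stick with the computational route.
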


\paragraph{Unclonable Encryption with Weak Unclonable Security.} A \emph{one-time unclonable encryption scheme} is a tuple $(\gen_\ue, \enc_\ue, \dec_\ue)$ of algorithms alongside a message space $\cM \subseteq \bit^{n}$, where $n = \poly(\secparam)$  with the following syntax: \begin{itemize}
    \item $\gen_\ue(1^\secparam)$ takes as input a security parameter and outputs a classical key $k$.
    \item $\enc_\ue(k, m)$ takes as input a key $k$ and a message $m \in \cM$; it outputs a quantum ciphertext $\ket{\ct}$.
    \item $\dec_\ue(k, \ket{\ct})$ takes as input a key $k$ and a quantum ciphertext $\ket{\ct}$; it outputs a classical message $m$.
\end{itemize}

\begin{definition}[Correctness] \label{def:ue_correctness}
    An unclonable encryption scheme $(\gen_\ue, \enc_\ue, \dec_\ue)$ is \emph{correct} if for any security parameter $\secparam$ and any message $m$ we have \begin{align*}
        \pr{ m' = m \; : \; \substack{ k \from \gen(1^\secparam) \\ \qcipher \from \enc(k, m) \\ m' \from \dec(k, \qcipher)} } \ge 1 - \negl(\secparam).
    \end{align*}
\end{definition}

\begin{definition}[$\eps$-Weak Unclonable Security]\label{def:ue_sec_weak}
A one-time unclonable encryption scheme $(\gen_\ue, \enc_\ue, \dec_\ue)$ has \emph{$\eps$-weak unclonable security} if for any cloning adversary $\abc$ and any pair of messages $m_0, m_1 \in \cM$ we have \begin{align*}
        \pr{ m_\bob = m_\charlie = m \; : \; \substack{k \from \gen(1^\secparam)\vspace{-.15cm} \\ m \uniform \cM, \quad \qcipher \from \enc(k, m) \\ \rho_\reg{BC} \from \alice(\qcipher) \\  m_\bob \from \bob(k, \rho_\reg{B}), \quad m_\charlie \from \charlie(k, \rho_\reg{C}) } } \le \eps,
    \end{align*}
    where $\rho_\reg{E}$ denotes the $\reg{E}$ register of the bipartite state $\rho_\reg{BC}$ for $\reg{E} \in \{\reg{B},\reg{C}\}$.
\end{definition}

\par Broadbent and Lord~\cite{BL20} showed how to achieve unclonable encryption with $\eps$-weak unclonable security for $\eps < 0.86^\secparam$ and $n = \secparam$ using Wiesner states.
\newcommand{\plus}{\texttt{+}}
\newcommand{\minus}{\texttt{-}}
\newcommand{\rk}{\mathsf{rk}}

\section{Simultaneous State Indistinguishability}

\paragraph{Terminology.}
Below, $\distr$ represents a probability distribution over pure states. Particularly, we will consider bipartite pure states $\ket{\psi}_\reg{BC}$ and non-local adversaries $\alice = (\bob, \charlie, \rho_{\reg{BC}})$ as distinguishers, where the $\reg{B}$ register will be given to $\bob$ and the $\reg{C}$ register will be given to $\charlie$.

\subsection{Definitions}
\label{sec:ssi_definition}

\begin{remark}
    A distribution $\distr$ has a unique representation as a quantum mixed state $\rho_\distr$, whereas a quantum mixed state can represent many distributions. Accordingly, we can apply our results to mixed states $\rho$ for which we can find an appropriate distribution $\distr'$ that satisfies $\rho_{\distr'} = \rho$.
\end{remark}

\begin{definition}[Simultaneous State Indistinguishability] 
We say that two distributions $\distr_1$ and $\distr_2$ are \emph{$\eps$-simultaneous state indistinguishable} ($\eps$-SSI) against a non-local adversary $\adversary=(\bob,\charlie,\rho_{\reg{BC}})$ if the following holds for every pair of bits $b_1,b_2$: 
$$\absbig{ \prob\left[(b_1,b_2) \leftarrow \adversary({\psi}_{\reg{BC}}) \; : \; \ket{\psi}_{\reg{BC}} \from \distr_1 \right] - \prob\left[(b_1,b_2) \leftarrow \adversary({\psi}_{\reg{BC}}) \; : \; \ket{\psi}_{\reg{BC}} \from \distr_2 \right] } \leq \eps$$

Here, $\bob$ gets the register $\reg{B}$ of $\psi$ and $\charlie$ gets the register $\reg{C}$. 
\end{definition} 

\noindent Of interest to us is the case when $\distr_1 = \distr^\id$ and $\distr_2 = \distr^\ind$, where we define $\distr^\id$ and $\distr^\ind$ as follows\footnote{Here $\id$ stands for \emph{identical} and $\ind$ stands for \emph{independent}.}:
\begin{enumerate}
    \item[] $(\distr^\id):$ Sample $\ket{\psi} \from \distr$ and output $(\ket{\psi}_{\reg{B}} \otimes \ket{\psi}_{\reg{C}})$
    \item[] $(\distr^\ind):$ Sample $\ket{\psi}, \ket{\psi'} \from \distr$ and output $(\ket{\psi}_{\reg{B}} \otimes \ket{\psi'}_{\reg{C}})$, i.e. $\distr^\ind = \distr \times \distr$.
\end{enumerate}
In this case, we refer to the above notion as $(\eps,\distr)$-simultaneous state indistinguishability ($(\eps,\distr)$-SSI). Note that up to a constant factor this is equivalent to the output distributions of $\alice$ with respect to $\distr_1, \distr_2$ having total variation distance $\eps$. Also note that we can fix the bits $b_1,b_2$ without loss of generality.

\begin{definition}[$(\eps, \distr)$-SSI]
    We say that \emph{($(\eps,\distr)$-SSI)} holds if $\distr^\id$ and $\distr^\ind$ defined above are $\eps$-SSI against all non-local adversaries $(\bob, \charlie, \rho)$. 
   
\end{definition}

\begin{remark}
When it is clear from the context, we omit the mention of the registers $\reg{B}$ and $\reg{C}$. 
\end{remark}

\paragraph{SSI as a metric.}

This notion defines a metric over bipartite states, namely $\nldist{\rho_0}{\rho_1}$ is the smallest value of $\varepsilon$ such that $\rho_0$ and $\rho_1$ are $\varepsilon$ simultaneously indistinguishable. Equivalently, \begin{align}
    \nldist{\rho_0}{\rho_1} := \sup_{(\bob, \charlie, \rho)} \abs{ \pr{ (1,1) \from (\bob \otimes \charlie)(\rho \otimes \rho_0) } - \pr{ (1,1) \from (\bob \otimes \charlie)( \rho \otimes \rho_1)} }. \label{eq:nldist}
\end{align}
It is easy to see that this is a valid metric\footnote{Technically, a pseudometric since two different states may have distance 0.}.

Since any binary measurement is a linear combination of projective measurements by the Spectral Theorem, we have the following fact. \begin{lemma} \label{lem:ssi_proj}
    $\nldist{\cdot}{\cdot}$ can be equivalently defined by restricting $\bob, \charlie$ to be projective measurements.
\end{lemma}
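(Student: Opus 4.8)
The plan is to prove the two inequalities between $\nldist{\rho_0}{\rho_1}$ and the same quantity with the supremum restricted to projective $\bob,\charlie$. One direction is immediate: a projective measurement is in particular a binary POVM, so the restricted supremum is at most the one in \cref{eq:nldist}. For the reverse direction, I would fix an arbitrary non-local adversary $\adversary = (\bob,\charlie,\rho)$. Absorbing all of $\bob$'s ancillas, unitaries, and final two-outcome measurement into a single binary POVM, the probability that $\bob$ outputs $1$ on a joint state equals $\tr\bracS{M(\cdot)}$ for some operator $0 \le M \le I$ acting on $\bob$'s input register together with his half of $\rho$; likewise $\charlie$ corresponds to some $0 \le N \le I$. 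After the obvious rearrangement of tensor factors, the quantity inside the supremum in \cref{eq:nldist} is $\abs{f(M,N)}$, where $f(M,N) := \tr\bracS{(M \otimes N)\brac{\rho \otimes (\rho_0 - \rho_1)}}$ is bilinear in the pair $(M,N)$ (for fixed $\rho$).

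Next I would use the spectral theorem to write $M$ as a \emph{convex} combination of orthogonal projectors, which is the precise content of the ``linear combination of projective measurements'' remark: ordering the eigenvalues $0 \le \lambda_1 \le \cdots \le \lambda_k \le 1$ of $M$ with eigenprojectors $E_1,\dots,E_k$ and setting $P_a := \sum_{i \ge a} E_i$, one has $M = \sum_{a=1}^{k} (\lambda_a - \lambda_{a-1}) P_a + (1-\lambda_k)\cdot 0$ with $\lambda_0 := 0$, and the coefficients are nonnegative and sum to $1$, so $M$ is a convex combination of the projectors $P_1,\dots,P_k$ together with the zero projector. Doing the same for $N = \sum_b q_b Q_b$ and expanding $f$ by bilinearity gives $f(M,N) = \sum_{a,b} p_a q_b\, f(P_a,Q_b)$, a convex combination of the numbers $f(P_a,Q_b)$, hence $\abs{f(M,N)} \le \max_{a,b}\abs{f(P_a,Q_b)}$. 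Since each $(P_a,Q_b,\rho)$ is a legitimate non-local adversary in which both parties measure projectively — and crucially the shared resource $\rho$ is untouched — the advantage of $\adversary$ is bounded by that of one of these projective adversaries. Taking the supremum over all $\adversary$ yields $\nldist{\rho_0}{\rho_1} \le \sup_{\text{projective }(\bob,\charlie,\rho)} \abs{f}$, and combined with the trivial direction this gives the claimed equality.

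An alternative route is to invoke Naimark's dilation theorem separately on $\bob$'s and $\charlie$'s POVMs; this also works, but it modifies the shared state by tensoring in local ancillas, whereas the spectral-decomposition argument keeps $\rho$ fixed. The argument is elementary and I do not expect a genuine obstacle; the only points requiring a little care are the tensor-factor bookkeeping needed to see that $f$ is truly bilinear with domain $\{0 \le M \le I\}\times\{0 \le N \le I\}$, and the observation that the trivial projectors $0$ and $I$ must be admitted among the $P_a$ (resp.\ $Q_b$) so that the coefficients form an honest probability distribution.
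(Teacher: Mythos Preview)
Your proposal is correct and follows essentially the same approach as the paper: write the POVM elements $M,N$ as convex combinations of projectors via their spectral decompositions, expand the bilinear form $f(M,N)$, and bound its absolute value by the maximum over the projective terms (the paper phrases this last step as a ``triangle inequality''). Your write-up is simply more detailed than the paper's one-sentence sketch.
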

\begin{proof}
    For any $\bob, \charlie$ the expression inside the supremum in \cref{eq:nldist} can be written as a convex combination of the same expression for projective $\bob_j, \charlie_j$ by looking at the spectral decomposition of the positive semi-definite operators $\bob, \charlie$. The lemma follows by a simple triangle inequality.
\end{proof}

\paragraph{Extending to Many-Bit Output.}
We can extend the definition of SSI such that the non-local adversary can output many bits in each register. We then require that the total variation distance between the collective outputs of $\bob$ and $\charlie$ for the two distributions is small.
\begin{definition}[$(\eps, \distr, n)$-SSI] \label{def:ssi_multibit}
    We say that $(\eps,\distr, n)$-SSI holds if for any nonlocal adversary $\alice = (\bob, \charlie, \rho)$ and any $S \subseteq \bit^{n} \times \bit^n$ we have \begin{align*}
        \abs{ \pr{ (x_1,x_2) \in S \; : \; \substack{\ket{\psi} \from \distr \\ (x_1,x_2) \from \alice({\psi} \otimes {\psi})} } - \pr{ (x_1,x_2) \in S \; : \; \substack{\ket{\psi}, \ket{\psi'} \from \distr \\ (x_1,x_2) \from \alice({\psi} \otimes {\psi'})} } } \le \eps
    \end{align*}
\end{definition}
Note that $(\eps, \distr, 2)$-SSI and $(\eps,\distr)$-SSI are equivalent up to a constant factor on $\eps$. There is a straightforward relation between $(\eps, \distr)$-SSI and $(\eps, \distr, n)$-SSI using the union bound, which we formally state below.

\begin{lemma}   \label{lem:ssi_union_bound}
    Suppose $(\eps, \distr)$-SSI holds, then $(2^{2n-1}\eps, \distr, n)$-SSI holds for all $n \in \N$.
\end{lemma}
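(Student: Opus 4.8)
The plan is to reduce the $n$-bit statement to the $1$-bit statement by a union bound over all pairs of outcomes $(b_1,b_2) \in \bit^n \times \bit^n$. Concretely, fix a nonlocal adversary $\alice = (\bob,\charlie,\rho)$ whose outputs lie in $\bit^n \times \bit^n$ and fix a set $S \subseteq \bit^n \times \bit^n$. The quantity to be bounded is $\abs{\pr{(x_1,x_2) \in S : \text{identical}} - \pr{(x_1,x_2) \in S : \text{independent}}}$, which by the triangle inequality is at most $\sum_{(b_1,b_2) \in S} \abs{\pr{(x_1,x_2) = (b_1,b_2) : \text{identical}} - \pr{(x_1,x_2) = (b_1,b_2) : \text{independent}}}$.

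First I would show that each summand is at most $\eps$. For a fixed target pair $(b_1,b_2)$, define a new nonlocal adversary $\alice' = (\bob',\charlie',\rho)$ that runs $\bob,\charlie$ respectively, and then has $\bob'$ output the single bit $\mathbf{1}[x_1 = b_1]$ and $\charlie'$ output $\mathbf{1}[x_2 = b_2]$. This is a legitimate nonlocal adversary with $1$-bit outputs per party (it only postprocesses each party's local classical output, so no communication is introduced and the same shared state $\rho$ is used). Then $\pr{(x_1,x_2) = (b_1,b_2)}$ under either distribution equals $\pr{(1,1) \from \alice'(\cdot)}$ under that distribution, so by $(\eps,\distr)$-SSI applied to $\alice'$ (using \Cref{lem:ssi_proj} if one wants projective measurements, though it is not needed here) the difference of these two probabilities is at most $\eps$.

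Next I would assemble the bound: summing over $(b_1,b_2) \in S$ and using $\abs{S} \le \abs{\bit^n \times \bit^n} = 2^{2n}$ gives a bound of $2^{2n}\eps$. To get the claimed $2^{2n-1}\eps$ I would use the standard sharpening of the union bound for statistical distance: the difference $\pr{x \in S} - \pr{x' \in S}$ for two distributions $x,x'$ on the same finite set is at most $\frac{1}{2}\sum_{(b_1,b_2)} \abs{\pr{x = (b_1,b_2)} - \pr{x' = (b_1,b_2)}}$ only after taking absolute value of the left side and maximizing over $S$ — more precisely $\max_S(\pr{x \in S} - \pr{x' \in S}) = \frac12 \sum \abs{\cdot}$, so restricting to any particular $S$ already gives a bound of $\frac12 \sum_{\text{all pairs}} \abs{\cdot} \le \frac12 \cdot 2^{2n} \eps = 2^{2n-1}\eps$. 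This yields $(2^{2n-1}\eps, \distr, n)$-SSI.

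I do not expect a genuine obstacle here; the only point requiring a little care is the bookkeeping in the last step — making sure the factor is $2^{2n-1}$ rather than $2^{2n}$ by invoking the $\frac12\sum\abs{\cdot}$ form of total variation distance (recalled in the preliminaries) rather than a naive term-by-term union bound over $S$, and confirming that the postprocessed adversary $\alice'$ really is admissible (in particular that converting each party's multi-bit output to a single indicator bit locally is allowed by the definition of a nonlocal adversary). Both are routine.
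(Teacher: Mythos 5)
Your proof is correct and follows essentially the same route as the paper: reduce to the 1-bit case by having each party locally replace its $n$-bit output with an indicator for a fixed target string (the paper phrases this as "associate $y$ with $0$ and every other string with $1$"), then sum the resulting $\eps$-bounds over all $2^{2n}$ target pairs and pick up the factor $\tfrac12$ from the $\tfrac12\sum\abs{\cdot}$ form of total variation distance. The only minor overstatement in your writeup is the remark that the $\tfrac12\sum\abs{\cdot}$ bound applies "only after... maximizing over $S$" — in fact $\abs{\Pr[x\in S]-\Pr[x'\in S]}\le \tvdist{x}{x'}$ for every $S$, which you then correctly use anyway.
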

\begin{proof}
    For any non-local adversary $\alice = (\bob, \charlie, \rho)$ and any $y,y' \in \bit^n$, we have \begin{align*}
        \abs{\pr{(y,y') \from \alice({\psi} \ot {\psi}) \; : \; \ket{\psi} \from \distr} - \pr{(y,y') \from \alice({\psi} \ot {\psi'}) \; : \; \ket{\psi}, \ket{\psi'} \from \distr}} \le \eps
    \end{align*}
    by $(\eps, \distr)$-SSI. This is because $\bob$ ($\charlie$) can associate $y$ (respectively, $y'$) with 0 and every other string with 1. By summing over all $y,y'$ and dividing by $2$ we get the desired result.
\end{proof}

\paragraph{Extending to Many Copies.} For a distribution $\distr$, denote by $\distr^t$ the distribution that samples $\ket{\psi} \from \distr$ and outputs $\ket{\psi}^{\otimes t}$. Then, we can consider $(\eps, \distr^t)$-SSI or $(\eps, \distr^t, n)$-SSI as extensions where $\bob$ and $\charlie$ each get $t$ copies of their respective inputs.

\paragraph{Extending to Many Parties.} We can consider as the distinguisher a $q$-party non-local adversary $\alice = (\alice_1,\dots,\alice_q, \rho_{\reg{A_1}\cdots\reg{A_q}})$. By giving every party $t$ copies of a quantum state generated either identically or independently we can generalize \Cref{def:ssi_multibit}:

\begin{definition}[($\eps, \distr, n$)-SSI against $q$ Parties]  \label{def:ssi_manyparty}
    We say that $(\eps,\distr, n)$-SSI holds \emph{against $q$ parties} if for any $q$-party nonlocal adversary $\alice = (\alice_1,\dots,\alice_q, \rho_{\reg{A_1}\cdots\reg{A_q}})$ and any $S \subseteq \bit^{nq}$ we have \begin{align*}
        \abs{ \pr{ (x_1,\dots,x_q) \in S \; : \; \substack{\ket{\psi} \from \distr \\ (x_1,\dots,x_q) \from \alice({\psi}_{\reg{A_1}} \otimes \cdots \otimes \psi_{\reg{A_q}})} } - \pr{ (x_1,\dots,x_q) \in S \; : \; \substack{\ket{\psi_1}, \dots, \ket{\psi_q} \from \distr \\ (x_1,\dots,x_q) \from \alice({{(\psi_1)}}_{\reg{A_1}} \ot \dots \ot {(\psi_q)}_{\reg{A_q}})} } } \le \eps
    \end{align*}
\end{definition}

\noindent In general, SSI against many parties is weaker than regular SSI with the same number of copies and the same total output length, which we state formally below.

\begin{lemma}   \label{lem:ssi_many_party}
    Suppose that $(\eps, \distr^{qt}, qn)$-SSI holds (against 2 parties), then $(2q\eps, \distr^t, n)$-SSI holds against $q$ parties.
\end{lemma}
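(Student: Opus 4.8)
The plan is a one‑party‑at‑a‑time hybrid argument, reducing each step to $2$‑party SSI with $qt$ copies. Fix a $q$‑party non‑local adversary $\alice=(\alice_1,\dots,\alice_q,\rho_{\reg{A_1}\cdots\reg{A_q}})$ and a set $S\subseteq\bit^{nq}$; the goal is to bound $\abs{\Pr_{\id}[(x_1,\dots,x_q)\in S]-\Pr_{\ind}[(x_1,\dots,x_q)\in S]}$ by $2q\eps$, where $\Pr_{\id},\Pr_{\ind}$ denote the identical and independent experiments of $(\eps,\distr^t,n)$‑SSI against $q$ parties (\Cref{def:ssi_manyparty}). For $i=0,1,\dots,q$ define the hybrid $H_i$ in which parties $1,\dots,i$ each receive $t$ copies of independent states $\ket{\psi_1},\dots,\ket{\psi_i}\from\distr$, while parties $i+1,\dots,q$ each receive $t$ copies of one common state $\ket{\phi}\from\distr$ independent of the $\ket{\psi_j}$'s. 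Then $H_0$ is exactly the identical experiment and $H_q$ (in fact already $H_{q-1}$, since a common block consisting of a single party is just an independent state) is exactly the independent experiment; so by the triangle inequality it suffices to show $\abs{\Pr_{H_{i-1}}[(x_1,\dots,x_q)\in S]-\Pr_{H_i}[(x_1,\dots,x_q)\in S]}\le\eps$ for every $i\in[q]$ and sum over the (at most $q$) steps.

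For the $i$‑th step I would construct a $2$‑party non‑local adversary $(\bob^{(i)},\charlie^{(i)},\rho^{(i)})$ out of $\alice$. $\bob^{(i)}$ simulates party $i$: of its $qt$ challenge copies it keeps $t$, discards the rest, runs $\alice_i$ on those together with the $\reg{A_i}$‑part of $\rho$, and outputs $\alice_i$'s $n$‑bit answer padded to $qn$ bits. $\charlie^{(i)}$ simulates all other parties: it uses $(q-i)t\le(q-1)t\le qt$ of its $qt$ challenge copies to feed the common‑block parties $\alice_{i+1},\dots,\alice_q$ (each run on $t$ copies plus its $\reg{A_j}$‑part of $\rho$), and it runs $\alice_1,\dots,\alice_{i-1}$ on an auxiliary register holding the fixed mixed state $\xi:=\E_{\ket{\psi_1},\dots,\ket{\psi_{i-1}}\from\distr}\psi_1^{\otimes t}\otimes\cdots\otimes\psi_{i-1}^{\otimes t}$; Charlie outputs the concatenation of the resulting $q-1$ answers (that is $(q-1)n\le qn$ bits), padded to $qn$ bits. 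The shared state is $\rho^{(i)}:=\rho\otimes\xi$, split so that $\bob^{(i)}$ holds $\reg{A_i}$ and $\charlie^{(i)}$ holds all remaining $\reg{A_j}$'s together with $\xi$ (so $\xi$ sits entirely on Charlie's side, which is allowed for a shared state). Finally $S^{(i)}$ is the pullback of $S$ under the natural map that reassembles a pair of $qn$‑bit strings into a tuple $(x_1,\dots,x_q)$.

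Since a quantum operation depends only on the density matrix it is applied to, and distinct parties' operations act on disjoint registers, running $\alice_1,\dots,\alice_{i-1}$ on $\xi$ yields exactly the joint distribution of running them on fresh i.i.d.\ samples from $\distr$; hence when the $2$‑party challenge is identical (Bob and Charlie both get $qt$ copies of one state $\ket{\theta}\from\distr$) the induced $q$‑party output is distributed as $H_{i-1}$ with $\ket{\phi}=\ket{\theta}$, and when the challenge is independent it is distributed as $H_i$. Therefore $\abs{\Pr_{H_{i-1}}[S]-\Pr_{H_i}[S]}=\abs{\Pr_{\id,\distr^{qt}}[S^{(i)}]-\Pr_{\ind,\distr^{qt}}[S^{(i)}]}\le\eps$ by $(\eps,\distr^{qt},qn)$‑SSI against $2$ parties, and summing over at most $q$ steps yields the bound (in fact $q\eps$, which is at most $2q\eps$). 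I do not expect a genuine mathematical obstacle here --- this is a routine hybrid --- so the step I would write out with care is purely the reduction's bookkeeping: checking that $\charlie^{(i)}$ never needs more than $qt$ copies or more than $qn$ output bits (so the reduction really targets $(\eps,\distr^{qt},qn)$‑SSI), that $\xi$ may legitimately be placed on one side of the shared entanglement, and that the reassembly map identifies the identical challenge with $H_{i-1}$ and the independent challenge with $H_i$ exactly. The fact that SSI imposes no efficiency condition on the adversary is what lets us use the (possibly inefficient) state $\xi$ freely.
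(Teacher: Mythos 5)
Your proof is correct, and it takes a genuinely different route from the paper. The paper argues by a \emph{doubling} hybrid: it first fixes $2^r < q \le 2^{r+1}$, then shows by induction on $j$ that $((2^{j+1}-1)\eps, \distr^{2^{r-j}t}, 2^{r-j}n)$-SSI holds against $2^{j+1}$ parties, using three sub-hybrids per doubling step (one invoking the 2-party hypothesis, two invoking the inductive hypothesis); unwinding this at $j=r$ gives the $(2^{r+1}-1)\eps < 2q\eps$ bound. Your argument is a \emph{linear} one-party-at-a-time hybrid: each of the $q-1$ adjacent hybrid pairs $H_{i-1},H_i$ differs only in whether party $i$ is attached to or detached from the common block, and a single direct reduction to 2-party $(\eps,\distr^{qt},qn)$-SSI handles each step --- Bob plays party $i$ on $t$ of his $qt$ challenge copies, Charlie plays the common-block parties $i+1,\dots,q$ on $(q-i)t$ of her copies and the already-detached parties $1,\dots,i-1$ on the internally sampled mixed state $\xi$, with outputs padded to $qn$ bits. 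The bookkeeping you flag (copies $\le qt$, outputs $\le qn$, $\xi$ placed on one side of the shared state, the reassembly map identifying the identical/independent challenges with $H_{i-1}/H_i$) all checks out, and the subtlety that Charlie physically co-locates several simulated parties is harmless because she faithfully runs their fixed non-communicating circuits. Your approach is simpler, avoids the induction, and in fact yields the slightly tighter bound $(q-1)\eps$ rather than $2q\eps$; the paper's doubling structure would only become advantageous if the per-step cost grew with the number of copies in a way that made fewer, larger steps preferable, which is not the case here.
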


\begin{proof}
    First we observe the monotonicity of SSI, that is, for $t \ge t', n \ge n'$, and $q \ge q'$, $(\eps, \distr^t, n)$-SSI against $q$ parties implies $(\eps, \distr^{t'}, n')$-SSI against $q'$ parties. Now onto proving the lemma, the case $q=1$ is trivial. Let $r \ge 0$ such that $q' := 2^r < q \le 2^{r+1}$. Then $(\eps, \distr^{2^r t}, 2^r n)$-SSI holds against $2$ parties. First, by induction on $0 \le j \le r$, we will show that $(\eps_j, \distr^{2^{r-j}t}, 2^{r-j} n)$-SSI holds against $2^{j+1}$ parties, where $\eps_j = (2^{j+1}-1)\eps$. The base case is true by assumption. Suppose it is true for some $j \le r-1$, then we will show that $(\eps_{j+1}, \distr^{2^{r-j-1}t}, 2^{r-j-1}n)$-SSI holds against $2^{j+2}$ parties. Consider a $2^{j+2}$-party non-local adversary $\alice = (\alice_1,\dots,\alice_{2^{j+2}}, \rho)$. We consider the following hybrids: \begin{itemize}
        \item {\bf Hybrid 0:} In this hybrid, $\alice_i$ gets as input ${\ket{\psi}^{\ot 2^{r-j-1}t}}$ for $i \in [2^{j+2}]$, where $\ket{\psi} \from \distr$.
        \item {\bf Hybrid 1:} In this hybrid, $\alice_i$ gets as input ${\ket{\psi}^{\ot 2^{r-j-1}t}}$ for $i \in [2^{j+1}]$ and $\alice_{i'}$ gets as input ${\ket{\psi'}^{\ot 2^{r-j-1}t}}$ for $i' \in [2^{j+2}]\setminus [2^{j+1}]$, where $\ket{\psi}, \ket{\psi'} \from \distr$.
        \item {\bf Hybrid 2:} In this hybrid, $\alice_i$ gets as input $\ket{\psi_i}^{\ot 2^{r-j-1}t}$ for $i \in [2^{j+1}]$ and $\alice_{i'}$ gets as input ${\ket{\psi'}^{\ot 2^{r-j-1}t}}$ for $i' \in [2^{j+2}]\setminus [2^{j+1}]$, where $\ket{\psi_1}, \dots, \ket{\psi_{2^{j+1}}}, \ket{\psi'} \from \distr$.
        \item {\bf Hybrid 3:} In this hybrid, $\alice_i$ gets as input $\ket{\psi_i}^{\ot 2^{r-j-1}t}$ for $i \in [2^{j+1}]$ and $\alice_{i'}$ gets as input $\ket{\psi'_{i'-2^{j+1}}}^{\ot 2^{r-j-1}t}$ for $i' \in [2^{j+2}]\setminus [2^{j+1}]$, where $\ket{\psi_1}, \dots, \ket{\psi_{2^{j+1}}}, \ket{\psi_{1}'}, \dots, \ket{\psi'_{2^{j+1}}} \from \distr$.
    \end{itemize}
    Let $y_0,y_1,y_2,y_3$ denote the output distributions of $\alice$ in {\bf Hybrid 0} to {\bf Hybrid 3}, respectively. We will split the proof into the following claims:
    \begin{claim}   \label{clm:tv_01}
        $\tvdist{y_0}{y_1} \le \eps$.
    \end{claim}
    \begin{proof}
        Define a (2-party) non-local adversary $\aliceprime = (\bobprime, \charlieprime, \rho)$ as follows: $\bobprime = \bigotimes_{i=1}^{2^{j+1}} \alice_i$, $\charlieprime = \bigotimes_{i'=2^{j+1}+1}^{2^{j+2}} \alice_{i'}$, and $\rho$ is split accordingly between $\bobprime, \charlieprime$. Now $\aliceprime$ is a distinguisher for $(\tvdist{y_0}{y_1}, \distr^{2^r t}, 2^r n)$-SSI, from which the claim follows.
    \end{proof}
    \begin{claim}   \label{clm:tv_12}
        $\tvdist{y_1}{y_2} \le \eps_j$.
    \end{claim}
    \begin{proof}
        Consider a $2^{j+1}$-party non-local adversary $\aliceprime=(\alice_1,\dots,\alice_{2^{j+1}-1},\bob, \rho)$, where $\bob$ is defined as follows: $\bob$ holds the last $2^{j+1} + 1$ registers of $\rho$, and on input $\ket{\phi}$ he simulates $\alice_{2^{j+1}+1},\dots, \alice_{2^{j+2}}$ by sampling their inputs $\ket{\psi'}^{\ot 2^r t} \from \distr^{2^r t}$. He then runs $\alice_{2^{j+1}}$ on input $\ket{\phi}$ and outputs the answer. Now $\aliceprime$ is a distinguisher for $(\tvdist{y_1}{y_2}, \distr^{2^{r-j-1} t}, 2^{r-j-1} n)$-SSI. Since $(\eps_j, \distr^{2^{r-j}t}, 2^{r-j} n)$-SSI implies $(\eps_j, \distr^{2^{r-j-1}t}, 2^{r-j-1} n)$-SSI, the claim follows.
    \end{proof}
    \begin{claim}   \label{clm:tv_23}
        $\tvdist{y_2}{y_3} \le \eps_j$.
    \end{claim}
    \begin{proof}
        Follows similarly as \Cref{clm:tv_12} by defining a $2^{j+1}$-party non-local adversary $\aliceprime=(\bob, \alice_{2^{j+1}+2},\dots,\alice_{2^{j+2}}, \rho)$ where $\bob$ simulates $\alice_1, \dots \alice_{2^{j+1}}$.
    \end{proof}
    Combining Claims \ref{clm:tv_01}, \ref{clm:tv_12}, \ref{clm:tv_23} and using the triangle inequality, we get $\tvdist{y_0}{y_3} \le \eps + 2\eps_j = (2^{j+2}-1)\eps = \eps_{j+1}$, so that $(\eps_{j+1}, \distr^{2^{r-j-1}t}, 2^{r-j-1}n)$-SSI holds against $2^{j+2}$ parties as desired. Therefore, by setting $j=r$ we conclude that $((2^{r+1}-1)\eps, \distr^t, n)$-SSI holds against $2^{r+1}\ge q$ parties, hence it also holds against $q$ parties. Since $(2^{r+1}-1)\eps < 2q\eps$, the lemma follows.
\end{proof}

Combining \Cref{lem:ssi_union_bound,lem:ssi_many_party} yields the following corollary: \begin{corollary}   \label{cor:ssi_many_party}
    Suppose that $(\eps, \distr^{qt})$-SSI holds, then $(2^{2qn}q\eps, \distr^t, n)$-SSI holds against $q$ parties.
\end{corollary}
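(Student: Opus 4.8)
### Proof Proposal

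The plan is to simply chain together the two lemmas that precede the corollary. Starting from the hypothesis that $(\eps, \distr^{qt})$-SSI holds (against $2$ parties, with single-bit outputs), I would first invoke \Cref{lem:ssi_union_bound} with the distribution $\distr^{qt}$ in place of $\distr$ and the parameter $qn$ in place of $n$. This upgrades single-bit SSI to $(qn)$-bit SSI, at the cost of a multiplicative factor $2^{2qn-1}$, yielding that $(2^{2qn-1}\eps, \distr^{qt}, qn)$-SSI holds against $2$ parties.

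Next I would feed this into \Cref{lem:ssi_many_party}, which takes as hypothesis exactly $(\eps', \distr^{qt}, qn)$-SSI against $2$ parties and concludes $(2q\eps', \distr^t, n)$-SSI against $q$ parties. Setting $\eps' = 2^{2qn-1}\eps$ gives $(2q \cdot 2^{2qn-1}\eps, \distr^t, n)$-SSI against $q$ parties, and since $2q \cdot 2^{2qn-1} = q \cdot 2^{2qn}$, this is precisely the claimed bound $(2^{2qn}q\eps, \distr^t, n)$-SSI against $q$ parties.

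Since both ingredient lemmas are already proved in the excerpt, there is essentially no obstacle here; the only thing to double-check is that the parameters line up correctly — in particular that \Cref{lem:ssi_many_party} is stated for an arbitrary base distribution (so we may instantiate its $\distr$ as our $\distr^{qt}$ and hence its "$\distr^t$" as $\distr^{qt}$ when the internal exponent is $1$, or more carefully, that the statement "$(\eps,\distr^{qt},qn)$-SSI $\Rightarrow (2q\eps,\distr^t,n)$-SSI against $q$ parties" is literally what \Cref{lem:ssi_many_party} gives with its roles of $t,n,q$). A one-line proof suffices:

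\begin{proof}
    By \Cref{lem:ssi_union_bound} applied to the distribution $\distr^{qt}$ with output length $qn$, the hypothesis $(\eps, \distr^{qt})$-SSI implies $(2^{2qn-1}\eps, \distr^{qt}, qn)$-SSI (against $2$ parties). Applying \Cref{lem:ssi_many_party} with this as the hypothesis gives $(2q \cdot 2^{2qn-1}\eps, \distr^t, n)$-SSI against $q$ parties, i.e. $(2^{2qn}q\eps, \distr^t, n)$-SSI against $q$ parties.
\end{proof}
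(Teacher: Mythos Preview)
Your proposal is correct and is exactly the paper's approach: the corollary is stated there as an immediate consequence of combining \Cref{lem:ssi_union_bound} (applied to $\distr^{qt}$ with output length $qn$) with \Cref{lem:ssi_many_party}, and your arithmetic $2q\cdot 2^{2qn-1}=2^{2qn}q$ matches the claimed bound.
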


\subsection{Distinguishing Bell-States}

In this section, we use the Bell basis over a 2-qubit system as a warm-up example to demonstrate some facts about simultaneous state indistinguishability. By the Bell basis we mean \begin{align*}
    \bracC { \frac{1}{\sqrt{2}}\brac{ \ket{00} + \ket{11} }, \frac{1}{\sqrt{2}}\brac{ \ket{00} - \ket{11} }, \frac{1}{\sqrt{2}}\brac{ \ket{\plus\plus} + \ket{\minus \minus} }, \frac{1}{\sqrt{2}}\brac{ \ket{\plus\plus} - \ket{\minus \minus} }  }.
\end{align*}
Note that the Bell basis is symmetric for the purposes of this section, meaning any fact we show about a pair of Bell states will also apply to any other pair of Bell states. We start off by demonstrating that the non-local distance $\nldist{\cdot}{\cdot}$ can be strictly less than LOCC distance. Recall that two orthogonal Bell states can be perfectly distinguished using LOCC measurements. 
\begin{lemma}[Comparison to LOCC]
    Let $\ket{\Phi^+}, \ket{\Phi^-}$ be orthogonal Bell states. Then, $\nldist{\ketbraX{\Phi^+}} {\ketbraX{\Phi^-}} = 1/2$.
\end{lemma}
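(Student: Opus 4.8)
The plan is to establish the matching bounds separately: a lower bound $\nldist{\ketbraX{\Phi^+}}{\ketbraX{\Phi^-}} \ge 1/2$ via an explicit \emph{product} (unentangled) strategy, and an upper bound $\le 1/2$ via a short operator-norm calculation that works for arbitrary preshared entanglement. By the symmetry of the Bell basis noted above, together with the fact that $\nldist{\cdot}{\cdot}$ is invariant under local unitaries (each of $\bob,\charlie$ can absorb a unitary on their register into their measurement), it suffices to treat the representative pair $\ket{\Phi^+} = \tfrac{1}{\sqrt2}(\ket{00}+\ket{11})$ and $\ket{\Phi^-} = \tfrac{1}{\sqrt2}(\ket{00}-\ket{11})$, where the first qubit ($\reg{B'}$) goes to $\bob$ and the second ($\reg{C'}$) goes to $\charlie$. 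A one-line computation gives the key identity $\ketbraX{\Phi^+} - \ketbraX{\Phi^-} = \ket{00}\!\bra{11} + \ket{11}\!\bra{00}$.

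For the lower bound, rewrite the states in the Hadamard basis: $\ket{\Phi^+} = \tfrac{1}{\sqrt2}(\ket{\plus\plus}+\ket{\minus\minus})$ while $\ket{\Phi^-} = \tfrac{1}{\sqrt2}(\ket{\plus\minus}+\ket{\minus\plus})$. Let $\bob$ and $\charlie$ each measure their qubit in the $\{\ket{\plus},\ket{\minus}\}$ basis and output $1$ exactly on outcome $\ket{\minus}$. Then $\pr{(1,1)} = 1/2$ under $\ket{\Phi^+}$ (outcomes are perfectly correlated) and $\pr{(1,1)} = 0$ under $\ket{\Phi^-}$ (outcomes are perfectly anti-correlated), so the distinguishing advantage is $1/2$, achieved with no shared entanglement.

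For the upper bound, fix an arbitrary non-local adversary $(\bob,\charlie,\rho_\reg{BC})$ with POVM elements $M$ on $\reg{BB'}$ and $N$ on $\reg{CC'}$ corresponding to output $1$, so $0 \le M,N \le \identity$. The advantage is $\Tr\bracS{(M\ot N)\brac{\rho \ot (\ket{00}\!\bra{11}+\ket{11}\!\bra{00})}}$, which is linear in $\rho$, so we may assume $\rho = \ketbraX{\Omega}$ is pure. Writing $M_{10} := \bra{1}_\reg{B'} M \ket{0}_\reg{B'}$ and $N_{10} := \bra{1}_\reg{C'} N \ket{0}_\reg{C'}$ (operators on $\reg B$, $\reg C$), and using Hermiticity to see that the $\ket{11}\!\bra{00}$ term contributes the complex conjugate, the advantage collapses to $2\,\mathrm{Re}\,\bra{\Omega}(M_{10}\ot N_{10})\ket{\Omega}$. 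Since $0 \le M \le \identity$, compressing to the two-dimensional span of $\ket{0}\ket{\phi}$ and $\ket{1}\ket{\psi}$ and invoking the elementary fact that a $2\times 2$ matrix $\bigl[\begin{smallmatrix} a & b\\ \bar b & c\end{smallmatrix}\bigr]$ lying in $[0,\identity]$ has $|b|\le 1/2$ (from $|b|^4 \le ac(1-a)(1-c) \le 1/16$), we get $\opnorm{M_{10}} \le 1/2$ and likewise $\opnorm{N_{10}} \le 1/2$. Now take a Schmidt decomposition $\ket{\Omega} = \sum_i \sqrt{\lambda_i}\,\ket{u_i}_\reg B\ket{v_i}_\reg C$; with $D := \mathrm{diag}(\sqrt{\lambda_i})$, $\widetilde M$ the matrix of $M_{10}$ in the basis $(u_i)$, and $\widetilde N$ the matrix of $N_{10}$ in the basis $(v_i)$, a direct expansion gives $\bra{\Omega}(M_{10}\ot N_{10})\ket{\Omega} = \Tr\bracS{(D\widetilde M D)\,\widetilde N^\top}$. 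Applying Hölder/Cauchy--Schwarz for Schatten norms,
\[
\absbig{\Tr\bracS{(D\widetilde M D)\widetilde N^\top}} \le \ellpnorm{D\widetilde M D}{1}\,\opnorm{\widetilde N} \le \opnorm{\widetilde M}\,\ellpnorm{D}{2}^2\,\opnorm{\widetilde N} \le \tfrac14,
\]
using $\ellpnorm{D}{2}^2 = \sum_i \lambda_i = 1$. Hence the advantage is at most $2\cdot \tfrac14 = \tfrac12$, matching the lower bound.

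The crux is the upper bound against \emph{arbitrary} entanglement: a naive Cauchy--Schwarz applied to $\bra{\Omega}(M_{10}\ot N_{10})\ket{\Omega}$ would bleed in a factor scaling with the Schmidt rank of $\ket{\Omega}$. The point that makes the argument dimension-free is folding the Schmidt coefficients into the trace norm, $\ellpnorm{D\widetilde M D}{1} \le \opnorm{\widetilde M}\,\Tr(D^2) = \opnorm{\widetilde M}$, which relies only on normalization $\Tr(D^2) = 1$; this is a toy instance of the more delicate norm manipulation needed for the general SHI theorem. The only other care point is the bound $\opnorm{M_{10}} \le 1/2$ on off-diagonal blocks of operators in $[0,\identity]$, handled by the $2\times 2$ compression above.
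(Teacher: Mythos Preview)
Your proof is correct. It also goes further than the paper's own argument: the paper's proof reduces to the case where $\bob$ and $\charlie$ are rank-1 projections and then uses that a product state has overlap at most $1/2$ with a maximally entangled state, i.e., $\tr\brac{(\bob\ot\charlie)\ketbraX{\Phi^+}}\le 1/2$. That reduction is only immediate when there is no preshared entanglement (so $\bob,\charlie$ act on single qubits and rank $\in\{0,1,2\}$); the entangled case is handled separately in the paper's next lemma via a rank/trace argument specific to EPR pairs.

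Your route is genuinely different. You compute $\ketbraX{\Phi^+}-\ketbraX{\Phi^-}=\ket{00}\!\bra{11}+\ket{11}\!\bra{00}$, collapse the advantage to $2\,\mathrm{Re}\,\bra{\Omega}(M_{10}\ot N_{10})\ket{\Omega}$, and then make two clean moves: (i) the $2\times 2$ compression to get $\opnorm{M_{10}},\opnorm{N_{10}}\le 1/2$, and (ii) the Schmidt-plus-H\"older step $|\Tr[(D\widetilde M D)\widetilde N^\top]|\le \opnorm{\widetilde M}\,\|D\|_2^2\,\opnorm{\widetilde N}$ with $\|D\|_2^2=1$. This gives the upper bound for \emph{arbitrary} preshared entanglement in one shot, without the separate EPR analysis, and makes transparent exactly where dimension-independence comes from (the normalization $\Tr D^2=1$). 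The paper's argument is shorter in the unentangled case but leans on the product-state/maximally-entangled overlap fact; yours is self-contained and directly matches the definition of $T_{SSI}$ with the supremum over shared states. Your lower-bound witness (Hadamard-basis measurement) is the natural one and is left implicit in the paper.
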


\begin{proof}
     It suffices to consider projective $\bob, \charlie$. After that, the only non-trivial case is when they both are rank-1 projections, in which case it is easy to show
    \begin{align*}
     \tr \brac{ (\bob \otimes \charlie) \brac{ \ketbraX{\Phi^+} - \ketbraX{\Phi^-}}} \le \tr \brac{ (\bob \otimes \charlie)
\ketbraX{\Phi^+} } \le 1/2.
\end{align*}
\end{proof}
Next, we show that a non-local distinguisher with an arbitrary number of EPR pairs is no more powerful than unentangled distinguishers for the same problem.

\begin{lemma}[Entanglement has no effect]
    Let $\ket{\Phi^+}, \ket{\Phi^-}$ be orthogonal Bell states and let $\ket{\epr_n}$ denote $n$ EPR pairs. Then, $\nldist{\ketbraX{\Phi^+} \otimes \ketbraX{\epr_n}} {\ketbraX{\Phi^-}\otimes \ketbraX{\epr_n}} = 1/2.$
\end{lemma}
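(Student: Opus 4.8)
The plan is to prove the two bounds $\nldist{\ketbraX{\Phi^+}\otimes\ketbraX{\epr_n}}{\ketbraX{\Phi^-}\otimes\ketbraX{\epr_n}} \ge 1/2$ and $\le 1/2$ separately, with essentially all the work going into the upper bound. For the lower bound I would observe that a non-local adversary can simply discard the $\ketbraX{\epr_n}$ registers and run an unentangled attack. Using the stated symmetry of the Bell basis, assume $\ket{\Phi^\pm} = \frac{1}{\sqrt2}\brac{\ket{00}\pm\ket{11}}$, so that in the Hadamard basis $\ket{\Phi^+} = \frac{1}{\sqrt2}\brac{\ket{++}+\ket{--}}$ while $\ket{\Phi^-} = \frac{1}{\sqrt2}\brac{\ket{+-}+\ket{-+}}$; if Bob and Charlie each measure their qubit in the Hadamard basis and output $1$ precisely when they observe $\ket{+}$, then the pair $(1,1)$ occurs with probability $1/2$ on $\ketbraX{\Phi^+}$ and never on $\ketbraX{\Phi^-}$, so the advantage is $1/2$.

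For the upper bound I would carry out the tensor-network contraction from the technical overview, specialized to $d=2$, $t=1$. First, by \Cref{lem:ssi_proj} it suffices to take Bob's and Charlie's measurements to be projections $M$ and $N$; Bob holds the $n$-qubit register $\reg{B}$ (his halves of the EPR pairs) together with the qubit $\reg{B'}$ (his half of $\Phi^\pm$), and symmetrically for Charlie on $\reg{CC'}$, while any local ancillas carry no entanglement and can be dropped. With $\ket{\Phi^\pm}$ as above, $\ketbraX{\Phi^+}-\ketbraX{\Phi^-} = \brac{\ketbra{00}{11}+\ketbra{11}{00}}_{\reg{B'C'}}$, so the signed advantage is $V = \tr\bracS{\brac{M\otimes N}\brac{\ketbraX{\epr_n}_{\reg{BC}}\otimes\brac{\ketbra{00}{11}+\ketbra{11}{00}}_{\reg{B'C'}}}}$. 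Expanding $\ketbraX{\epr_n} = 2^{-n}\sum_{a,a'}\ketbra{aa}{a'a'}$ and using the manipulation behind \cref{eq:tensor1} — the EPR pairs contract $M$ against a partial transpose of $N$ along $\reg{B},\reg{C}$, while $\ketbra{00}{11}+\ketbra{11}{00}$ glues the $\Phi$-qubits and thereby selects the off-diagonal blocks of the projectors across $\reg{B'},\reg{C'}$ — I expect to reach $V = 2^{-n}\brac{\sum_{a,a'}\brac{M^{10}}_{a,a'}\brac{N^{10}}_{a,a'} + \sum_{a,a'}\brac{M^{01}}_{a,a'}\brac{N^{01}}_{a,a'}}$, where $\brac{M^{01}}_{a,a'} := \bra{a,0}M\ket{a',1}$ is the $2^n\times 2^n$ block between the $\reg{B'}=0$ and $\reg{B'}=1$ subspaces (and $M^{10},N^{01},N^{10}$ analogously). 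Cauchy--Schwarz then bounds $\abs{V} \le 2^{-n}\brac{\ellpnorm{M^{10}}{2}\ellpnorm{N^{10}}{2} + \ellpnorm{M^{01}}{2}\ellpnorm{N^{01}}{2}}$.

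The step I expect to be the main obstacle is bounding an off-diagonal block of a projector: the crude estimate $\ellpnorm{M^{01}}{2}^2 \le \ellpnorm{M}{2}^2 = \rk M \le 2^{n+1}$ only gives $\abs{V}\le 2$, which is vacuous. The right move is to exploit projectivity: compressing the identity $M = M^2$ to the subspace $\reg{B'}=1$ yields $M^{11} = \brac{M^{11}}^2 + M^{10}M^{01}$, hence $\ellpnorm{M^{01}}{2}^2 = \tr\brac{M^{10}M^{01}} = \tr\brac{M^{11}\brac{\identity - M^{11}}}$; since the compression $M^{11}$ of a projection satisfies $0 \le M^{11}\le\identity$, its eigenvalues $\mu_i\in[0,1]$ give $\tr\brac{M^{11}\brac{\identity-M^{11}}} = \sum_i\mu_i(1-\mu_i) \le 2^n/4$. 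Compressing instead to $\reg{B'}=0$ gives $\ellpnorm{M^{10}}{2}^2 \le 2^n/4$, and the same bound holds for $N^{01}$ and $N^{10}$. Substituting, $\abs{V} \le 2^{-n}\cdot 2\cdot\brac{\sqrt{2^n}/2}^2 = 1/2$, which together with the lower bound gives the claimed equality.

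Two remarks I would append. The factor $1/4 = \max_{\mu\in[0,1]}\mu(1-\mu)$ produced by projectivity is exactly what yields the sharp constant $1/2$; and since the matching lower bound used no entanglement, the optimum here is in fact attained by an unentangled strategy, which is the point of the lemma. Finally, this argument is tailored to the $n$ EPR pairs appearing in the statement; extending the same conclusion to a fully arbitrary preshared state would instead go through the refined general-entanglement analysis sketched in the overview (absorbing a $\sqrt{\sigma}$ factor into $M$ and applying a more careful Cauchy--Schwarz), but the EPR case already suffices to illustrate that entanglement does not help.
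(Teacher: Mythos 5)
Your proposal is correct, and it takes a genuinely different route from the paper. The paper bounds the \emph{individual} acceptance probability $\|(\bob\otimes\charlie)(\ket{\Phi}\otimes\ket{\epr_n})\|_2^2$ from both sides in terms of $r=\min(\rk\bob,\rk\charlie)$: the upper bound $r/2^{n+1}$ comes from dropping $\charlie$ and tracing against Bob's maximally mixed reduced state, and the lower bound $r/2^n-1$ comes from the operator inequality $\bob\otimes\charlie\ge\bob\otimes\id+\id\otimes\charlie-\id\otimes\id$; since this holds for any Bell state, the difference of acceptance probabilities lies in an interval of width at most $1/2$ regardless of $r$. You instead compute the \emph{difference} directly via the tensor contraction $V=2^{-n}\sum_{a,a'}[(M^{10})_{a,a'}(N^{10})_{a,a'}+(M^{01})_{a,a'}(N^{01})_{a,a'}]$ and then invoke a genuinely different lemma about projectors: compressing $M=M^2$ to the $\reg{B'}=1$ sector gives $\ellpnorm{M^{01}}{2}^2=\tr(M^{11}(\id-M^{11}))\le 2^n/4$, which with Cauchy--Schwarz yields the sharp constant. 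Both are valid; the paper's pincer argument gives extra information (explicit bounds on each acceptance probability in terms of rank), while your off-diagonal-block bound isolates exactly where projectivity enters and makes the factor-$1/4$ origin of the constant $1/2$ transparent. You also supply the matching unentangled attack for the lower bound, which the paper's proofs of this lemma and the preceding ``Comparison to LOCC'' lemma omit (they only establish $\le 1/2$ and assert equality). One minor caveat worth flagging explicitly: the paper's $\nldist{\cdot}{\cdot}$ is defined with a supremum over an arbitrary preshared state $\rho$, yet neither your argument nor the paper's accounts for additional preshared entanglement beyond the $n$ EPR pairs in the input; both treatments implicitly read the lemma as concerning exactly the entanglement resource named in the statement, which is the interpretation the surrounding prose (``a non-local distinguisher with an arbitrary number of EPR pairs is no more powerful than unentangled distinguishers'') supports.
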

\begin{proof}
    Let $\ket{\Phi}$ be a Bell state. Then, the Schmidt decomposition of $\ket{\Phi} \otimes \ket{\epr_n}$ is given by \begin{align*}
        \ket{\Phi} \otimes \ket{\epr_n} = \frac{1}{\sqrt{2^{n+1}}} \sum_{i=1}^{2^{n+1}}{\ket{u_i}\ket{v_i}}.
    \end{align*}
    As before, we can assume the distinguisher $\bob, \charlie$ is projective. Let $r_B := \rk \bob, r_C := \rk \charlie$, $r := \min(r_B, r_C)$. We will show that \begin{align*}
        \left\| (\bob \otimes \charlie) \brac{ \ket{\Phi} \otimes \ket{\epr_n} } \right\|_2^2 \in \bracS{ \max \brac{ 0, \frac{r}{2^{n}} - 1 } , \frac{r}{2^{n+1}}},
    \end{align*}
    which implies that \begin{align*}
        \abs{ \left\| (\bob \otimes \charlie) \brac{ \ket{\Phi^+} \otimes \ket{\epr_n} } \right\|_2^2 - \left\| (\bob \otimes \charlie) \brac{ \ket{\Phi^-} \otimes \ket{\epr_n} } \right\|_2^2 } &\le \frac{r}{2^{n+1}} - \max \brac{ 0, \frac{r}{2^{n}} - 1 } \le \frac{1}{2}.
    \end{align*}
    We show the upper/lower bounds separately.
    \paragraph{Upper bound.} Without loss of generality assume $r_B = r$. Then,
    \begin{align*}
        \left\| (\bob \otimes \charlie) \brac{ \ket{\Phi} \otimes \ket{\epr_n} } \right\|_2^2 &\le \left\| (\bob \otimes \id) \brac{ \ket{\Phi} \otimes \ket{\epr_n} } \right\|_2^2 \\
        &= \tr \brac{ \bob \brac{ \tr_C \brac{ \ketbraX{\Phi} \otimes \ketbraX{\epr_n}} } } \\
        &= \tr \brac{ \bob \brac{\id/2^{n+1}} } \\
        &= \frac{r}{2^{n+1}}.
    \end{align*}
    \paragraph{Lower bound.} Since $0 \le \bob, \charlie \le \id$, we have \begin{align*}
        (\id - \bob) \otimes (\id - \charlie) \ge 0 \implies \bob \otimes \charlie \ge \bob \otimes \id + \id \otimes \charlie - \id \otimes \id.
    \end{align*}
    Thus, \begin{align*}
        &\quad \left\| (\bob \otimes \charlie) \brac{ \ket{\Phi} \otimes \ket{\epr_n} } \right\|_2^2 = \tr \brac{ (\bob \otimes \charlie) \brac{ \ketbraX{\Phi} \otimes \ketbraX{\epr_n}}} \\
        &\ge \tr \brac{ (\bob \otimes \id) \brac{ \ketbraX{\Phi} \otimes \ketbraX{\epr_n}}} + \tr \brac{ (\id \otimes \charlie) \brac{ \ketbraX{\Phi} \otimes \ketbraX{\epr_n}}} - 1 \\
        &= \frac{r_B}{2^{n+1}} + \frac{r_C}{2^{n+1}} - 1 \ge \frac{r}{2^n} - 1.
        \end{align*}
\end{proof}

\subsection{Impossibility Results about SSI}
\noindent There are many distributions that are simultaneously distinguishable. We give some examples below by identifying $\distr$ for which $(\eps,\distr)$-SSI does not hold for small $\eps$.

\begin{claim}
Suppose $\distr$ is a (classical) distribution on $\{0,1\}^n$ such that $\prob[y_1 \neq y_2: y_1,y_2 \from \distr] \geq c$, for some $c=c(n)$. Then, $(\eps,\distr)$-simultaneous state indistinguishability does not hold for any $\eps < c/2n$.
\end{claim}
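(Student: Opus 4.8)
The plan is to exhibit an explicit unentangled non-local adversary $\adversary = (\bob, \charlie, \rho)$ that distinguishes $\distr^\id$ from $\distr^\ind$ with advantage at least $c/2n$, which by the definition of $(\eps,\distr)$-SSI immediately implies the claim. Since $\distr$ is a classical distribution on $\bit^n$, the states $\ket{\psi}_\reg{B} \ot \ket{\psi}_\reg{C}$ (in the identical case) are computational basis states $\ket{y_1}\ket{y_1}$, while in the independent case they are $\ket{y_1}\ket{y_2}$ with $y_1,y_2$ i.i.d. from $\distr$. So the whole problem reduces to: Bob holds a string $y_1 \in \bit^n$ and Charlie holds a string $y_2 \in \bit^n$, and they want to detect (simultaneously, without communication) whether $y_1 = y_2$ or $y_1,y_2$ are independent.

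The natural strategy is a shared-coordinate test: both players agree ahead of time (this is the only ``coordination'' needed, and it can be hardwired, so no entanglement is required) to pick a uniformly random index $i \in [n]$, and each player measures the $i$-th bit of their string in the computational basis and outputs it. First I would analyze the identical case: there $y_1 = y_2$, so the outputs $(b_1,b_2)$ always agree, i.e. $\pr{b_1 = b_2} = 1$. Next I would analyze the independent case: conditioning on the event $y_1 \neq y_2$ (which has probability at least $c$), there is at least one coordinate where they differ, so a uniformly random $i$ catches a disagreement with probability at least $1/n$; hence in the independent case $\pr{b_1 \neq b_2} \geq c/n$, i.e. $\pr{b_1 = b_2} \leq 1 - c/n$. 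The gap in the probability of the event ``$b_1 = b_2$'' is therefore at least $c/n$.

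To phrase this in the language of the definition, which requires fixing output bits $b_1, b_2$ and bounding $|\pr{\adversary \to (b_1,b_2) \mid \distr_1} - \pr{\adversary \to (b_1,b_2) \mid \distr_2}|$, I would use the standard averaging observation (noted right after \Cref{def:ssi_multibit}, that fixing the output bits costs only a constant factor): summing $|\pr{(b_1,b_2)\mid \distr^\id} - \pr{(b_1,b_2)\mid\distr^\ind}|$ over the four choices of $(b_1,b_2)$ gives twice the total variation distance of the joint output distributions, which is at least the gap $c/n$ computed above for the single event $\{b_1=b_2\} = \{(0,0),(1,1)\}$; hence some fixed pair $(b_1,b_2)$ achieves advantage at least $\tfrac{1}{2}\cdot\tfrac{c/n}{2} = c/4n$ — but a cleaner bookkeeping, splitting the event $\{b_1 = b_2\}$ into its two atoms, shows that one of them already witnesses advantage at least $c/2n$, giving the stated bound. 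So I would be slightly careful about constants here and present the two-atom splitting rather than the crude four-way sum.

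I do not anticipate a genuine obstacle; the only point requiring a little care is matching the constant $c/2n$ exactly against the per-output-pair formulation of SSI (as opposed to the total-variation-distance formulation), so I would make sure the final inequality is $\eps \geq c/2n$ rather than an off-by-a-factor version, by observing that at least one of the events $\{(b_1,b_2) = (0,0)\}$, $\{(b_1,b_2)=(1,1)\}$ shifts in probability by at least $(c/n)/2 = c/2n$ between the two distributions. It is also worth remarking, as the surrounding text emphasizes for the Haar case, that this attack uses no shared entanglement at all.
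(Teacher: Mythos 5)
Your proposal is correct and achieves the stated bound of $c/2n$. The core attack is the same as the paper's: both players measure a pre-agreed uniformly random coordinate $i \in [n]$ and output the $i$-th bit, so the outputs always agree in the identical case while disagreeing with probability at least $c/n$ in the independent case. The difference is in how you pin down a fixed output pair witnessing the advantage. The paper additionally lets Bob and Charlie share a uniformly random bit $b$ and output $y^i \oplus b$ and $y'^i \oplus b$; this one-time pads the outputs so that the anti-diagonal pair $(0,1)$ has probability exactly $0$ in the identical case and at least $\tfrac{1}{2}\cdot\tfrac{c}{n}$ in the independent case, giving the $c/2n$ advantage for that single pair directly. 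You instead observe that $p^{\id}_{00}+p^{\id}_{11}=1$ while $p^{\ind}_{00}+p^{\ind}_{11}\le 1-c/n$, so $(p^{\id}_{00}-p^{\ind}_{00})+(p^{\id}_{11}-p^{\ind}_{11})\ge c/n$ and by pigeonhole one of the two diagonal pairs shifts by at least $c/2n$. Both finishes are valid and yield the same constant; the paper's shared-bit symmetrization avoids the averaging step at the cost of a slightly larger shared state, while your version uses strictly less shared randomness and a pigeonhole. Both, as you correctly note, are entirely unentangled.
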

\begin{proof} We will denote by $y^{i}$ the $i$-th bit of a string $y \in \bit^n$.
We construct $\adversary=(\bob,\charlie,\rho)$ as follows: 
\begin{itemize}
    \item $\rho = \brac{\frac{1}{n} \sum_{i \in [n]} \ketbra{i}{i}_\reg{B} \otimes \ketbra{i}{i}_\reg{C} } \otimes \brac{ \frac{1}{2} \sum_{b \in \bit} \ketbraX{b}_\reg{B'} \otimes \ketbraX{b}_\reg{C'}} $,
    \item $\bob$ upon input $y_{\bob}$, measures $\reg B$ to obtain $i \in [n]$ and measures $\reg{B'}$ to obtain $b$. $\bob$ outputs $y_{\bob}^{i} \oplus b$.
    \item $\charlie$ upon input $y_{\charlie}$, measures $\reg C$ to obtain $i \in [n]$ and measures $\reg{C'}$ to obtain $b$. $\charlie$ outputs $y_{\charlie}^{i} \oplus b$.
\end{itemize}
We will look at two cases. In the first case, $y \leftarrow \distr$ and both $\bob$ and $\charlie$ are given $y$. In this case, $(\bob, \charlie)$ will output $(0,1)$ with probability $0$.
\par In the second case, the probability that $\bob$ and $\charlie$ output $(0,1)$ is lower bounded by: 
\begin{align*}
 & \prob[(0,1) \leftarrow \adversary(y_{\bob},y_{\charlie})\ |\ y_\bob \ne y_\charlie ] \cdot \prob[y_\bob \ne y_\charlie] 
% \wedge i_{\bob} \text{ is such that }y_{i_{\bob}} \neq y_{i_{\charlie}}\ |\  y_1 \leftarrow \distr,y_2 \leftarrow \distr_2] \geq 1 \cdot \frac{c}{n}
\ge \frac{1}{n} \cdot \frac{1}{2} \cdot c = \frac{c}{2n}
\end{align*}
since there is at least one pair of $(i,b)$ that results in $\alice$ outputting $(0,1)$ when $y_\bob \ne y_\charlie$.
\end{proof}
\noindent We remark that one can improve the bound $c/2n$ by having $\bob$ and $\charlie$ apply an error-correcting code on $y_\bob$ and $y_\charlie$ before measuring, respectively.

\begin{claim} \label{clm:cliffordattack}
Let $m,n \in \N$ and $m \ge n$. Let $C \in \linop(\C^{2^n}, \C^{2^m})$ be a Clifford circuit. That is, $C$ appends $m-n$ qubits initialized to zeros to its input and applies a sequence of Clifford gates. Define $\distr_1$ as follows: sample $r \xleftarrow{\$} \{0,1\}^{n-1}$ and then output $\ket{\psi_r}_\reg{B C} = C \ket{r||0}$. Define $\distr_2$ as follows: sample $r \xleftarrow{\$} \{0,1\}^{n-1}$ and then output $\ket{\psi_r}_\reg{B C} = C \ket{r||1}$. Then, the distributions $\distr_1$ and $\distr_2$ are $\eps$-simultaneous state distinguishable for all $\eps < \frac{1}{2}$.
\end{claim}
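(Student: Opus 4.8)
The plan is to exhibit an explicit non-local adversary that distinguishes $\distr_1$ from $\distr_2$ with advantage at least $1/2$; as in the Bell-state examples above, no shared entanglement will be needed. The point is that the two distributions feed states into $C$ that differ only in a single ``label'' qubit, and after the Clifford this label is read off by a global Pauli observable which happens to factor as a tensor product across Bob's and Charlie's registers.

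First I would write $C = \widetilde{C}\,\bigl(\identity_{2^n} \otimes \ket{0^{m-n}}\bigr)$ for a genuine $m$-qubit unitary Clifford $\widetilde{C}$, so that in $\distr_{b+1}$ (for $b \in \bit$) the sampled state is $\ket{\psi_r} = \widetilde{C}\,\ket{r||b||0^{m-n}}$, with the label bit $b$ sitting on the $n$-th of the $m$ qubits. Since $Z_n\,\ket{r||b||0^{m-n}} = (-1)^b\,\ket{r||b||0^{m-n}}$, the operator $Q := \widetilde{C}\, Z_n\, \widetilde{C}^\dagger$ satisfies $Q\,\ket{\psi_r} = (-1)^b\,\ket{\psi_r}$ for every $r$. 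Being the Clifford-conjugate of a Hermitian Pauli, $Q$ is itself a Hermitian Pauli string, hence $Q = \epsilon_0\,(P_1 \otimes \cdots \otimes P_m)$ for some sign $\epsilon_0 \in \{+1,-1\}$ and single-qubit Paulis $P_i \in \{\identity, X, Y, Z\}$. Letting $\reg{B}$ and $\reg{C}$ be the qubits held by Bob and Charlie, set $P_\bob := \bigotimes_{i \in \reg{B}} P_i$ and $P_\charlie := \bigotimes_{i \in \reg{C}} P_i$; these are Hermitian, square to $\identity$, and satisfy $P_\bob \otimes P_\charlie = \epsilon_0\, Q$.

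The adversary is then: Bob measures the $\pm 1$-valued observable $P_\bob$ on $\reg{B}$ and outputs the bit $b_1$ with $(-1)^{b_1}$ the observed eigenvalue; Charlie does the same with $P_\charlie$ on $\reg{C}$, outputting $b_2$. The main step is a short computation: expanding the (unnormalized) post-measurement vector $\bigl(\tfrac{\identity + (-1)^{b_1} P_\bob}{2} \otimes \tfrac{\identity + (-1)^{b_2} P_\charlie}{2}\bigr)\ket{\psi_r}$ and using $(P_\bob \otimes P_\charlie)\ket{\psi_r} = \epsilon_0 (-1)^b\,\ket{\psi_r}$ shows that it vanishes unless $(-1)^{b_1 \oplus b_2} = \epsilon_0 (-1)^b$. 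Consequently, with probability $1$ over the choice of $r$ and the measurement outcomes, the output pair $(b_1,b_2)$ lies in the two-element set $T_b := \{ (u,v) \in \bit^2 : (-1)^{u \oplus v} = \epsilon_0 (-1)^b \}$, and $T_0 \cap T_1 = \emptyset$. To conclude, observe that under $\distr_1$ (the $b = 0$ case) the distribution of $(b_1,b_2)$ is supported on the two-element set $T_0$, so some fixed pair $(b_1^\star, b_2^\star) \in T_0$ receives probability at least $1/2$; for this fixed pair the output equals $(b_1^\star,b_2^\star)$ with probability at least $1/2$ under $\distr_1$ and with probability $0$ under $\distr_2$ (as $(b_1^\star,b_2^\star) \notin T_1$). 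This gives a distinguishing advantage of at least $1/2$ on the fixed pair $(b_1^\star,b_2^\star)$, which is exactly the assertion that $\distr_1$ and $\distr_2$ are $\eps$-simultaneous state distinguishable for every $\eps < 1/2$.

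The only genuinely delicate point is the parity computation of the previous paragraph, together with the sign bookkeeping for $\epsilon_0$ (one should keep track of the fact that conjugating a Hermitian Pauli by a Clifford yields $\pm$ a Pauli string, never $\pm i$ times one). Everything else — that $C$ decomposes as claimed, that $Q$ stabilizes each $\ket{\psi_r}$ with sign $(-1)^b$, and the final two-element counting argument — is routine. It is worth stressing in the writeup that the attack requires no preshared entanglement, so, exactly as with the Bell states, entanglement confers no advantage on the distinguisher for this family.
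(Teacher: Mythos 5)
Your proof is correct, and it takes a genuinely different route from the paper's. The paper's argument is teleportation-based: Bob teleports his qubits to Charlie through pre-shared EPR pairs; because teleportation applies a random Pauli one-time-pad and $C$ is Clifford, the pad commutes through $C^\dagger$ to a Pauli correction on the label qubit whose relevant bit $g_m(a,b)$ Bob can compute classically from his teleportation outcomes; a shared random bit $b_S$ is then used to synchronize their outputs so that the fixed pair $(0,0)$ has probability $1/2$ under $\distr_1$ and $0$ under $\distr_2$. Your argument instead notices that the stabilizer $Q=\widetilde{C}\,Z_n\,\widetilde{C}^\dagger$ is (up to a Hermitian sign $\epsilon_0=\pm 1$) a Pauli string that factors as $P_\bob\otimes P_\charlie$ across the two registers, and that the joint outcome of measuring $(P_\bob,P_\charlie)$ is supported on the two-element set determined by the parity constraint $(-1)^{b_1\oplus b_2}=\epsilon_0(-1)^b$; the pigeonhole argument then hands you a fixed output pair with advantage at least $1/2$. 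Both proofs lean on the same structural fact (Cliffords normalize the Pauli group), but yours is cleaner in two respects: it requires no pre-shared entanglement and no shared randomness, and it makes the stabilizer structure explicit rather than mediating it through teleportation correction bits. The paper's approach, by contrast, makes the attack constructive for a \emph{canonical} output pair $(0,0)$ without a pigeonhole step, and the teleport-and-correct template is perhaps more suggestive of how the attack would degrade for circuits that are only approximately Clifford. One small stylistic remark: it is worth stating explicitly at the outset that $Q\neq\pm\identity$ (since $Q$ is unitarily conjugate to $Z_n$), so the observable is nontrivial; this rules out a degenerate case, though as you note the argument goes through even when one of $P_\bob,P_\charlie$ happens to be the identity.
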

\begin{proof}
    We will describe a $1/2$-simultaneous distinguisher $(\bob, \charlie, \rho)$ for $\distr_1$ and $\distr_2$. $$\rho = \ketbraX{\epr}^{\otimes n} \otimes \brac{\frac{1}{2}(\ketbraX{0} \otimes \ketbraX{0} + \ketbraX{1} \otimes \ketbraX{1}}$$ is $n$ copies of a maximally entangled Bell state, which is given by $\ket{\epr} = 1/\sqrt{2} \brac{\ket{00} + \ket{11}}$, together with a shared random bit $b_S$. Recall that using one copy of the state $\epr$ and two bits of classical communication, $\bob$ can teleport a one-qubit state to $\charlie$. Even though communication is forbidden between $\bob$ and $\charlie$ in our setting, they can still perform the rest of the teleportation operation. In more detail, let $\ket{\psi_r}_\reg{B}$ denote the $m_\bob$-qubit state which is $\bob$'s half of $\ket{\psi_r}_\reg{B C}$, and let $\reg{B_i}$ be the register containing its $i$-th qubit. Let $\reg{B_i^{\epr}}$ be the register containing $\bob$'s half of the $i$-th Bell state. For $i \in [m_\bob]$, $\bob$ will do the following: \begin{itemize}
        \item Apply a $\cnot$ gate to $\reg{B_i}$ controlled on $\reg{B_i^{\epr}}$. Measure $\reg{B_i^{\epr}}$ in the computational basis and record the outcome as $a_i \in \bit$.
        \item Apply a Hadamard gate to $\reg{B_i}$ and then measure it in the computational basis. Record the outcome as $b_i \in \bit$.
    \end{itemize}

    Let $a = a_1\dots a_{m_\bob} \in \bit^{m_\bob}$ and $b = b_1\dots b_{m_\bob} \in \bit^{m_\bob}$. Observe that $\charlie$'s half the Bell states are now collapsed to the state $X^aZ^b \ket{\psi_r}_\reg{B}$. In other words, $\bob$ has teleported the state in a quantum one-time-padded form. The overall state of $\charlie$ is given by $\brac{X^a Z^b \otimes I_\bfC}\ket{\psi_r}_\reg{B C} = \brac{X^{\overline{a}} Z^{\overline{b}}}C\ket{r}\ket{\widetilde{b}}$ for some $\widetilde{b} \in \bit$, where $\overline{a} := a \| 0^{m-m_\bob}$ and $\overline{b} := b \| 0^{m-m_\bob}$. Since $C$ is a fixed Clifford circuit, there exist functions $f,g,h$ such that $C^\dagger \brac{X^{\overline{a}} Z^{\overline{b}}}C = e^{i\pi f(a,b)/2} X^{g(a,b)} Z^{h(a,b)}$. Hence, by applying $C^\dagger$, $\charlie$ obtains the state $e^{i\pi f(a,b)/2} X^{g(a,b)} Z^{h(a,b)} \ket{r} \ket{\widetilde{b}} = \ket{\Phi} \otimes \ket{\widetilde{b} \oplus g_m(a,b)}$, where $g_m$ is the $m$-th bit of $g$. Finally, \begin{itemize}
        \item $\bob$ computes $g_m(a,b)$ and outputs $g_m(a,b) \oplus b_S$.
        \item $\charlie$ measures the $m$-th qubit in the computational basis to obtain $\widetilde{b} \oplus g_m(a,b)$, and outputs $\widetilde{b} \oplus g_m(a,b) \oplus b_S$.
    \end{itemize}
    If $\widetilde{b} = 0$, then $(\bob, \charlie)$ output $(0,0)$ with probability $1/2$, whereas they never output $(0,0)$ if $\widetilde{b} = 1$. Thus, $(\bob, \charlie, \rho)$ is a $1/2$-simultaneous state distinguisher as desired.
\end{proof}

\renewcommand{\Haar}{\haar_{2^n}}

\section{Simultaneous Haar Indistinguishability}
\label{sec:haar_indist}

\noindent We show that SSI can be instantiated using Haar random states.

\subsection{Single-Copy Case}

\begin{theorem}[Simultaneous Haar Indistinguishability (SHI)] \label{conj:shi}
$(\eps,\haar_{2^n})$-simultaneous state indistinguishability holds for $\eps = O(1/2^{n/2})$.
\end{theorem}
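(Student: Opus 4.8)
The plan is to reduce the statement to a single Hilbert--Schmidt estimate and then close it with a two-level application of Cauchy--Schwarz; the one subtle point is making the bound uniform in the dimension of the adversary's shared entanglement. Write $d := 2^n$. Unpacking the definition, an adversary is $\adversary = (\bob,\charlie,\rho_{\reg{BC}})$; fixing the output pair to $(1,1)$ (every other pair is handled the same way after complementing $M$ or $N$) and letting $M$ on $\reg{BB'}$ and $N$ on $\reg{CC'}$ be the $1$-outcome POVM elements of $\bob$ and $\charlie$, the distinguishing advantage is $\abs{\tr\bracS{(M\ot N)(\rho\ot\Delta)}}$, where $\Delta := \E_{\ket{\psi}\from\haar_d}\psi\ot\psi - (I/d)\ot(I/d)$ lives on the two input registers $\reg{B'}\reg{C'}$. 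Since this is linear in each of $M$, $N$, $\rho$, I may take $M,N$ projective (the same spectral-decomposition argument as \cref{lem:ssi_proj}) and $\rho = \ketbraX{\Omega}$ pure. By \cref{eq:sym_sub} and \cref{lem:sum_perm} with $t=2$, $\E_{\ket\psi}\psi\ot\psi = \SymmSub(\reg{B'C'})/\binom{d+1}{2} = (I\ot I + F)/(d(d+1))$ with $F$ the swap on $\reg{B'C'}$, hence $\Delta = (dF - I\ot I)/(d^2(d+1))$. Splitting the advantage into the $F$-piece and the $I\ot I$-piece, the latter equals $\frac{1}{d^2(d+1)}\abs{\tr\bracS{(\tr_{\reg{B'}}M\ot \tr_{\reg{C'}}N)\rho}}$, which is at most $\frac{d^2}{d^2(d+1)}=O(1/d)$ because $\tr_{\reg{B'}}M\le d\,I$ and $\tr_{\reg{C'}}N\le d\,I$. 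So everything comes down to showing $\abs{\tr\bracS{(M\ot N)(\ketbraX{\Omega}\ot F)}}\le d$.

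For that term I would pass to the Schmidt decomposition $\ket{\Omega}=\sum_i\sqrt{\lambda_i}\ket{u_i}_{\reg B}\ket{v_i}_{\reg C}$ (extending $(u_i),(v_i)$ to orthonormal bases, with $\lambda_i=0$ on the padding) and contract the entanglement registers against $M\ot N$, the effect of $F$ being to glue $\bob$'s and $\charlie$'s input registers together. Setting $M_{ij}:=(\bra{u_i}\ot I_{\reg{B'}})M(\ket{u_j}\ot I_{\reg{B'}})$, an operator on $\reg{B'}$, and $N_{ij}$ analogously on $\reg{C'}$, and using $\tr\bracS{(X\ot Y)F}=\tr(XY)$, this should give the clean identity
\begin{align*}
\tr\bracS{(M\ot N)(\ketbraX{\Omega}\ot F)} = \sum_{i,j}\sqrt{\lambda_i\lambda_j}\,\tr(M_{ij}N_{ij}),
\end{align*}
which is the $t=1$ case of the tensor-network picture sketched in the overview; the only fiddly bookkeeping is tracking which index gets transposed by the $F$-contraction.

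Now I apply Cauchy--Schwarz twice. Per block, $\abs{\tr(M_{ij}N_{ij})}\le\ellpnorm{M_{ij}}{2}\,\ellpnorm{N_{ij}}{2}$; then, treating $\sqrt{\lambda_i\lambda_j}$ as a weight,
\begin{align*}
\sum_{i,j}\sqrt{\lambda_i\lambda_j}\,\ellpnorm{M_{ij}}{2}\ellpnorm{N_{ij}}{2} \le \Big(\sum_{i,j}\sqrt{\lambda_i\lambda_j}\,\ellpnorm{M_{ij}}{2}^2\Big)^{1/2}\Big(\sum_{i,j}\sqrt{\lambda_i\lambda_j}\,\ellpnorm{N_{ij}}{2}^2\Big)^{1/2}.
\end{align*}
The key estimate is that each factor is at most $d$: reading off matrix entries gives $\sum_{i,j}\sqrt{\lambda_i\lambda_j}\,\ellpnorm{M_{ij}}{2}^2 = \ellpnorm{(\rho_{\reg B}^{1/4}\ot I_{\reg{B'}})\,M\,(\rho_{\reg B}^{1/4}\ot I_{\reg{B'}})}{2}^2$, where $\rho_{\reg B}=\sum_i\lambda_i\ketbraX{u_i}=\tr_{\reg C}\ketbraX{\Omega}$, and since $0\le M\le I$ we have $(\rho_{\reg B}^{1/4}\ot I)M(\rho_{\reg B}^{1/4}\ot I)\le \rho_{\reg B}^{1/2}\ot I$ in Loewner order, so by monotonicity of $\ellpnorm{\cdot}{2}$ (\cref{lem:monotone} with a one-dimensional second register) this norm is at most $\ellpnorm{\rho_{\reg B}^{1/2}\ot I_{\reg{B'}}}{2}^2 = \tr(\rho_{\reg B})\cdot d = d$; likewise for $N$ with $\rho_{\reg C}$. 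Hence $\abs{\tr\bracS{(M\ot N)(\ketbraX{\Omega}\ot F)}}\le\sqrt d\cdot\sqrt d = d$, and assembling the two pieces yields distinguishing advantage $\le \tfrac{2}{d+1}=O(2^{-n})$ --- in fact a little stronger than the claimed $O(2^{-n/2})$, and tight up to constants in view of the $\Omega(1/d)$ attack.

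\paragraph{The main difficulty.} The genuinely non-routine point is obtaining a bound that does not degrade with $\dim\reg B$. The tempting shortcut --- collapse $\abs{\sum_{i,j}\sqrt{\lambda_i\lambda_j}\,\tr(M_{ij}N_{ij})}$ into a single Cauchy--Schwarz, recognizing it as the Hilbert--Schmidt inner product of $M$ with a partial transpose of $N$ after absorbing the Schmidt weights into one side --- loses a factor of order $\sqrt{\dim\reg B}$, because the (unnormalized) Schmidt weights inflate the Frobenius norm of the modified operator. The two-stage argument above is exactly what avoids this: the symmetric weight $\sqrt{\lambda_i\lambda_j}$ reassembles into the $\tfrac14$-th power of the \emph{normalized} reduced state $\rho_{\reg B}$, whose Frobenius norm is pinned down by $\tr(\rho_{\reg B})=1$, so the entanglement dimension drops out entirely. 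I would expect the rest of the argument (the register contraction, the $\binom{d+1}{2}$ bookkeeping, the partial-trace bounds) to be routine.
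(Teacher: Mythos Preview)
Your argument is correct, and in fact sharper than the paper's: you obtain $O(1/d)$ for arbitrary shared entanglement, whereas the paper only proves $O(1/\sqrt{d})$ in that case (it gets $O(1/d)$ only when the shared state is maximally entangled). The routes diverge exactly at the step you flag as the main difficulty. The paper writes $\ket{\Omega} = (\sigma^{1/2}\ot V)\ket{\Omega_{\max}}$, absorbs all of $\sigma^{1/2}$ into $M$ to form $\tilde M = (\sigma^{1/2}\ot I)M(\sigma^{1/2}\ot I)$, reduces via the tensor-network identity to $\Tr(\tilde M\cdot N^{\top_{\reg{C_1}}})$, and then inserts resolutions of identity on the \emph{input} registers followed by an asymmetric Cauchy--Schwarz; this yields $d\sqrt{d}$ and hence a final bound of $O(1/\sqrt{d})$. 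Your two-level Cauchy--Schwarz instead keeps the Schmidt weights split symmetrically as $\sqrt{\lambda_i\lambda_j}$, so that each of the two factors reassembles into $\|(\rho_E^{1/4}\ot I)P(\rho_E^{1/4}\ot I)\|_2^2$ for $P\in\{M,N\}$, which is controlled by $\tr(\rho_E)=1$ rather than by $\dim\reg{B}$. This is precisely the ``more refined and involved analysis'' the paper's overview alludes to not finding, and it closes (for $t=1$) the gap the paper leaves open between the EPR and non-EPR cases. The remaining bookkeeping --- the $\binom{d+1}{2}$ normalization, the $I\ot I$ piece, the block identity $\tr[(M\ot N)(\ketbraX{\Omega}\ot F)] = \sum_{i,j}\sqrt{\lambda_i\lambda_j}\tr(M_{ij}N_{ij})$ --- is routine and matches the paper's setup.
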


\begin{proof}
    Let $d=2^n$ be the dimension of the Haar states. Consider a non-local adversary $\alice = (\bob, \charlie, \psi)$, where
    $\bob$ and $\charlie$ share an entangled state $\ket{\psi}_{\reg{B_1 C_1}}$. Bob receives a state $\ket{b}_{\reg{B_2}}$ and Charlie receives a state $\ket{c}_{\reg{C_2}}$ from the referee. By \Cref{lem:ssi_proj}, we can assume that Bob applies a projective measurement $M_{\reg{B_1 B_2}}$ and Charlie applies a projective measurement $N_{\reg{C_1 C_2}}$. Let $p(\ket{b},\ket{c})$ denote the probability that both Bob's and Charlie's measurements accept:
\[
    p(\ket{b},\ket{c}) := \Tr( (M \ot N)( \psi \ot b \ot c))~.
\]
Therefore, the advantage we need to bound is given by
\[
    \max_{M,N} \abs{ \E_{\ket{\theta} \from \Haar} p(\ket{\theta},\ket{\theta}) - \E_{\ket{b},\ket{c} \from \Haar} p(\ket{b},\ket{c}) }
\]
where in the second expectation, the states $\ket{b},\ket{c}$ are sampled independently from the Haar measure. We will show that this is bounded by $O(1/\sqrt{d})$.

Note that by \cref{eq:sym_sub} we have
\[
    \E_{\ket{\theta} \from \Haar} \theta \ot \theta = \frac{\SymmSub(\reg{B_2 C_2)}}{\dim(\SymmSubspace(\reg{B_2 C_2}))},
\]
and also
\[
    \E_{\ket{b},\ket{c} \from \Haar} b \ot c = \frac{\id_{\reg{B_2}}}{\dim(\reg{B_2})} \ot \frac{\id_{\reg{C_2}}}{\dim(\reg{C_2})}~.
\]
\noindent
By assumption, $d = \dim(\reg{B_2}) = \dim(\reg{C_2})$. Then $\dim(\SymmSubspace(\reg{B_2 C_2})) = d(d + 1)/2$. Furthermore, by \Cref{lem:sum_perm} we have 
\[
\SymmSub(\reg{B_2 C_2)} = \frac{1}{2}( \id_{\reg{B_2}} \ot \id_{\reg{C_2}} + F_{\reg{B_2 C_2}})
\]
where $F_{\reg{B_2 C_2}}$ denotes the swap operator on registers $\reg{B_2}, \reg{C_2}$. Thus we can rewrite our advantage as follows:
\begin{align*}
    &\max_{M,N} \abs{ \Tr \Big( (M \ot N) \, \left \{ \psi_{\reg{B_1 C_1}} \ot \Big ( \frac{ \id_{\reg{B_2}} \ot \id_{\reg{C_2}} + F_{\reg{B_2 C_2}}}{d (d + 1)} -  \frac{\id_{\reg{B_2}} \ot \id_{\reg{C_2}}}{d^2 } \Big) \right \} \Big)} \\
    &\leq \max_{M,N} \abs{ \Tr \Big( (M \ot N) \, \left \{ \psi_{\reg{B_1 C_1}} \ot \Big ( \frac{ \id_{\reg{B_2}} \ot \id_{\reg{C_2}} + F_{\reg{B_2 C_2}}}{d^2} -  \frac{\id_{\reg{B_2}} \ot \id_{\reg{C_2}}}{d^2} \Big) \right \} \Big)} + O\brac{\frac{1}{d}} \\
    &=\max_{M,N} \frac{1}{d^2} \Tr \Big( (M \ot N) \, \left \{ \psi_{\reg{B_1 C_1}} \ot  F_{\reg{B_2 C_2}}  \right \} \Big) + O\brac{\frac{1}{d}}
\end{align*}
Using the Schmidt decomposition, we can write
\[
    \ket{\psi}_{\reg{B_1 C_1}} = (\sigma^{1/2}_{\reg{B_1}} \ot V_{\reg{C_1}}) \ket{\Omega}_{\reg{B_1 C_1}}
\]
where $\sigma_{\reg{B_1}}$ is a density matrix, $V_{\reg{C_1}}$ is a unitary operator and $\ket{\Omega}$ is the \emph{unnormalized} maximally entangled state on registers $\reg{B_1 C_1}$. Without loss of generality we can assume that $V = \id$, because we can always conjugate $N$ by this unitary. Therefore for fixed $M, N$ we can rewrite the above trace as
\[
    \frac{1}{d^2} \Tr \Big( (\tilde{M} \ot N) \, \left \{ \Omega_{\reg{B_1 C_1}} \ot  F_{\reg{B_2 C_2}}  \right \} \Big)
\]
where 
\[
    \tilde{M}_{\reg{B_1 B_2}} := (\sigma^{1/2}_{\reg{B_1}} \ot \id_{\reg{B_2}}) M (\sigma^{1/2}_{\reg{B_1}} \ot \id_{\reg{B_2}})~.
\]
One can verify using tensor network diagrams %\footnote{Here we set $B_1 = \reg{B_1}, C_1 = \reg{C_1}, B_2^{(0)} = \reg{B_2}$, $C_2^{(0)} = \reg{C_2}$, and $B_2^{(1)},C_2^{(1)}$ as empty in \Cref{figure:tensor}. } 
(see \Cref{figure:tensor} for a graphical proof) that 
\begin{equation}
    \label{eq:tensor}
    \Tr \Big( (\tilde{M} \ot N) \, \left \{ \Omega_{\reg{B_1 C_1}} \ot  F_{\reg{B_2 C_2}}  \right \} \Big) = \Tr( \tilde{M} \cdot N^{\top_{\reg{C_1}}})
\end{equation}
where $N^{\top_{\reg{C_1}}}$ denotes the \emph{partial transpose} of the operator $N$ on register $\reg{C_1}$ and the matrix multiplication ($\cdot$) connects registers $\reg{B_1}$ with $\reg{C_1}$ and $\reg{B_2}$ with $\reg{C_2}$. 

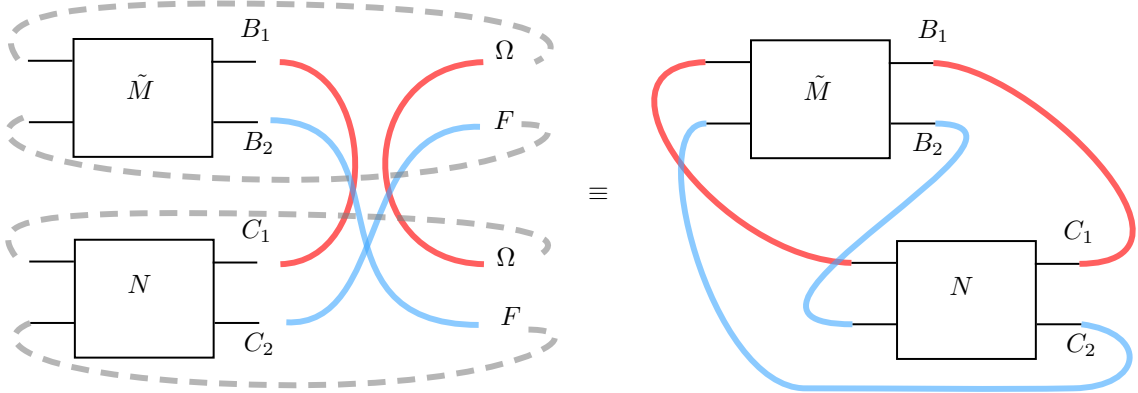
\begin{figure}[!htb]
\begin{center}
\tikzset{every picture/.style={line width=0.75pt}} %set default line width to 0.75pt        

\begin{tikzpicture}[x=0.75pt,y=0.75pt,yscale=-1,xscale=1]
%uncomment if require: \path (0,236); %set diagram left start at 0, and has height of 236

%Shape: Rectangle [id:dp9670374165317903] 
\draw   (43,38) -- (113,38) -- (113,97.5) -- (43,97.5) -- cycle ;
%Straight Lines [id:da4604467775075882] 
\draw    (112.5,49.5) -- (135,49.5) ;
%Straight Lines [id:da6185321082955797] 
\draw    (113.5,80) -- (136,80) ;
%Straight Lines [id:da2930698042722306] 
\draw    (20,49) -- (42.5,49) ;
%Straight Lines [id:da35224710999234254] 
\draw    (20.5,80) -- (43,80) ;
%Shape: Rectangle [id:dp8824407768443261] 
\draw   (43.67,139.33) -- (113.67,139.33) -- (113.67,198.83) -- (43.67,198.83) -- cycle ;
%Straight Lines [id:da2188752629890578] 
\draw    (113.17,150.83) -- (135.67,150.83) ;
%Straight Lines [id:da9116245734326223] 
\draw    (114.17,181.33) -- (136.67,181.33) ;
%Straight Lines [id:da6205215825228256] 
\draw    (20.67,150.33) -- (43.17,150.33) ;
%Straight Lines [id:da8570246309381442] 
\draw    (21.17,181.33) -- (43.67,181.33) ;
%Curve Lines [id:da6676824934132726] 
\draw [color={rgb, 255:red, 255; green, 0; blue, 0 }  ,draw opacity=0.63 ][line width=2.25]    (147,49.67) .. controls (196.67,50.33) and (198.67,150.33) .. (147.33,151.67) ;
%Curve Lines [id:da5163394031538135] 
\draw [color={rgb, 255:red, 255; green, 0; blue, 0 }  ,draw opacity=0.63 ][line width=2.25]    (249.67,49.67) .. controls (184,51.67) and (183.33,150.33) .. (250,151.67) ;
%Curve Lines [id:da3138966528497702] 
\draw [color={rgb, 255:red, 65; green, 169; blue, 255 }  ,draw opacity=0.61 ][line width=2.25]    (142.33,79) .. controls (214,79.67) and (157.33,182.33) .. (247.33,182.33) ;
%Curve Lines [id:da3631876637109699] 
\draw [color={rgb, 255:red, 65; green, 169; blue, 255 }  ,draw opacity=0.61 ][line width=2.25]    (150.33,181) .. controls (202.67,183.67) and (185.33,80.33) .. (248,82.33) ;
%Curve Lines [id:da3486671972971016] 
\draw [color={rgb, 255:red, 127; green, 127; blue, 127 }  ,draw opacity=0.58 ][line width=2.25]  [dash pattern={on 6.75pt off 4.5pt}]  (15,48.33) .. controls (1,29) and (24.33,18.33) .. (145,20.33) .. controls (265.67,22.33) and (301.67,37.67) .. (275.67,49.67) ;
%Curve Lines [id:da15361421928583097] 
\draw [color={rgb, 255:red, 127; green, 127; blue, 127 }  ,draw opacity=0.58 ][line width=2.25]  [dash pattern={on 6.75pt off 4.5pt}]  (17.67,77.67) .. controls (-5.67,87) and (19.67,107) .. (140.33,109) .. controls (261,111) and (308.33,81) .. (268.33,81) ;
%Curve Lines [id:da3283051507870405] 
\draw [color={rgb, 255:red, 127; green, 127; blue, 127 }  ,draw opacity=0.58 ][line width=2.25]  [dash pattern={on 6.75pt off 4.5pt}]  (13.67,147.67) .. controls (-0.33,128.33) and (23,124.33) .. (143.67,126.33) .. controls (264.33,128.33) and (300.33,137) .. (274.33,149) ;
%Curve Lines [id:da827305123357905] 
\draw [color={rgb, 255:red, 127; green, 127; blue, 127 }  ,draw opacity=0.58 ][line width=2.25]  [dash pattern={on 6.75pt off 4.5pt}]  (21.17,181.33) .. controls (-2.17,190.67) and (23.17,210.67) .. (143.83,212.67) .. controls (264.5,214.67) and (311.83,184.67) .. (271.83,184.67) ;
%Shape: Rectangle [id:dp4201060405373076] 
\draw   (384.67,38.67) -- (454.67,38.67) -- (454.67,98.17) -- (384.67,98.17) -- cycle ;
%Straight Lines [id:da18729160664911215] 
\draw    (454.17,50.17) -- (476.67,50.17) ;
%Straight Lines [id:da10875146075209563] 
\draw    (455.17,80.67) -- (477.67,80.67) ;
%Straight Lines [id:da1778305888349303] 
\draw    (361.67,49.67) -- (384.17,49.67) ;
%Straight Lines [id:da08906091778366876] 
\draw    (362.17,80.67) -- (384.67,80.67) ;
%Shape: Rectangle [id:dp7665085789977166] 
\draw   (458.33,140) -- (528.33,140) -- (528.33,199.5) -- (458.33,199.5) -- cycle ;
%Straight Lines [id:da7494468005288735] 
\draw    (527.83,151.5) -- (550.33,151.5) ;
%Straight Lines [id:da6472078404859563] 
\draw    (528.83,182) -- (551.33,182) ;
%Straight Lines [id:da7152236315528111] 
\draw    (435.33,151) -- (457.83,151) ;
%Straight Lines [id:da6995515383938113] 
\draw    (435.83,182) -- (458.33,182) ;
%Curve Lines [id:da4311891097736469] 
\draw [color={rgb, 255:red, 255; green, 0; blue, 0 }  ,draw opacity=0.63 ][line width=2.25]    (476.67,50.17) .. controls (526.33,50.83) and (624.67,151) .. (550.33,151.5) ;
%Curve Lines [id:da17582725754762984] 
\draw [color={rgb, 255:red, 255; green, 0; blue, 0 }  ,draw opacity=0.63 ][line width=2.25]    (361.67,49.67) .. controls (296,51.67) and (368.67,149.67) .. (435.33,151) ;
%Curve Lines [id:da8470842371385296] 
\draw [color={rgb, 255:red, 65; green, 169; blue, 255 }  ,draw opacity=0.61 ][line width=2.25]    (477.67,80.67) .. controls (549.33,81.33) and (345.83,182) .. (435.83,182) ;
%Curve Lines [id:da8648249100602949] 
\draw [color={rgb, 255:red, 65; green, 169; blue, 255 }  ,draw opacity=0.61 ][line width=2.25]    (551.33,182) .. controls (585.38,186.39) and (583.33,213.67) .. (547.33,214.33) .. controls (511.33,215) and (460,214.33) .. (412,214.33) .. controls (364,214.33) and (329.49,79.62) .. (362.17,80.67) ;

% Text Node
\draw (68,54.4) node [anchor=north west][inner sep=0.75pt]    {$\tilde{M}$};
% Text Node
\draw (125.5,26.4) node [anchor=north west][inner sep=0.75pt]    {$B_{1}$};
% Text Node
\draw (126.75,83.4) node [anchor=north west][inner sep=0.75pt]    {$B_{2}$};
% Text Node
\draw (68.67,155.73) node [anchor=north west][inner sep=0.75pt]    {$N$};
% Text Node
\draw (126.17,127.73) node [anchor=north west][inner sep=0.75pt]    {$C_{1}$};
% Text Node
\draw (127.42,184.73) node [anchor=north west][inner sep=0.75pt]    {$C_{2}$};
% Text Node
\draw (254.17,37.73) node [anchor=north west][inner sep=0.75pt]    {$\Omega $};
% Text Node
\draw (256.83,171.73) node [anchor=north west][inner sep=0.75pt]    {$F$};
% Text Node
\draw (253.5,73.07) node [anchor=north west][inner sep=0.75pt]    {$F$};
% Text Node
\draw (254.83,141.73) node [anchor=north west][inner sep=0.75pt]    {$\Omega $};
% Text Node
\draw (409.67,55.07) node [anchor=north west][inner sep=0.75pt]    {$\tilde{M}$};
% Text Node
\draw (467.17,27.07) node [anchor=north west][inner sep=0.75pt]    {$B_{1}$};
% Text Node
\draw (464.42,84.07) node [anchor=north west][inner sep=0.75pt]    {$B_{2}$};
% Text Node
\draw (483.33,156.4) node [anchor=north west][inner sep=0.75pt]    {$N$};
% Text Node
\draw (540.83,128.4) node [anchor=north west][inner sep=0.75pt]    {$C_{1}$};
% Text Node
\draw (542.08,185.4) node [anchor=north west][inner sep=0.75pt]    {$C_{2}$};
% Text Node
\draw (301,111.73) node [anchor=north west][inner sep=0.75pt]    {$\equiv $};

\end{tikzpicture}
\end{center} 
\caption{
A graphical proof of the equivalence of \Cref{eq:tensor}. On the left is a tensor network diagram depicting the trace inner product between $\tilde{M} \ot N$ and $\Omega \ot F$; the grey dashed lines denote the trace; the red wires denote the (unnormalized) maximally entangled state $\Omega$, and the blue wires denote the swap operation $F$. By following the wire connections we deduce the tensor network diagram on the right, which is the same as the trace inner product between $\tilde{M}$ and the partial transpose of $N$ on register $\reg{C}_1$. 
}
\label{figure:tensor}
\end{figure} 

For now let's suppose that $\sigma = \id_{\reg{B_1}}/\dim(\reg{B_1})$ (i.e., the state $\ket{\psi}$ is the \emph{normalized} maximally entangled state). Then $\tilde{M} = M/(\dim \reg{B_1})$ and we can thus upper-bound the trace by
\[
    \Tr( \tilde{M} \cdot N^{\top_{\reg{C_1}}}) \leq \| \tilde{M} \|_2 \cdot \| N^{\top_{\reg{C_1}}} \|_2 = \frac{1}{\dim(\reg{B_1})} \| M \|_2 \cdot \| N \|_2
\]
by the Cauchy-Schwarz inequality and by the fact that the partial transpose does not increase the Hilbert-Schmidt norm. But now observe that 
\[
    \| M \|_2 \cdot \| N\|_2 \leq \dim(\reg{B_1}) \cdot d~.
\]
Therefore the maximum advantage is bounded from above by $1/d + O(1/d) = O(1/d)$.

\par What if $\ket{\psi}$ is not maximally entangled? We can rewrite things as follows: using that $(N^\top)^{\top_{\reg{C_2}}} = N^{\top_{\reg{C_1}}}$, 
\begin{align*}
    \Tr( \tilde{M} \cdot N^{\top_{\reg{C_1}}}) &= \Tr( (\sigma^{1/2} \ot \id) M  (\sigma^{1/2} \ot \id)(N^\top)^{\top_{\reg{C_2}}})~.   
\end{align*}
Now we insert resolutions of the identity on the $\reg{B_2},\reg{C_2}$

registers:
\begin{align*}
 \Tr( \tilde{M} \cdot N^{\top_{\reg{C_1}}}) &= \sum_{i,j} \Tr(  (\sigma^{1/2} \ot \ketbra{i}{i}) M (\sigma^{1/2} \ot \ketbra{j}{j}) \Big({N}^\top  \Big)^{\top_{\reg{C_2}}}) \\
 &= \sum_{i,j} \Tr(  (\sigma^{1/2} \ot \ketbra{i}{i}) M (\id \otimes \ketbra{j}{j}) (\sigma^{1/2} \ot \ketbra{j}{j}) \Big({N}^\top  \Big)^{\top_{\reg{C_2}}} (\id \otimes \ketbra{i}{i}))
\end{align*}
Note that
\[
    (\sigma^{1/2} \otimes \ketbra{j}{j}) \Big({N}^\top  \Big)^{\top_{\reg{C_2}}} (\id \otimes \ketbra{i}{i}) = (\sigma^{1/2} \otimes \ketbra{j}{i}) {N}^\top  (\id \otimes \ketbra{j}{i})~.
\]
Therefore by Cauchy-Schwarz we have 
\begin{align}
   \Tr( \tilde{M} \cdot N^{\top_{\reg{C_1}}}) &\leq \sum_{i,j} \Big \| (\sigma^{1/2} \ot \ketbra{i}{i}) M (\id \ot \ketbra{j}{j}) \Big \|_2 \cdot \Big \| (\sigma^{1/2} \ot \ketbra{j}{i}) {N}^\top  (\id \ot \ketbra{j}{i})  \Big \|_2~. \label{eq:pt}
\end{align}
We bound the second factor:
\begin{align*}
    \Big \| (\sigma^{1/2} \ot \ketbra{j}{i}) {N}^\top  (\id \ot \ketbra{j}{i})  \Big \|_2^2 &= \Tr \Big( (\sigma \ot \ketbra{i}{i}) {N}^\top (\id \otimes \ketbra{j}{j}) \overline{N}  \Big) \leq 1
\end{align*}
where in the equality we used the fact that $N^\top$ is a projection and therefore ${N}^\top (\id \otimes \ketbra{j}{j}) \overline{N}$ has operator norm at most $1$. 

Continuing we have that
\begin{align*}
    \Tr( \tilde{M} \cdot N^{\top_{\reg{C_1}}}) &\leq \sum_{i,j} \Big \| (\sigma^{1/2} \ot \ketbra{i}{i}) M (\id \ot \ketbra{j}{j}) \Big \|_2 \\
    &\leq d \sqrt{ \sum_{i,j} \Big \| (\sigma^{1/2} \ot \ketbra{i}{i}) M (\id \ot \ketbra{j}{j}) \Big \|_2^2 } \\
    &= d \sqrt{ \sum_{i,j} \Tr \Big( (\sigma \ot \ketbra{i}{i}) M (\id \ot \ketbra{j}{j}) M \Big ) } \\
    &= d \sqrt{ \sum_{i} \Tr \Big( (\sigma \ot \ketbra{i}{i}) M \Big ) } \\
    &\leq d \, \sqrt{d}~.
\end{align*}
In the last inequality, we used the fact that $\Tr \Big( (\sigma \ot \ketbra{i}{i}) M \Big ) \leq 1$ for all $i$. 

Putting everything together, this yields that the maximum advantage satisfies the bound
\[
     \max_{M,N} \abs{ \E_{\ket{\theta} \from \Haar} p(\ket{\theta},\ket{\theta}) - \E_{\ket{b},\ket{c} \from \Haar} p(\ket{b},\ket{c})} \leq \frac{1}{\sqrt{d}} + O\brac{\frac{1}{d}} = O \Big (\frac{1}{\sqrt{d}} \Big)~.
\]

\end{proof}
\begin{remark} \label{rem:unbounded_ent}
    The entangled state held by $\alice$ above can have arbitrary dimension. In particular, the proof shows that an arbitrary number of EPR pairs has no asymptotic advantage in distinguishing identical vs. independent Haar states, for the advantage is $O(1/d)$ in both cases (see \Cref{sec:haar_lb} for the lower bound).
\end{remark}

\subsection{Generalization to Many Copies}
We generalize \Cref{conj:shi} to show indistinguishability when the non-local distinguisher is given many copies of the input state. The proof has the same overall structure, but requires some additional ideas. We start by listing a useful lemma.

\begin{lemma}\label{lem:perm}
    Let $S_{2t}^{(s)} \subset S_{2t}$ be the set of permutations $\sigma$ over $[2t]$ such that $$ \abs{ \bracC{i \in [t] \; : \; \sigma(i) \notin [t]} } = s. $$ Define $P_\sigma$ as the permutation operator over registers $\reg{X_1 \dots X_t Y_1 \dots Y_t}$, where we match $[t]$ with $\reg{X_1 \dots X_t}$ and $[2t]\setminus [t]$ with $\reg{Y_1 \dots Y_t}$. Then, \begin{align*}
        \sum_{\sigma \in S_{2t}^{(s)}} P_\sigma = (t!)^2 \binom{t}{s}^2 \brac{\SymmSub(\reg{X_1\dots X_t}) \ot \SymmSub(\reg{Y_1 \dots Y_t})} P_{\sigma_s} \brac{\SymmSub(\reg{X_1\dots X_t}) \ot \SymmSub(\reg{Y_1 \dots Y_t})},
    \end{align*}
    where $\sigma_s \in S_{2t}^{(s)}$ is the permutation that swaps $i$ with $t+i$ for $i \in [s]$.
\end{lemma}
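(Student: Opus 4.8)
The plan is to reduce the claimed identity to a purely group-theoretic statement about permutation operators on $\cH^{\otimes 2t}$ and then match coefficients via a double-coset count. Write $\SymmSub_X := \SymmSub(\reg{X_1\dots X_t})$ and $\SymmSub_Y := \SymmSub(\reg{Y_1\dots Y_t})$, and let $G \le S_{2t}$ be the block-diagonal subgroup consisting of permutations that preserve the partition $[2t] = [t]\sqcup\{t+1,\dots,2t\}$ without exchanging the two blocks; concretely $G = \bracC{\pi \times \rho : \pi,\rho \in S_t}$, where $\pi\times\rho$ acts as $\pi$ on $[t]$ and as the shift of $\rho$ on $\{t+1,\dots,2t\}$, so that $P_{\pi\times\rho} = P_\pi \ot P_\rho$ and $|G| = (t!)^2$. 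By \Cref{lem:sum_perm}, $\SymmSub_X \ot \SymmSub_Y = \frac{1}{(t!)^2}\sum_{g \in G} P_g$, hence the right-hand side of the lemma equals $\frac{\binom{t}{s}^2}{(t!)^2} \sum_{g,g' \in G} P_g\, P_{\sigma_s}\, P_{g'}$; writing $P_g P_{\sigma_s} P_{g'} = P_{w(g,g')}$, the permutation $w(g,g')$ ranges over the double coset $G \sigma_s G$.

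Next I would establish the two facts needed to conclude. First, $G\sigma_s G = S_{2t}^{(s)}$: the inclusion ``$\subseteq$'' is immediate, since composing $\sigma_s$ on either side with a block-preserving permutation cannot change the number of indices of $[t]$ whose image lies outside $[t]$, and $\sigma_s$ has exactly $s$ such indices; the inclusion ``$\supseteq$'' is the classical normal-form fact for double cosets of a Young subgroup — given $\tau \in S_{2t}^{(s)}$, permute the source within each block so that the $s$ ``crossing'' indices of $[t]$ become $[s]$ and the $s$ crossing indices of $\{t+1,\dots,2t\}$ become $\{t+1,\dots,t+s\}$, and permute the target within each block to line up the corresponding images, after which $\tau$ is turned into $\sigma_s$. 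Second, for fixed $\tau \in G\sigma_s G$ the number of pairs $(g,g') \in G\times G$ with $w(g,g') = \tau$ is independent of $\tau$ and equals $|G|^2 / |G\sigma_s G|$; this is the standard orbit-counting argument (every fiber of the surjection $G\times G \to G\sigma_s G$ is a translate of the fiber over $\sigma_s$, hence has the same cardinality).

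Finally I would count $\abs{S_{2t}^{(s)}} = (t!)^2 \binom{t}{s}^2$: the restriction $\sigma|_{[t]}$ is determined by choosing its image, a $t$-subset of $[2t]$ with exactly $s$ elements outside $[t]$ (there are $\binom{t}{s}\binom{t}{t-s} = \binom{t}{s}^2$ such subsets), together with a bijection of $[t]$ onto it ($t!$ choices); then $\sigma$ on $\{t+1,\dots,2t\}$ is an arbitrary bijection onto the complementary $t$-subset ($t!$ choices). Combining everything, each $P_\tau$ with $\tau \in S_{2t}^{(s)}$ occurs in $\sum_{g,g'} P_{w(g,g')}$ with multiplicity $(t!)^4 / \brac{(t!)^2\binom{t}{s}^2} = (t!)^2/\binom{t}{s}^2$, so the right-hand side equals $\frac{\binom{t}{s}^2}{(t!)^2} \cdot \frac{(t!)^2}{\binom{t}{s}^2} \sum_{\sigma \in S_{2t}^{(s)}} P_\sigma = \sum_{\sigma \in S_{2t}^{(s)}} P_\sigma$, which is the left-hand side.

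The routine-but-delicate points are: fixing a consistent convention for how $P_\sigma P_\tau$ composes (it is an anti-homomorphism, $P_\sigma P_\tau = P_{\tau\circ\sigma}$, but this is harmless since we sum over all of $G\times G$ and $G$ is a group), and carefully writing out the ``$\supseteq$'' half of $G\sigma_s G = S_{2t}^{(s)}$. The latter is the only place genuine combinatorics occurs and is the step I expect to be the main obstacle, but it is entirely standard once phrased as the double-coset decomposition of $S_{2t}$ relative to the Young subgroup $S_t \times S_t$; everything else is bookkeeping.
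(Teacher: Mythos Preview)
Your proposal is correct and is essentially the same argument as the paper's proof, just phrased more explicitly in double-coset language: the paper asserts in one line that every $\sigma \in S_{2t}^{(s)}$ factors as $(P_{\sigma_X}\ot P_{\sigma_Y})P_{\sigma_s}(P_{\sigma_X'}\ot P_{\sigma_Y'})$ and that uniform $(\sigma_X,\sigma_Y,\sigma_X',\sigma_Y')$ induces uniform $\sigma$, then counts $|S_{2t}^{(s)}|$ and concludes via \Cref{lem:sum_perm}. Your version unpacks these same two assertions as $G\sigma_s G = S_{2t}^{(s)}$ and equal fiber sizes, and supplies the details the paper leaves implicit.
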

\begin{proof}
    Every $\sigma \in S_{2t}^{(s)}$ can be written as $\brac{P_{\sigma_X} \ot P_{\sigma_Y}} P_{\sigma_s} \brac{P_{\sigma_X'}\ot P_{\sigma_Y'}} $ for some $\sigma_X,\sigma_Y,\sigma_X',\sigma_Y' \in S_t$. Moreover, picking $\sigma_X,\sigma_Y,\sigma_X',\sigma_Y'$ uniformly is equivalent to picking $\sigma$ uniformly. Observe by a counting argument that $$\abs{S_{2t}^{(s)}} = (t!)^2 \binom{t}{s}^2.$$
    Thus, by \Cref{lem:sum_perm} we have \begin{align*}
        \frac{1}{(t!)^2 \binom{t}{s}^2} \sum_{\sigma \in S_{2t}^{(s)}} P_\sigma &= \frac{1}{(t!)^4} \sum_{\sigma_X,\sigma_Y,\sigma_X',\sigma_Y' \in S_t} \brac{P_{\sigma_X} \ot P_{\sigma_Y}} P_{\sigma_s} \brac{P_{\sigma_X'}\ot P_{\sigma_Y'}} \\
        &= \brac{\SymmSub(\reg{X_1\dots X_t}) \ot \SymmSub(\reg{Y_1 \dots Y_t})} P_{\sigma_s} \brac{\SymmSub(\reg{X_1\dots X_t}) \ot \SymmSub(\reg{Y_1 \dots Y_t})}
    \end{align*}
\end{proof}

\begin{theorem}[$t$-Copy Simultaneous Haar Indistinguishability (SHI)] \label{thm:shi_many_copy} Let $\eps = o(1)$ and $t = \eps 2^{n/4}$. Let $\haar^t_{2^n}$ be the distribution defined by sampling $t$ copies of a state from $\haar_{2^n}$. Then, $(O(\eps^2),\haar^t_{2^n})$-simultaneous state indistinguishability holds. 
\end{theorem}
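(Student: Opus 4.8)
The plan is to follow the architecture of the proof of \Cref{conj:shi}, decomposing the relevant operator difference according to how many ``cross swaps'' a permutation on the $2t$ copy registers performs. Write $d = 2^n$; let $\reg{B_2} = \reg{B_2^{(1)}\cdots B_2^{(t)}}$ and $\reg{C_2} = \reg{C_2^{(1)}\cdots C_2^{(t)}}$ hold Bob's and Charlie's $t$ copies, and let $\bob,\charlie$ apply projective $M_{\reg{B_1 B_2}}, N_{\reg{C_1 C_2}}$ (\Cref{lem:ssi_proj}) with the entangled state $\ket\psi$ on $\reg{B_1 C_1}$. As in \Cref{conj:shi}, by \cref{eq:sym_sub} the advantage equals $\max_{M,N} \abs{\Tr[(M\ot N)(\psi \ot \Delta)]}$ with $\Delta := \tfrac{\SymmSub(\reg{B_2 C_2})}{\binom{d+2t-1}{2t}} - \tfrac{\SymmSub(\reg{B_2})}{\binom{d+t-1}{t}}\ot\tfrac{\SymmSub(\reg{C_2})}{\binom{d+t-1}{t}}$. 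Expanding each symmetric projector via \Cref{lem:sum_perm} and using $t!\binom{d+t-1}{t} = \prod_{j=0}^{t-1}(d+j)$, I get $\Delta = \sum_{s=1}^{t}\frac{1}{\prod_{j=0}^{2t-1}(d+j)}\sum_{\sigma\in S_{2t}^{(s)}}P_\sigma + c_0\sum_{\sigma\in S_{2t}^{(0)}}P_\sigma$, where $S_{2t}^{(s)}$ is as in \Cref{lem:perm} (matching $[t]\leftrightarrow\reg{B_2}$), $S_{2t}^{(0)}$ are the product permutations, and $c_0 = \frac{1}{\prod_{j=0}^{2t-1}(d+j)} - \frac{1}{(\prod_{j=0}^{t-1}(d+j))^2} < 0$. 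Since $\sum_{\sigma\in S_{2t}^{(0)}}P_\sigma = (t!)^2\,\SymmSub(\reg{B_2})\ot\SymmSub(\reg{C_2})$, the $s = 0$ term contributes at most $\abs{c_0}(t!)^2\binom{d+t-1}{t}^2 = 1 - \prod_{j=0}^{t-1}\frac{d+j}{d+t+j} \le t^2/d = o(\eps^2)$ to the advantage (Bernoulli inequality).

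For $s \ge 1$, \Cref{lem:perm} gives $\sum_{\sigma\in S_{2t}^{(s)}}P_\sigma = (t!)^2\binom{t}{s}^2\,(\SymmSub^B\ot\SymmSub^C)\,P_{\sigma_s}\,(\SymmSub^B\ot\SymmSub^C)$, with $\SymmSub^B = \SymmSub(\reg{B_2})$, $\SymmSub^C = \SymmSub(\reg{C_2})$, and $\sigma_s$ swapping $\reg{B_2^{(i)}}\leftrightarrow\reg{C_2^{(i)}}$ for $i\le s$ and fixing the rest. I would absorb $\SymmSub^B,\SymmSub^C$ into $M,N$ by cyclicity, absorb the factor $\sigma^{1/2}_{\reg{B_1}}$ from the Schmidt decomposition $\ket\psi = (\sigma^{1/2}_{\reg{B_1}}\ot V_{\reg{C_1}})\ket\Omega$ into $M$ (taking $V = \id$ WLOG, as in \Cref{conj:shi}), and note that $P_{\sigma_s}$ acts as the identity on the last $t-s$ copies, which are thereby partially traced out. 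Since $\sigma^{1/2}_{\reg{B_1}}$ commutes past $\Tr_{\reg{B_2^{(s+1)}\cdots B_2^{(t)}}}$, what remains is $\Tr[\hat M'\,(\hat N')^{\top_{\reg{C_1}}}]$ (with matrix multiplication pairing $\reg{B_1}\leftrightarrow\reg{C_1}$ and $\reg{B_2^{(i)}}\leftrightarrow\reg{C_2^{(i)}}$ for $i\le s$), where $\hat M' = (\sigma^{1/2}\ot\id)\,P\,(\sigma^{1/2}\ot\id)$ with $0\le P\le\alpha_s(\id_{\reg{B_1}}\ot\SymmSub(\reg{B_2^{(1)}\cdots B_2^{(s)}}))$, $\alpha_s := \binom{d+t-1}{t}/\binom{d+s-1}{s}$ (using \Cref{lem:sym_part_tr} and $M\le\id$), and similarly $\hat N' = Q$ with $0\le Q\le\alpha_s(\id_{\reg{C_1}}\ot\SymmSub(\reg{C_2^{(1)}\cdots C_2^{(s)}}))$. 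The rewriting $\Tr[(\hat M'\ot\hat N')(\Omega_{\reg{B_1C_1}}\ot\bigotimes_{i\le s}F_{\reg{B_2^{(i)}C_2^{(i)}}})] = \Tr[\hat M'(\hat N')^{\top_{\reg{C_1}}}]$ generalizes \cref{eq:tensor}; it follows in a line from $F_{\reg{B_2^{(i)}C_2^{(i)}}} = \Omega_{\reg{B_2^{(i)}C_2^{(i)}}}^{\top_{\reg{C_2^{(i)}}}}$, self-adjointness of the partial transpose under the trace pairing, and the standard identity $\Tr[(A\ot B)\Omega] = \Tr[AB^\top]$.

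The remaining task is to bound $\abs{\Tr[\hat M'(\hat N')^{\top_{\reg{C_1}}}]}$, and here I would replay the refined Cauchy--Schwarz of \Cref{conj:shi} at the level of the $d^s$-dimensional register $\reg{B_2^{(1)}\cdots B_2^{(s)}}$: insert resolutions of the identity on this register at the two slots where $\id$ appears in $(\sigma^{1/2}\ot\id)P(\sigma^{1/2}\ot\id)$, rearrange using the partial transpose, and split via Cauchy--Schwarz. The ``$Q$-factor'' is $\le\alpha_s$ (from $\opnorm{Q}\le\alpha_s$ and $\Tr\sigma = 1$), and the sum of the ``$P$-factors'' is $\le d^s\alpha_s\sqrt{\binom{d+s-1}{s}}$ (from $P^2\le\alpha_s P$, $\Tr[(\sigma\ot\id)P]\le\alpha_s\binom{d+s-1}{s}$, and a second Cauchy--Schwarz over the $d^{2s}$ index pairs), giving $\abs{\Tr[\hat M'(\hat N')^{\top_{\reg{C_1}}}]}\le\alpha_s^2 d^s\sqrt{\binom{d+s-1}{s}}$. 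Multiplying by $\frac{(t!)^2\binom{t}{s}^2}{\prod_{j=0}^{2t-1}(d+j)}$ and using $(t!)^2\binom{d+t-1}{t}^2 = (\prod_{j=0}^{t-1}(d+j))^2$, $\prod_{j=0}^{t-1}\frac{d+j}{d+t+j}\le 1$, and $\binom{d+s-1}{s}\ge d^s/s!$, the $s$-th term of the advantage is at most $\binom{t}{s}^2\,d^s\,\binom{d+s-1}{s}^{-3/2}\le\binom{t}{s}^2(s!)^{3/2}d^{-s/2}\le(t^2/\sqrt d)^s$. Summing over $s = 1,\dots,t$ and recalling $t^2/\sqrt d = \eps^2 = o(1)$ gives $\sum_{s\ge1}(\eps^2)^s\le 2\eps^2$; together with the $s = 0$ term this yields the claimed bound $O(\eps^2)$.

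The step I expect to be the main obstacle is the refined Cauchy--Schwarz of the last paragraph: keeping the symmetric-subspace projectors and the partial trace straight (in particular verifying $\hat M' = (\sigma^{1/2}\ot\id)P(\sigma^{1/2}\ot\id)$ with $P$ still bounded by $\alpha_s\SymmSub$, which is what decouples the bound from the dimension of the entanglement), and checking that the factors $\alpha_s$, $\binom{d+s-1}{s}$, and $(t!)^2\binom{t}{s}^2$ combine to exactly $(t^2/\sqrt d)^s$, so the series converges whenever $\eps = o(1)$. The generalized tensor-network identity and the $s = 0$ estimate are routine by comparison.
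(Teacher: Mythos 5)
Your proposal is correct and follows essentially the same route as the paper's proof: the same decomposition of $S_{2t}$ into the strata $S_{2t}^{(s)}$ via \Cref{lem:perm}, the same use of \Cref{lem:sym_part_tr} to trace out the $t-s$ uninvolved copies, the same generalization of the tensor-network identity \cref{eq:tensor}, and the same two-level Cauchy--Schwarz with resolutions of identity (mirroring \cref{eq:pt}) to decouple the bound from the dimension of the entanglement. The only differences are cosmetic: you keep exact binomial coefficients (e.g.\ $\alpha_s = \binom{d+t-1}{t}/\binom{d+s-1}{s}$) rather than the paper's $1+O(t^2/d)$ approximations, and your refinement $X^2 \le \alpha_s X$ in the inner Cauchy--Schwarz is slightly sharper than the paper's crude $X^2 \le \opnorm{X}^2 \id$ (giving $(s!)^{3/2}$ in place of $(s!)^2$), but both land at the same per-stratum bound $(t^2/\sqrt d)^s$ and the same conclusion.
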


\begin{proof}
    Let $d=2^n$ be the dimension of the Haar states. Consider a non-local adversary $\alice = (\bob, \charlie, \psi)$, where
    $\bob$ and $\charlie$ share an entangled state $\ket{\psi}_{\reg{B_1 C_1}}$. Bob receives a state $\ket{b}_{\reg{B_2}}$ and Charlie receives a state $\ket{c}_{\reg{C_2}}$ from the referee. By \Cref{lem:ssi_proj}, we can assume that Bob applies a projective measurement $M_{\reg{B_1 B_2}}$ and Charlie applies a projective measurement $N_{\reg{C_1 C_2}}$. Let $p(\ket{b},\ket{c})$ denote the probability that both Bob's and Charlie's measurements accept:
\[
    p(\ket{b},\ket{c}) := \Tr( (M \ot N)( \psi \ot b \ot c))~.
\]
Therefore, the advantage we need to bound is given by
\[
    \max_{M,N} \abs{ \E_{\ket{\theta} \from \Haar} p(\ket{\theta}^{\ot t},\ket{\theta}^{\ot t}) - \E_{\ket{\theta},\ket{\theta'} \from \Haar} p(\ket{\theta}^{\ot t},\ket{\theta'}^{\ot t}) }
\]
where in the second expectation, the states $\ket{\theta},\ket{\theta'}$ are sampled independently from the Haar measure. We will show that this is bounded by $d^{-\Omega(1)}$. For the $t$ copies of the Haar state, we can decompose the registers $\reg{B_2,C_2}$ as $\reg{B_2} = \reg{B_2^1\dots B_2^t}$ and $\reg{C_2} = \reg{C_2^1\dots C_2^t}$.

Note that by \cref{eq:sym_sub} we have
\[
    \E_{\ket{\theta} \from \Haar} \theta^{\ot t} \ot \theta^{\ot t} = \frac{\SymmSub(\reg{B_2 C_2)}}{\dim(\SymmSubspace(\reg{B_2 C_2}))}
\]
and
\[
    \E_{\ket{b},\ket{c} \from \Haar} b^{\ot t} \ot c^{\ot t} = \frac{\SymmSub({\reg{B_2}})}{\dim(\SymmSubspace(\reg{B_2}))} \ot \frac{\SymmSub({\reg{C_2}})}{\dim(\SymmSubspace(\reg{C_2}))}~.
\]

By assumption, $d = \dim(\reg{B_2^j}) = \dim(\reg{C_2^j})$ for $j \in [t]$. Then $\dim(\SymmSubspace(\reg{B_2 C_2})) = \binom{d+2t-1}{2t}$ and $\dim(\SymmSubspace(\reg{B_2})) = \dim(\SymmSubspace(\reg{C_2})) = \binom{d+t-1}{t}$. Furthermore,
\[
\SymmSub(\reg{B_2 C_2)} = \frac{1}{(2t)!} \sum_{\sigma \in S_{2t}} P_\sigma \quad \text{and} \quad \SymmSub(\reg{B_2}) = \frac{1}{(t)!} \sum_{\sigma \in S_{t}} P_{\sigma_B}, \quad \SymmSub(\reg{C_2}) = \frac{1}{(t)!} \sum_{\sigma \in S_{t}} P_{\sigma_C},
\]
where $P_\sigma$ is a permutation operator over $(2t)$ registers $B_2C_2$, and $P_{\sigma_B},P_{\sigma_C}$ are permutation operators over $(t)$ registers $B_2,C_2$, respectively.
Thus, using the collision bound we can rewrite our advantage as the following:
\begin{align*}
    &\max_{M,N} \abs{ \Tr \brac{ (M \ot N) \, \left \{ \psi_{\reg{B_1 C_1}} \ot \brac{ \frac{ \sum_{\sigma \in S_{2t}} P_\sigma }{d (d + 1)\dots(d+2t-1)} -  \frac{\sum_{\sigma_B,\sigma_C \in S_t} P_{\sigma_B}\ot P_{\sigma_C}}{d^2(d+1)^2\dots(d+t-1)^2 } } \right \} }} \\
    &\leq \max_{M,N} \abs{ \Tr \brac{ (M \ot N) \, \left \{ \psi_{\reg{B_1 C_1}} \ot \brac{ \frac{ \sum_{\sigma \in S_{2t}} P_\sigma }{d^{2t}} -  \frac{\sum_{\sigma_B,\sigma_C \in S_t} P_{\sigma_B}\ot P_{\sigma_C}}{d^{2t} } } \right \} }} + O\brac{ \frac{t^2}{d} } \\
    &=\max_{M,N} \frac{1}{d^{2t}} \Tr \Big( (M \ot N) \, \Big \{ \psi_\reg{B_1 C_1} \ot \sum_{\sigma \in S_{2t}^*} P_\sigma  \Big \} \Big) + O\brac{\frac{t^2}{d}},
\end{align*}
where $S_{2t}^*$ denotes the set of permutations $\sigma \in S_{2t}$ which \emph{cannot} be expressed as a product of permutations $\sigma_B,\sigma_C \in S_t$.
Using the Schmidt decomposition, we can write
\[
    \ket{\psi}_{\reg{B_1 C_1}} = (\sigma^{1/2}_{\reg{B_1}} \ot V_{\reg{C_1}}) \ket{\Omega}_{\reg{B_1 C_1}}
\]
where $\sigma_{\reg{B_1}}$ is a density matrix, $V_{\reg{C_1}}$ is a unitary operator and $\ket{\Omega}$ is the \emph{unnormalized} maximally entangled state on registers $\reg{B_1 C_1}$. Without loss of generality we can assume that $V = \id$, because we can always conjugate $N$ by this unitary. Therefore for fixed $M, N$ we can rewrite the above trace as
\[
    \frac{1}{d^{2t}} \Tr \Big( (\tilde{M} \ot N) \, \Big \{ \Omega_\reg{B_1 C_1} \ot \sum_{\sigma \in S_{2t}^*} P_\sigma  \Big \} \Big) 
\]
where 
\[
    \tilde{M}_{\reg{B_1 B_2}} := (\sigma^{1/2}_{\reg{B_1}} \ot \id_{\reg{B_2}}) M (\sigma^{1/2}_{\reg{B_1}} \ot \id_{\reg{B_2}})~.
\]

\par As in \Cref{lem:perm}, let $S_{2t}^{(s)}$ be the set of permutations over $2t$ registers ($\reg{B_2 C_2}$) which map $s$ registers from $\reg{B_2}$ to $\reg{C_2}$. Note that \begin{align*}
    |S_{2t}^{(s)}| = (t!)^2 \binom{t}{s}^2 \quad \text{and} \quad S_{2t}^* = \bigcup_{s=1}^t S_{2t}^{(s)}.
\end{align*}
Therefore for fixed $M, N$ we can rewrite the above trace as
\begin{align}
    &\frac{1}{d^{2t}} \Tr \Big( (\tilde{M} \ot N) \, \Big \{ \Omega_\reg{B_1 C_1} \ot \sum_{\sigma \in S_{2t}^*} P_\sigma  \Big \} \Big) = \frac{1}{d^{2t}} \sum_{s=1}^t \Tr \Big( (\tilde{M} \ot N) \, \Big \{ \Omega_\reg{B_1 C_1} \ot \sum_{\sigma \in S_{2t}^{(s)}} P_\sigma  \Big \} \Big) \nonumber \\
    &= \frac{(t!)^2}{d^{2t}} \sum_{s=1}^t \binom{t}{s}^2 \Tr \Big( (\tilde{M} \ot N) \, \Big \{ \Omega_\reg{B_1 C_1} \ot \brac{\SymmSub(\reg{B_2}) \ot \SymmSub(\reg{C_2})} P_{\sigma_s} \brac{\SymmSub(\reg{B_2}) \ot \SymmSub(\reg{C_2})} \Big \} \Big) \label{eq:x}
\end{align}
using \Cref{lem:perm}, where $P_{\sigma_s}$ swaps $\reg{B_2^{1}\dots B_2^s}$ with $\reg{C_2^{1}\dots C_2^s}$.
Define 
\[
\hat{M} := \brac{\id_{\reg{B_1}}\ot \SymmSub(\reg{B_2}) }\tilde{M} \brac{\id_{\reg{B_1}}\ot \SymmSub(\reg{B_2}) }, \quad \hat{N} := \brac{\id_{\reg{C_1}}\ot \SymmSub(\reg{C_2}) } N \brac{\id_{\reg{C_1}}\ot \SymmSub(\reg{C_2}) },
\]
then using cyclicity of trace we can rewrite \cref{eq:x} as \begin{align*}
    \frac{(t!)^2}{d^{2t}} \sum_{s=1}^t \binom{t}{s}^2 \Tr \Big( (\hat{M} \ot \hat{N}) \, \Big \{ \Omega_\reg{B_1 C_1} \ot P_{\sigma_s} \Big \} \Big).
\end{align*}
One can verify using tensor network diagrams\footnote{This follows by an easy generalization of \Cref{figure:tensor}.} that 
\begin{align*}
    \Tr \Big( (\hat{M} \ot \hat{N}) \, \Big \{ \Omega_\reg{B_1 C_1} \ot P_{\sigma_s} \Big \} \Big) = \Tr\brac{ \tr_{\reg{B_2^{[t]\setminus [s]}}}\hat{M} \cdot \tr_{\reg{C_2^{[t]\setminus [s]}}}(
 \hat{N})^{\top_{\reg{C_1}}}},
\end{align*}
where ${\top_{\reg{C_1}}}$ denotes the \emph{partial transpose} on register $\reg{C_1}$ and the matrix multiplication connects registers $\reg{B_1}$ with $\reg{C_1}$ and $\reg{B_2^{[s]}}$ with $\reg{C_2^{[s]}}$ according to $\sigma_s$. Suppose $\sigma_{\reg{B_1}} = \id_{\reg{B_1}}/\dim(\reg{B_1})$, i.e. $\psi$ is maximally entangled and $\tilde{M} = M/\dim(\reg{B_1})$. Note that \[
\hat{M} \le \frac{1}{\dim(\reg{B_1})} \brac{\id_{\reg{B_1}}\ot \SymmSub(\reg{B_2}) } \quad \text{and} \quad \hat{N} \le \id_{\reg{B_1}}\ot \SymmSub(\reg{C_2}).
\] Using Cauchy-Schwarz as well as \Cref{lem:monotone,lem:sym_part_tr}, we get \begin{align*}
    \abs{\Tr\brac{ \tr_{\reg{B_2^{[t]\setminus [s]}}}\hat{M} \cdot \tr_{\reg{C_2^{[t]\setminus [s]}}}(
 \hat{N})^{\top_{\reg{C_1}}}}} &\le \ellpnorm{\tr_{\reg{B_2^{[t]\setminus [s]}}}\hat{M}}{2} \ellpnorm{\tr_{\reg{C_2^{[t]\setminus [s]}}}(
 \hat{N})^{\top_{\reg{C_1}}}}{2} \\
 &= \ellpnorm{\tr_{\reg{B_2^{[t]\setminus [s]}}}\hat{M}}{2} \ellpnorm{\tr_{\reg{C_2^{[t]\setminus [s]}}}(
 \hat{N})}{2} \\
 &\le \frac{1}{\dim(\reg{B_1})}\ellpnorm{\id_{\reg{B_1}} \ot \tr_{\reg{B_2^{[t]\setminus [s]}}} \SymmSub(\reg{B_2})}{2}\ellpnorm{\id_{\reg{C_1}} \ot \tr_{\reg{C_2^{[t]\setminus [s]}}} \SymmSub(\reg{C_2})}{2} \\
 &\le \frac{\binom{d+t-1}{t}^2}{\dim(\reg{B_1})\binom{d+s-1}{s}^2} \ellpnorm{\id_{\reg{B_1}} \ot \SymmSub(\reg{B_2^{[s]}})}{2} \ellpnorm{\id_{\reg{C_1}} \ot \SymmSub(\reg{C_2^{[s]}})}{2} \\
 &= \frac{\binom{d+t-1}{t}^2}{\binom{d+s-1}{s}} \le \brac{1 + O\brac{\frac{t^2}{d}}} \frac{d^{2t-s}s!}{(t!)^2}
\end{align*}

Therefore, we can upper-bound \cref{eq:x} by \begin{align*}
    \sum_{s=1}^t \binom{t}{s}^2 s! d^{-s} \le \sum_{s=1}^t \binom{t}{s} \brac{\frac{t}{d}}^s = \brac{ 1 + \frac{t}{d} }^t - 1 \le e^{t^2/d} - 1 \le O\brac{t^2/d} = O(\eps^2/\sqrt{d}).
\end{align*}
\\
\par Now onto the case when $\sigma_\reg{B_1}$ is an arbitrary density matrix. Define \begin{align*}
    X := \tr_{\reg{B_2^{[t]\setminus [s]}}} \bracS{ \brac{\id_{\reg{B_1}}\ot \SymmSub(\reg{B_2}) }M \brac{\id_{\reg{B_1}}\ot \SymmSub(\reg{B_2}) }}, \quad Y := \tr_{\reg{C_2^{[t]\setminus [s]}}} \hat{N}.
\end{align*}
Note that by \Cref{lem:sym_part_tr} we have \begin{align*}
    X \le \brac{1 + O\brac{\frac{t^2}{d}}} \frac{d^{t-s}s!}{t!} \brac{ \id_\reg{B_1} \ot \SymmSub(\reg{B_2^{[s]}}) }, \quad 
    Y \le \brac{1 + O\brac{\frac{t^2}{d}}} \frac{d^{t-s}s!}{t!} \brac{ \id_\reg{C_1} \ot \SymmSub(\reg{C_2^{[s]}}) }.
\end{align*}
Similar to \cref{eq:pt}, we can write
\begin{align*}
  \abs{\Tr\brac{ \tr_{\reg{B_2^{[t]\setminus [s]}}}\hat{M} \cdot \tr_{\reg{C_2^{[t]\setminus [s]}}}(
 \hat{N})^{\top_{\reg{C_1}}}}} &\leq \sum_{i,j=0}^{d^s-1} \Big \| (\sigma^{1/2} \ot \ketbra{i}{i}) X (\id \ot \ketbra{j}{j}) \Big \|_2 \cdot \Big \| (\sigma^{1/2} \ot \ketbra{j}{i}) {Y}^\top  (\id \ot \ketbra{j}{i})  \Big \|_2~, 
\end{align*}
where the term on the right can be bounded as \begin{align*}
    \Big \| (\sigma^{1/2} \ot \ketbra{j}{i}) {Y}^\top  (\id \ot \ketbra{j}{i})  \Big \|_2 &= \sqrt{ \Tr \Big( (\sigma \ot \ketbra{i}{i}) {Y}^\top (\id \otimes \ketbra{j}{j}) \overline{Y}  \Big) } \le \brac{1 + O\brac{\frac{t^2}{d}}} \frac{d^{t-s}s!}{t!},
\end{align*}
since we have $\opnorm{Y^\top} = \opnorm{\overline{Y}} = \opnorm{Y} \le \brac{1 + O\brac{\frac{t^2}{d}}} \frac{d^{t-s}s!}{t!}$ and $\opnorm{\id \ot \ketbraX{j}} = 1$.
Therefore, by Cauchy-Schwarz we have \begin{align*}
    \abs{\Tr\brac{ \tr_{\reg{B_2^{[t]\setminus [s]}}}\hat{M} \cdot \tr_{\reg{C_2^{[t]\setminus [s]}}}(
 \hat{N})^{\top_{\reg{C_1}}}}} &\leq \brac{1 + O\brac{\frac{t^2}{d}}} \frac{d^{t-s}s!}{t!}\sum_{i,j=0}^{d^s-1} \Big \| (\sigma^{1/2} \ot \ketbra{i}{i}) X (\id \ot \ketbra{j}{j}) \Big \|_2 \\
 &\le \brac{1 + O\brac{\frac{t^2}{d}}} \frac{d^{t}s!}{t!}\sqrt{\sum_{i,j=0}^{d^s-1} \Big \| (\sigma^{1/2} \ot \ketbra{i}{i}) X (\id \ot \ketbra{j}{j}) \Big \|_2^2} \\
 &= \brac{1 + O\brac{\frac{t^2}{d}}} \frac{d^{t}s!}{t!}\sqrt{\sum_{i,j=0}^{d^s-1} \tr\brac{ (\sigma \ot \ketbra{i}{i}) X (\id \ot \ketbra{j}{j})X }} \\
 &= \brac{1 + O\brac{\frac{t^2}{d}}} \frac{d^{t}s!}{t!}\sqrt{\sum_{i=0}^{d^s-1} \tr\brac{ X^2(\sigma \ot \ketbra{i}{i}) }} \\
 &\le \brac{1 + O\brac{\frac{t^2}{d}}} \frac{d^{t}s!}{t!} \sqrt{d^s \brac{\frac{d^{t-s}s!}{t!}}^2 } \\
 &= \brac{1 + O\brac{\frac{t^2}{d}}} \frac{d^{2t-s/2}(s!)^2}{(t!)^2},
\end{align*}
where we used the fact that $\opnorm{X} \le \frac{d^{t-s}s!}{t!}$. Plugging everything back into \cref{eq:x}, we get that the distinguishing advantage is bounded by \begin{align*}
    O\brac{\frac{t^2}{d}} + \brac{1+O\brac{\frac{t^2}{d}}}\frac{(t!)^2}{d^{2t}}\sum_{s=1}^t \binom{t}{s}^2 \frac{d^{2t-s/2}(s!)^2}{(t!)^2} &\le O\brac{\frac{t^2}{d}} + \sum_{s=1}^t \brac{\frac{t^2}{\sqrt{d}}}^s \le O\brac{\frac{t^2}{\sqrt{d}}} = O(\eps^2)
\end{align*}
as desired.
\end{proof}

\subsection{Comparison to Non-Haar Distributions}
\noindent We could define other versions of \Cref{conj:shi} tailored to different distributions. However, we prove that the choice of Haar distribution leads to the weakest statement under some natural restrictions. In more detail, we consider the setting where two samples from the distribution are almost orthogonal to each other with high probability. In some sense, this setting is necessary to prove the result below: for instance, suppose the support of $\distr$ is a single state. In that case, simultaneous state indistinguishability for this distribution trivially holds. 

We show the following. 

\begin{claim} \label{clm:haar_reduction}
Suppose there exists a distribution $\distr$ on $\sphere(\C^{2^n})$ and $\eps,\delta,\mu : \N \to \R^+ \cup \{0\}$ such that: \begin{enumerate}
    \item[(1)] \label{item:condition1} $ \pr{ \abs{\braket{\psi}{\psi'}}^2 > \delta \; : \; \ket{\psi}, \ket{\psi'} \from \distr } \le \mu $.
    \item [(2)] \label{item:condition2} $\eps + \mu + \delta \le \negl(n)$.
    \item[(3)] \label{item:condition3} $(\eps,\distr)$-SSI holds.
\end{enumerate}
 
Then, the SHI conjecture (\Cref{conj:shi}) is true.
\end{claim}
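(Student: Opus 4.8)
The plan is to reduce simultaneous Haar indistinguishability to the hypothesized $(\eps,\distr)$-SSI by means of a \emph{correlated twirl}. Fix an arbitrary non-local adversary $\alice=(\bob,\charlie,\rho)$ against $\haar_{2^n}$; by \Cref{lem:ssi_proj} together with a relabelling of the output bits, it suffices to bound $\abs{\Tr[(M\ot N)(\rho\ot(\rho_{\haar^\id}-\rho_{\haar^\ind}))]}$ for projective $M_{\reg{B_1B_2}},N_{\reg{C_1C_2}}$, where $d:=2^n$, $\rho_{\haar^\id}:=\E_{\ket\theta\from\haar_d}\theta\ot\theta$ and $\rho_{\haar^\ind}:=\id/d^2$. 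From $\alice$ I will build a non-local adversary $\aliceprime$ against $\distr$: augment $\rho$ with the classical shared-randomness state $\frac{1}{|\mathcal G|}\sum_{U\in\mathcal G}\ketbraX{U}_{\reg{B'}}\ot\ketbraX{U}_{\reg{C'}}$ encoding a unitary drawn from a unitary $2$-design $\mathcal G$ on $\C^d$ (the Clifford group works; alternatively, using \Cref{rem:unbounded_ent}, the Haar measure on $\unitarygp_d$). Then $\bobprime$ reads $U$, applies it to its input register $\reg{B_2}$, and runs $M$; $\charlieprime$ reads \emph{the same} $U$, applies it to $\reg{C_2}$, and runs $N$. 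Since acceptance probabilities are linear in the input state, the distinguishing advantage of $\aliceprime$ on $\distr^\id$ versus $\distr^\ind$ equals $\abs{\Tr[(M\ot N)(\rho\ot(\Phi(\rho_{\distr^\id})-\Phi(\rho_{\distr^\ind})))]}$, where $\Phi(X):=\E_{U\from\mathcal G}(U\ot U)X(U\ot U)^\dagger$, and this is at most $\eps$ by condition~(3). Here $\rho_{\distr^\id}:=\E_{\ket\psi\from\distr}\psi\ot\psi$ and $\rho_{\distr^\ind}:=\rho_\distr\ot\rho_\distr$.

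Next I evaluate $\Phi$ on the two experiments, using that $\E_{U\from\haar}U^{\ot 2}XU^{\dagger\ot 2}$ lies in $\mathrm{span}\{\id,F\}$ (Schur--Weyl duality), and that a $2$-design reproduces this twirl. For the identical case, $\Phi(\psi\ot\psi)=\SymmSub(\reg{B_2C_2})/\binom{d+1}{2}$ for \emph{every} pure $\psi$ (as $\psi^{\ot 2}$ is supported on the symmetric subspace), hence $\Phi(\rho_{\distr^\id})=\E_{\ket\theta}\theta^{\ot 2}=\rho_{\haar^\id}$ \emph{exactly}, independently of $\distr$. For the independent case, solving the two trace equations against $\id$ and $F$ gives $\Phi(\psi\ot\psi')=\tfrac{d-q}{d(d^2-1)}\id+\tfrac{dq-1}{d(d^2-1)}F$ with $q:=\abs{\braket{\psi}{\psi'}}^2$, so averaging over $\ket\psi,\ket{\psi'}\from\distr$ yields $\Phi(\rho_{\distr^\ind})=\tfrac{d-p}{d(d^2-1)}\id+\tfrac{dp-1}{d(d^2-1)}F$ where $p:=\E_{\psi,\psi'}\abs{\braket{\psi}{\psi'}}^2=\Tr(\rho_\distr^2)$. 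Subtracting $\rho_{\haar^\ind}=\id/d^2$, a short computation shows $\Phi(\rho_{\distr^\ind})-\rho_{\haar^\ind}=\tfrac{dp-1}{d^2(d^2-1)}(dF-\id)$, whose trace norm is exactly $(dp-1)/d=p-1/d$. Since $\rho_\distr$ is a $d$-dimensional density matrix, $p\ge 1/d$ always; and condition~(1) gives $p\le \delta\cdot 1+\mu\cdot 1=\delta+\mu$, so $\ellpnorm{\Phi(\rho_{\distr^\ind})-\rho_{\haar^\ind}}{1}\le \delta+\mu$.

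Combining the two steps, $\rho_{\haar^\id}-\rho_{\haar^\ind}=\Phi(\rho_{\distr^\id})-\rho_{\haar^\ind}=\bigl(\Phi(\rho_{\distr^\id})-\Phi(\rho_{\distr^\ind})\bigr)+\bigl(\Phi(\rho_{\distr^\ind})-\rho_{\haar^\ind}\bigr)$, hence by the triangle inequality $\abs{\Tr[(M\ot N)(\rho\ot(\rho_{\haar^\id}-\rho_{\haar^\ind}))]}\le \eps+\opnorm{M\ot N}\cdot(\delta+\mu)\le \eps+\delta+\mu\le\negl(n)$ by condition~(2). As $\alice$ was arbitrary, $(\eps+\delta+\mu,\haar_{2^n})$-SSI holds, which is the negligible-error version of \Cref{conj:shi} (and recovers the stated $O(2^{-n/2})$ bound if $\eps,\delta,\mu$ are each $O(2^{-n/2})$).

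The essential content is in the first step: recognising that the correct reduction twirls the $\distr$-experiment by \emph{the same} Haar (or $2$-design) unitary on both halves — an independent twirl on the two halves would send the identical experiment to the maximally mixed state rather than to $\rho_{\haar^\id}$, which would be useless. The one point that needs care is that the correlated twirl does \emph{not} map $\rho_{\distr^\ind}$ exactly onto $\rho_{\haar^\ind}$: the discrepancy is governed by the purity $\Tr(\rho_\distr^2)$, and the almost-orthogonality hypothesis~(1) is precisely what bounds this purity by $\delta+\mu$. The remaining ingredients — the Schur--Weyl evaluation of $\Phi$ on rank-$\le 2$ operators and the trace-norm estimate — are routine.
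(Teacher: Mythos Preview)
Your proof is correct and uses essentially the same approach as the paper: both arguments build the reduction by sharing \emph{the same} Haar (or $2$-design) unitary $U$ between $\bob$ and $\charlie$ as classical randomness and having each party apply $U$ to its input before running the original distinguisher; in both, the identical experiment maps exactly to $\haar^\id$ while the independent experiment only maps \emph{approximately} to $\haar^\ind$, with the error controlled by the almost-orthogonality hypothesis~(1). The one notable difference is in how that approximation error is bounded: the paper conditions on the event $\abs{\braket{\psi}{\psi'}}^2\le\delta$ and compares to the twirl of a fixed orthogonal pair, yielding $\mu+\delta+1/2^n$; you instead compute the twirled state exactly via Schur--Weyl and bound the trace-norm discrepancy by $p-1/d\le\delta+\mu$ where $p=\Tr(\rho_\distr^2)$, which is a slightly cleaner and tighter route to the same conclusion.
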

\begin{proof}
We prove by contradiction. Suppose $(\eps,\distr)$-simultaneous state indistinguishability holds for some distribution $\distr$ on $\sphere(\C^{2^n})$ and suppose there exists a non-local adversary $\adversary=(\bob,\charlie,\rho)$ that can violate $(\eps,\haar_{2^n})$-simultaneous state indistinguishability.   
\par Consider the following non-local adversary $\adversary' =(\bob',\charlie',\rho')$:
\begin{itemize}
    \item $\rho'=(\rho,(U_{\bfB},U_{\bfC}))$, where $\bob$ and $\charlie$ receive as input $\rho$ and they also additionally receive as input the registers $\reg B$ and $\reg C$, both of which contain the description of the \textbf{same} Haar random unitary $U_{\bfB}=U_{\bfC} \leftarrow \haar(\unitarygp_{2^n})$.
    \item $\bob'$ receives $\ket{\psi_{\bob}}$ from the external challenger and $\charlie'$ receives as input $\ket{\psi_{\charlie}}$ from the external challenger. 
    \item $\bob'$ runs $\bob$ on $\ket{\psi_\bob'} = U_{\bob} \ket{\psi_{\bob}}$. Similarly, $\charlie'$ runs $\charlie$ on $\ket{\psi_\charlie'} = U_{\charlie} \ket{\psi_{\charlie}}$.
    \item $(\bob',\charlie')$ output $(b_{\bob},b_{\charlie})$, where $b_{\bob}$ is the output of $\bob$ and $b_{\charlie}$ is the output of $\charlie$.  
\end{itemize}

Define $\distr^\id, \distr^\ind$ as before. We consider the two cases. If $\ket{\psi_\bob}\ket{\psi_\charlie} \from \distr^\id$, then $\ket{\psi_\bob'}\ket{\psi_\charlie'}$ is distributed according to $\haar_{2^n}^\id$. If $\ket{\psi_\bob}\ket{\psi_\charlie} \from \distr^\id$, then we will show that $\ket{\psi_\bob'}\ket{\psi_\charlie'}$ is close to $\haar_{2^n}^\ind$ in trace distance using condition (1). Let $\sigma$ be the density matrix, and let $\sigma'$ be the reduced density matrix after conditioning on $\abs{\braket{\psi_\bob'}{\psi_\charlie'}}^2 \le \delta$. Let \begin{align*}
    \widetilde{\rho} = \E_{U \from \haar(\unitarygp_{2^n}) } \brac{U \otimes U} \brac{ \ketbraX{0^n} \otimes \ketbraX{1^n} } \brac{U^\dagger \otimes U^\dagger}
\end{align*} and let $\rho_{\haar_{2^n}^\ind}$ be the density matrix representing $\haar_{2^n}^\ind$, then \begin{align*} 
    \tracedist{\sigma}{\rho_{\haar_{2^n}^\ind}} &= \tracedist{\sigma}{\frac{\id}{2^n} \otimes \frac{\id}{2^n}} \le \mu + (1-\mu)\tracedist{\sigma'}{\frac{\id}{2^n} \otimes \frac{\id}{2^n}} \\
    &\le \mu + (1-\mu)\tracedist{\sigma'}{\widetilde{\rho}} + (1 - \mu) \tracedist{\widetilde{\rho}}{\frac{\id}{2^n} \otimes \frac{\id}{2^n}} \\
    &\le \mu + (1 - \mu)\delta + (1 - \mu)\frac{1}{2^n} \\
    &= \mu + \delta + \frac{1}{2^n}.
\end{align*}
\\
The second from last line follows from the monotonicity of trace distance. From the above description, it follows that if $\adversary$ violates $(\eps,\haar_{2^n})$-simultaneous state indistinguishability then $\adversary'$ violates $(\eps + \mu + \delta + 1/2^n,\distr)$-simultaneous state indistinguishability, which is a contradiction. 
\end{proof}

\subsection{Lower Bound} \label{sec:haar_lb}

We demonstrate an unentangled attack which show that the dimension of the Haar random states $(2^n)$ indeed need to be super-polynomial in $n$ for (\Cref{conj:shi}) to be true.

\paragraph{Unentangled Attack.}
In this attack, $\bob$ and $\charlie$ share no state beforehand, and simply each output the result of measuring the first qubit of their register in the computational basis. Defined as such, $(\bob, \charlie, \ketbraX{\bot}$ is a $(1/\exp(n), \haar_{2^n})$-simultaneous state distinguisher. \\

\noindent {\bf Analysis:} If $\bob$ and $\charlie$ receive independent Haar states, then the probability that they output $(0,0)$ equals $1/4$. If they receive identical Haar states, we can calculate the probability by considering the type basis. There are $d(d+1)/2$ types total, and the probability of the first bits being $(0,0)$ is given by \begin{align*}
    \frac{2}{d(d+1)} \brac{ \frac{d}{2} + \binom{d/2}{2} } = \frac{d+2}{4(d+1)}.
\end{align*}
Thus, the distinguishing advantage is \begin{align*}
    \frac{d+2}{4(d+1)} - \frac{1}{4} = \frac{1}{4(d+1)} = O(1/d).
\end{align*}

\noindent We leave it for future work to determine whether this attack can be generalized to the $t$-copy case, achieving an advantage of $O(t^2/d)$.

\newcommand{\qct}{\rho_{\ct}}

\section{Applications}

We present applications of simultaneous Haar indistinguishability (\Cref{sec:haar_indist}) to single-decryptor encryption (\Cref{sec:sde}) and unclonable encryption (\Cref{sec:ue}). Ordinarily, single-decryptor encryption is defined with classical ciphertexts and quantum decryption keys, whereas unclonable encryption is defined using quantum ciphertexts and classical encryption/decryption keys. 
We achieve relaxed notions of both primitives above: namely, we additionally allow quantum ciphertexts in single-decryptor encryption and quantum decryption keys in unclonable encryption. 
\par In~\Cref{sec:secret:sharing}, we show how to construct leakage-resilient quantum secret sharing of classical messages, which additionally guarantees security against an eavesdropper that learns classical leakage of all the shares.

\subsection{Single-Decryptor Encryption with Quantum Cipertexts} \label{sec:sde}
\subsubsection{Definitions}

\noindent We adopt the definition of single-decryptor encryption by~\cite{GZ20} to the setting where the ciphertexts can be quantum. 
 
\begin{definition}[Single-Decryptor Encryption] \label{def:sde}
A \emph{single-decryptor encryption} (SDE) scheme is a tuple of QPT algorithms $(\gen, \enc, \dec)$:
\begin{itemize}
    \item $\gen(1^\secparam)$ takes as input a security paramter. It outputs a classical encryption key $\ekey$ and a one-time quantum decryption key $\qkey$.
    \item $\enc(\ekey, m)$ takes as input an encryption key $\ekey$, and a classical message $m$. It outputs a quantum ciphertext $\qcipher$. We require that $\enc$ is pseudo-deterministic.
    \item $\dec(\qkey, \qcipher)$ takes as input a quantum decryption key $\qkey$, a quantum ciphertext $\qcipher$ and outputs a classical message $m$.
\end{itemize}
\end{definition}

\begin{definition}[Correctness] \label{def:sde_correctness}
    A SDE scheme $(\gen, \enc, \dec)$ with quantum ciphertexts is \emph{correct} if for any security parameter $\secparam$ and any message $m$ we have \begin{align*}
        \pr{ m' = m \; : \; \substack{ (\ekey, \qkey) \from \gen(1^\secparam) \\ \qcipher \from \enc(\ekey, m) \\ m' \from \dec(\qkey, \qcipher)} } \ge 1 - \negl(\secparam).
    \end{align*}
\end{definition}

\noindent Before defining security, we introduce a notation $\enc^T(\ekey, m)$, which means sampling randomness $r$ and running $\enc(\ekey, m; r)$ for $T$ times.

\begin{definition}[Security of SDE] \label{def:sde_security}
    An SDE scheme $(\gen, \enc, \dec)$ is called \emph{(information-theoretically) secure against identical ciphertexts} if for any cloning adversary $\abc$ and any pair of messages $(m_0, m_1)$ we have \begin{align*}
        \pr{ b_\bob = b_\charlie = b \; : \; \substack{(\ekey, \sqkey) \from \gen(1^\secparam) \\ \rho_\reg{B C} \from \alice(\sqkey) \\ b \uniform \bit, \quad  \qcipher^{\ot 2} \from \enc^2(\ekey, m_b)  \\  b_\bob \from \bob(\qcipher, \rho_\reg{B}), \quad b_\charlie \from \charlie(\qcipher, \rho_\reg{C}) } } \le \frac{1}{2} + \negl(\secparam) ,
    \end{align*}
    where $\rho_\reg{E}$ denotes the $\reg{E}$ register of the bipartite state $\rho_\reg{B C}$ for $\reg{E} \in \{\reg{B}, \reg{C}\}$.
\end{definition}

\noindent 
\begin{remark}[Identical Ciphertexts] \label{rem:sde_independent_ch}
    In \Cref{def:sde_security} we consider security against identical ciphertexts. One can similarly define security against ciphertexts that are independently generated. This alternate definition was achieved in the plain model by \cite{AKL23}, and it only requires independent-challenge Goldreich-Levin.
\end{remark}

\begin{remark} \label{rem:identical_ct}
    Note that \Cref{def:sde_security} need not be physical for an arbitrary scheme $(\gen, \enc, \dec)$ without the requirement that $\enc$ is pseudo-deterministic due to the fact that $\ket{\ct}$ may be unclonable even for the encryptor. Nonetheless, our construction satisfies this condition, with the classical randomness of $\enc$ being used for sampling a Haar random state.
\end{remark}

\noindent Below, we consider a stronger security definition where many copies of the quantum ciphertext are given to the adversary. This matches the case of classical ciphertext more closely, since classical ciphertexts can be cloned arbitrarily. Another implication of this stronger definition is that security holds against adversaries who can clone the quantum ciphertexts.

\begin{definition}[$t$-Copy Security of SDE] \label{def:sde_sec_tcopy}
    An SDE scheme $(\gen, \enc, \dec)$ is called \emph{(information-theoretically) $t$-copy secure against identical ciphertexts} if for any cloning adversary $\abc$ and any pair of messages $(m_0, m_1)$ we have \begin{align*}
        \pr{ b_\bob = b_\charlie = b \; : \; \substack{(\ekey, \sqkey) \from \gen(1^\secparam) \\ \rho_\reg{B C} \from \alice(\sqkey) \\ b \uniform \bit, \quad  \qcipher^{\ot 2t} \from \enc^{2t}(\ekey, m_b)  \\  b_\bob \from \bob(\qcipher^{\otimes t}, \rho_\reg{B}), \quad b_\charlie \from \charlie(\qcipher^{\otimes t}, \rho_\reg{C}) } } \le \frac{1}{2} + \negl(\secparam) ,
    \end{align*}
    where $\rho_\reg{E}$ denotes the $\reg{E}$ register of the bipartite state $\rho_\reg{B C}$ for $\reg{E} \in \{\reg{B}, \reg{C}\}$.
\end{definition}

\subsubsection{Construction}    \label{sec:sde_construction}
Let $(\gen_\ue, \enc_\ue, \dec_\ue)$ be a one-time unclonable encryption scheme with $C^n$-weak unclonable security and message space $\cM \subseteq \bit^n$, with $n = \poly(\secparam)$, where $C \in (1/2,1)$ is a constant. Let $\distr=\haar_{2^n}$, so that: \begin{enumerate}
    \item $(\eps, \distr^t)$-SSI holds for some $\eps = \negl(\secparam)$ by \Cref{thm:shi_many_copy} as long as $t^2/2^{n/2}$ is negligible.
    \item $\E_{\ket{\psi} \from \distr} \ketbraX{\psi} = \id_{2^n}$.
\end{enumerate}
We construct SDE for single-bit messages ($m \in \bit$) secure against identical ciphertexts as follows:

\begin{itemize}
    \item $\gen(1^\secparam)$ samples a random message $x \uniform \cM$ and a key $k \from \gen_\ue(1^\secparam)$. It computes $\ket{\psi} \from \enc_\ue(k, x)$. It outputs an encryption key $\ekey = (k,x)$ and a decryption key $\qkey  = \ket{\psi}$.
    \item $\enc(\ekey, m)$ samples $\ket{\varphi} = \sum_{y} \alpha_y \ket{y} \from \distr$. It parses $\ekey = (k,x)$ and computes the state $\ket{\phi} = \sum_y \alpha_y \ket{y}\ket{ \inner{y}{x} \oplus m}$. It outputs a quantum ciphertext $\qcipher = \brac{k, \ket{\phi}}$. Note that $\enc$ is pseudo-deterministic given that it can sample from $\distr$ using classical randomness. 
    \item $\dec(\qkey, \qcipher)$ parses $\qcipher = \brac{k, \ket{\phi}}$. It computes $ x \from \dec_\ue(k, \ket{\phi})$. It computes $U_x \ket{\psi}$ and measures the second register to obtain $m$, where $U_x$ is the unitary defined as $U_x\ket{y}\ket{z} = \ket{y}\ket{z \oplus \inner{y}{x}}$. It outputs $m$. 
\end{itemize}

\begin{remark}  \label{rem:tdesign}
    Since $t$ is bounded (see \Cref{thm:sde_sec_tcopy}), we can use a $2t$-state design to instantiate the Haar random state used in the construction. We can similarly use a $2t$-state design to instantiate our construction of unclonable encryption in \Cref{sec:ue}.
\end{remark}

\noindent From the correctness of $(\gen_\ue, \enc_\ue, \dec_\ue)$, it follows that $\dec$ recovers $m$ (with probability negligibly close to 1) from $\qcipher$, where $\qcipher$ is an encryption of $m$. 

\subsubsection{Security Proof}

We first show a lemma that is needed in our security proof:

\begin{lemma}[Simultaneous Quantum Goldreich-Levin with Correlated Input] \label{lem:gl_cor}
    Let $\abc$ be a cloning adversary that given\footnote{Here $\rho$ is given to $\alice$ and then a bipartite state $\sigma$ is given to $\bob$, $\charlie$ in the challenge phase.} $(\rho, (\sigma, (y_\bob, b_\bob), (y_\charlie, b_\charlie)))$, where $y_\bob, y_\charlie \in \bit^n$ are i.i.d. uniform strings and $b_\bob, b_\charlie \in \bit$ are random bits satisfying $b_\bob \oplus b_\charlie = \inner{y_\bob}{x} \oplus \inner{y_\charlie}{x}$, can output $(\inner{y_\bob}{x}, \inner{y_\charlie}{x})$ with probability at least $1/2 + \eps$. Then, there is an extractor $\abcprime$ that given input $(\rho, \sigma)$ outputs $(x,x)$ (running $\abc$ as a subprotocol) with probability $\poly(\eps)$.
\end{lemma}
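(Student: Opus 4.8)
The plan is to reduce \Cref{lem:gl_cor} to the Simultaneous Quantum Goldreich--Levin lemma (\Cref{lem:quantumGL}) in two steps. First, I would replace the given adversary --- whose challenge bits $b_\bob,b_\charlie$ are \emph{correlated} via $b_\bob\oplus b_\charlie=\inner{y_\bob}{x}\oplus\inner{y_\charlie}{x}$ --- by a related adversary whose two challenge bits are \emph{independent and uniform}, at the price of weakening what counts as success. Second, I would feed that adversary into \Cref{lem:quantumGL}, whose hypothesis is stated exactly for the weakened success criterion, to produce the extractor.

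For the first step, let $\abctild$ be the cloning adversary whose splitter $\alicetild:=\alice$ runs $\alice$ on $\rho$ to produce $\rho':=\alice(\rho)$, and which, in the challenge phase on input $\brac{\sigma,y_\bob,y_\charlie}$ with $y_\bob,y_\charlie\uniform\bit^n$, has $\bobtild$ sample a fresh bit $b_\bob\uniform\bit$, run $\bob$ on $\brac{\sigma_{\reg{B}},y_\bob,b_\bob}$ together with its half of $\rho'$, and output $\bob$'s output $\hat u$, and symmetrically has $\charlietild$ sample $b_\charlie\uniform\bit$ and run $\charlie$ to produce $\hat w$. In the $\abctild$-experiment the two challenge bits are i.i.d.\ uniform rather than correlated. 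I would then show that $\abctild$ produces $(\hat u,\hat w)$ with $\hat u\oplus\hat w=\inner{y_\bob}{x}\oplus\inner{y_\charlie}{x}$ with probability at least $1/2+\eps$ --- note this is weaker than producing $\brac{\inner{y_\bob}{x},\inner{y_\charlie}{x}}$ exactly, since $\bob$ and $\charlie$ are now also permitted to both be wrong.

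To prove this, fix everything except $b_\bob,b_\charlie$ and the parties' measurement randomness; in particular fix $\rho',\sigma,x,y_\bob,y_\charlie$, and write $u:=\inner{y_\bob}{x}$, $w:=\inner{y_\charlie}{x}$, $c:=u\oplus w$. Since $\bob$ and $\charlie$ do not communicate, no-signalling makes $\alpha_a:=\pr{\hat u=u\mid b_\bob=a}$ and $\gamma_a:=\pr{\hat w=w\mid b_\charlie=a}$ well defined (independent of the other party's bit, and irrespective of the entanglement shared between the parties), and for all $a,a'\in\bit$ inclusion--exclusion gives both $\pr{\hat u=u\wedge\hat w=w\mid b_\bob=a,b_\charlie=a'}\ge\alpha_a+\gamma_{a'}-1$ and $\pr{\hat u\oplus\hat w=c\mid b_\bob=a,b_\charlie=a'}=2\pr{\hat u=u\wedge\hat w=w\mid a,a'}+1-\alpha_a-\gamma_{a'}$ (the two ways to get $\hat u\oplus\hat w=c$ being ``both correct'' and ``both wrong''). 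Averaging the last identity over independent uniform $(a,a')$ gives the $\abctild$-success at this fixing, whereas the $\abc$-success here is the average of $\pr{\hat u=u\wedge\hat w=w\mid a,a\oplus c}$ over uniform $a$; a short computation using these facts shows that the former dominates the latter at \emph{every} fixing. Averaging over the fixings and invoking the hypothesis of the lemma then proves the claim. (The same can be packaged via the event $E=\bracC{b_\bob\oplus b_\charlie=c}$: conditioned on $E$ the $\abctild$-experiment coincides with the original one, and since flipping $b_\bob$ alone toggles $E$ without affecting $\charlie$'s view --- and symmetrically for $b_\charlie$ --- the correlation that $\abc$ exploits re-emerges as a correlation between the outputs of $\abctild$, which is fed uncorrelated bits.)

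For the second step, observe that the non-local adversary $\brac{\bobtild,\charlietild,\rho'}$, given the i.i.d.\ uniform strings $y_\bob,y_\charlie$ and the shared state $\sigma$, outputs $(\hat u,\hat w)$ with $\hat u\oplus\hat w=\inner{y_\bob}{x}\oplus\inner{y_\charlie}{x}$ with probability $\ge 1/2+\eps$: this is exactly the hypothesis of \Cref{lem:quantumGL}. That lemma returns a non-local extractor $\brac{\bobprime,\charlieprime,\rho'}$ that, on input $\sigma$, outputs $(x,x)$ with probability $\poly(\eps)$, with $\bobprime$ running $\bobtild$ --- hence $\bob$ --- once and $\charlieprime$ running $\charlietild$ --- hence $\charlie$ --- once. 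Taking $\aliceprime:=\alice$, the cloning adversary $\abcprime$ is the desired extractor. The hard part is the first step: decorrelating the challenge bits while retaining a non-negligible advantage; the only structural fact it uses is no-signalling (plus elementary inclusion--exclusion), after which \Cref{lem:quantumGL} finishes the job.
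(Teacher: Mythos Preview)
Your proposal is correct and follows essentially the same two-step approach as the paper: construct $\abctild$ with independently sampled challenge bits, show via no-signalling that it achieves the weakened success criterion $\hat u\oplus\hat w=\inner{y_\bob}{x}\oplus\inner{y_\charlie}{x}$ with probability $\ge 1/2+\eps$, and then invoke \Cref{lem:quantumGL}. Your pointwise inclusion--exclusion argument (conditioning on $(a,a')$ and using $p_{a,a'}\ge\alpha_a+\gamma_{a'}-1$) is a valid variant of the paper's event-$E$ algebra---indeed your parenthetical remark about the event $E$ is exactly how the paper packages it---and both computations yield the same inequality.
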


\begin{proof}
    Firstly, we will come up with $\abctild$, where given $(\rho, (\sigma, y_\bob, y_\charlie))$ outputs $(b_\bob', b_\charlie')$ satisfying $b_\bob' \oplus b_\charlie' = \inner{y_\bob}{x} \oplus \inner{y_\charlie}{x}$ with probability $1/2 + \eps$. Then, we can apply \Cref{lem:quantumGL} to finish the proof.
    \par Now onto proving that such $\abctild$ exists, indeed, $\alicetild$ is the same as $\alice$. On the other hand, $\bobtild,\charlietild$ will sample $b_\bob, b_\charlie$ independently and run $\bob,\charlie$ as subprotocols, outputting whatever they output. Let us establish some notation before moving forward. Let $E_\bob$ be the event that $\bobtild$ outputs $\inner{y_\bob}{x}$, and similarly $E_\charlie$ be the event that $\charlietild$ outputs $\inner{y_\charlie}{x}$. Let $E$ be the event that $b_\bob \oplus b_\charlie = \inner{y_\bob}{x} \oplus \inner{y_\charlie}{x}$; note that $\pr{E} = 1/2$. We can express the condition on $\abc$ as \begin{align}
        \pr{E_\bob \land E_\charlie \given E} \ge \frac{1}{2} + \eps. \label{eq:goodcase}
    \end{align}
    Additionally, the view of $\bob$ (hence that of $\bobtild$) is independent of $E$, thus due to no-signalling we have \begin{align*}
        \pr{E_\bob \given E} = \pr{E_\bob \given \neg {E}}, \quad \pr{E_\charlie \given E} = \pr{E_\charlie \given \neg {E}};
    \end{align*}
    which imply that \begin{align}
        & \quad \pr{E_\bob \land E_\charlie \given E} + \pr{\neg E_\bob \land \neg E_\charlie \given \neg E} \nonumber \\ &= \brac{\pr{E_\bob \given E} - \pr{E_\bob \land \neg E_\charlie \given E}} + \brac{\pr{\neg E_\bob \given \neg E} - \pr{\neg E_\bob \land E_\charlie \given \neg E}} \nonumber \\
        &= 1 - \pr{E_\bob \land \neg E_\charlie \given E} - \pr{\neg E_\bob \land E_\charlie \given \neg E} \nonumber \\
        &= 1 - \brac{\pr{\neg E_\charlie \given E} - \pr{\neg E_\bob \land \neg E_\charlie \given E}} - \brac{\pr{E_\charlie \given \neg E} - \pr{E_\bob \land E_\charlie \given \neg E}} \nonumber \\
        &= \pr{\neg E_\bob \land \neg E_\charlie \given E} + \pr{ E_\bob \land E_\charlie \given \neg E}. \label{eq:diag_sum}
    \end{align}

Now, using \cref{eq:goodcase,eq:diag_sum}, the probability that the output of $\abctild$ satisfies $b_\bob' \oplus b_\charlie' = \inner{y_\bob}{x} \oplus \inner{y_\charlie}{x}$ can be calculated as \begin{align*}
    &\pr{E_\bob \land E_\charlie} + \pr{\neg E_\bob \land \neg E_\charlie}\\
    &= \frac{1}{2} \brac{ \pr{ E_\bob \land E_\charlie \given E } + \pr{E_\bob \land E_\charlie \given \neg E}} + \frac{1}{2} \brac{ \pr{ \neg E_\bob \land \neg E_\charlie \given E } + \pr{\neg E_\bob \land \neg E_\charlie \given \neg E}} \\
    &= \pr{E_\bob \land E_\charlie \given E} + \pr{\neg E_\bob \land \neg E_\charlie \given \neg E} \ge \pr{E_\bob \land E_\charlie \given E} \ge \frac{1}{2} + \eps.
\end{align*}

\end{proof}

\begin{remark}
\Cref{lem:quantumGL} proves a special case of the simultaneous inner product conjecture postulated in~\cite{AB23}. Roughly speaking, the simultaneous inner product conjecture states that if a set of bipartite states $\{\rho_x\}_{x \in \{0,1\}^n}$ is such that any non-local adversary $(\bob,\charlie)$ given $\rho_x$, where $x \xleftarrow{\$} \{0,1\}^n$, cannot recover $x$ (except with negligible probability) then $\bob$ and $\charlie$ cannot distinguish Goldreich-Levin samples versus uniform samples (except with negligible advantage). The conjecture is parameterized by the distribution of the samples and also by the algebraic field associated with the samples.  \Cref{lem:quantumGL} shows that the conjecture is true for a correlated distribution of samples and when the field in question is $\mathbb{F}_2$. 
\end{remark}

\begin{remark}
    \Cref{lem:quantumGL} resolves an important technical issue we will face in our unclonable security proofs (\Cref{thm:sde_security,thm:ue_qkeys_sec}), similar to the one faced by \cite{KT22}\footnote{See Remark 7 (pp. 53) in \cite{KT22}.}. Namely, $\bob$ and $\charlie$ seem to get additional information about the hidden values $\inner{y_\bob}{x}, \inner{y_\charlie}{x}$ by holding secret shares of their XOR value. Above we show that this is in fact not the case, i.e. the adversary does not get additional power from these shares. In \cite{KT22}, the authors utilized alternative security definitions to overcome this issue.
\end{remark}

\begin{theorem} \label{thm:sde_security}
    $(\gen, \enc, \dec)$ above is secure against identical ciphertexts.
\end{theorem}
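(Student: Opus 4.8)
The plan is a three‑hybrid reduction from the security of $(\gen, \enc, \dec)$ to the $C^n$‑weak unclonable security of $(\gen_\ue, \enc_\ue, \dec_\ue)$, passing through simultaneous Haar indistinguishability (\Cref{conj:shi}) and correlated simultaneous Goldreich--Levin (\Cref{lem:gl_cor}). Fix a cloning adversary $\abc$ and messages $m_0, m_1 \in \bit$. If $m_0 = m_1$ the challenge ciphertext has the same distribution for $b = 0$ and $b = 1$, so the winning events for the two values of $b$ are disjoint and the advantage is at most $\tfrac12$; hence we may assume $\{m_0, m_1\} = \bit$, and after relabelling $m_b = b$. Since $\enc$ is pseudo-deterministic, in \Cref{def:sde_security} both $\bob$ and $\charlie$ receive one copy of a single pure ciphertext $(k, \ket{\phi})$ with $\ket{\phi} = \sum_y \alpha_y \ket{y}\ket{\inner{y}{x} \oplus m_b}$ and $\sum_y \alpha_y \ket{y} \from \haar_{2^n}$, together with $k$ and their half of $\rho_\reg{BC} \from \alice(\ket{\psi})$, where $\ket{\psi} = \enc_\ue(k, x)$ and $x \uniform \cM$.

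\emph{Hybrid $0$} is the real experiment. In \emph{Hybrid $1$} the challenger samples two \emph{independent} Haar states $\ket{\varphi_\bob}, \ket{\varphi_\charlie} \from \haar_{2^n}$ and gives $\bob$ the state $\ket{\phi_\bob} = \sum_y \alpha^\bob_y \ket{y}\ket{\inner{y}{x} \oplus m_b}$ and $\charlie$ the analogous $\ket{\phi_\charlie}$, all else unchanged. For fixed $(k, x, b)$ the map $\ket{\varphi} \mapsto \ket{\phi}$ is just the unitary $U_x$ (from the construction) applied to $\ket{\varphi} \otimes \ket{m_b}$, so any gap between Hybrids $0$ and $1$ yields a non-local distinguisher for $\haar_{2^n}^\id$ versus $\haar_{2^n}^\ind$: its shared state consists of $\rho_\reg{BC}$ together with classical copies of $(k, x, b)$ for each party, each party turns its incoming Haar register into the corresponding $\ket{\phi}$, runs $\bob$ (resp. $\charlie$), and outputs the indicator that the answer equals $b$. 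Therefore $(\eps, \haar_{2^n})$-SSI (\Cref{conj:shi}, with $\eps = O(2^{-n/2}) = \negl(\secparam)$) gives $\absbig{ \Pr[\text{win in Hybrid } 0] - \Pr[\text{win in Hybrid } 1] } \le \negl(\secparam)$.

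In \emph{Hybrid $2$}, $\bob$ instead receives a uniformly random classical sample $(r_\bob,\, \inner{r_\bob}{x} \oplus m_b)$ and $\charlie$ receives $(r_\charlie,\, \inner{r_\charlie}{x} \oplus m_b)$ with $r_\bob, r_\charlie \in \bit^n$ i.i.d.\ uniform. This hybrid has \emph{exactly} the same winning probability as Hybrid $1$: by the $1$-design identity $\E[\alpha_r \overline{\alpha_{r'}}] = \delta_{r,r'}/2^n$ we have $\E_{\ket{\varphi}\from\haar_{2^n}} \ketbraX{\phi} = \frac{1}{2^n}\sum_{r} \ketbraX{r,\, \inner{r}{x} \oplus m_b}$, and since in Hybrid $1$ the two Haar states are independent of each other and of $\rho_\reg{BC}$, the joint state on $\bob$'s and $\charlie$'s input registers averages to precisely that of Hybrid $2$, while the winning probability is linear in this state. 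Now suppose for contradiction that $\Pr[\text{win in Hybrid } 2] \ge \tfrac12 + \eps$ with $\eps$ non-negligible. Let $c_\bob := \inner{r_\bob}{x} \oplus m_b$ and $c_\charlie := \inner{r_\charlie}{x} \oplus m_b$ be the bits $\bob$ and $\charlie$ receive; then $c_\bob \oplus c_\charlie = \inner{r_\bob}{x} \oplus \inner{r_\charlie}{x}$, and $c_\bob$ is a uniform bit given $(r_\bob, r_\charlie, x)$ because $m_b = b$ is uniform --- so $(r_\bob, c_\bob), (r_\charlie, c_\charlie)$ satisfy the correlated-input hypothesis of \Cref{lem:gl_cor}. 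Moreover $b_\bob = b = m_b$ iff $b_\bob \oplus c_\bob = \inner{r_\bob}{x}$, and likewise for $\charlie$; hence the cloning adversary that runs $\abc$ and outputs $(b_\bob \oplus c_\bob,\, b_\charlie \oplus c_\charlie)$ produces $(\inner{r_\bob}{x}, \inner{r_\charlie}{x})$ with probability $\ge \tfrac12 + \eps$, on input $(\ket{\psi},\, (k,\, (r_\bob, c_\bob),\, (r_\charlie, c_\charlie)))$ with $\ket{\psi} = \enc_\ue(k,x)$.

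By \Cref{lem:gl_cor} there is an extractor $\abcprime$ which, given $(\enc_\ue(k, x),\, k)$ (with $k$ handed to both $\bobprime$ and $\charlieprime$), outputs $(x, x)$ with probability $\poly(\eps)$, running $\abc$ once. But this is exactly a cloning adversary breaking the $C^n$-weak unclonable security of $(\gen_\ue, \enc_\ue, \dec_\ue)$ for the uniform message $x \uniform \cM$: $\alice'$ gets the ciphertext $\enc_\ue(k,x)$ and $\bobprime, \charlieprime$ get $k$. Hence $\poly(\eps) \le C^n = \negl(\secparam)$, which forces $\eps = \negl(\secparam)$; chaining the three hybrids yields $\Pr[b_\bob = b_\charlie = b] \le \tfrac12 + \negl(\secparam)$, as required. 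I expect the main obstacle to be the bookkeeping of the two middle steps: verifying that the identical-ciphertext game (with $\enc^2$) corresponds precisely to $\haar_{2^n}^\id$ with a single copy per party so that all of $x, k, b, \rho_\reg{BC}$ and $U_x$ can be folded into the non-local SSI distinguisher, and carrying out the Hybrid-$2$ re-interpretation of ``guess the challenge bit'' as ``jointly guess the two inner products with correlated auxiliary bits'' so that \Cref{lem:gl_cor} applies verbatim.
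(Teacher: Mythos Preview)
Your proposal is correct and follows essentially the same route as the paper's proof: SSI to decouple the Haar state into independent copies, the $1$-design property to pass to classical uniform samples, the correlated Goldreich--Levin lemma (\Cref{lem:gl_cor}) to extract $x$, and finally the weak unclonable security of the underlying scheme. The only cosmetic difference is that the paper merges your Hybrids~$1$ and~$2$ into a single step (observing directly that independent Haar and independent uniform classical strings give the same density matrix), whereas you make the $1$-design identity explicit as its own hybrid; both arrangements are equally valid.
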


\begin{proof}
    We will use the hybrid method. Without loss of generality set $(m_0, m_1) = (0,1)$ and define the folllowing hybrids:
    \begin{enumerate}
        \item {\bf Hybrid 0.} This is the original security experiment described in \Cref{def:sde_security}.
        \item {\bf Hybrid 1.} In this hybrid, instead of giving $\qcipher =\brac{k, \sum_{y} \alpha_y \ket{y}\ket{\inner{y}{x} \oplus m} }$ to both $\bob$ and $\charlie$, we give $\brac{ k, y_\bob, \inner{y_\bob}{x} \oplus m }$ to $\bob$ and $\brac{ k, y_\charlie, \inner{y_\charlie}{x} \oplus m }$ to $\charlie$, where $y_\bob, y_\charlie \in \bit^n$ are uniformly random independent classical strings.
        \item {\bf Hybrid 2.} In this hybrid, we change the victory condition of the adversary $\abc$. Instead of outputting $b=m_b$, we ask that $\bob$ output $\inner{y_\bob}{x}$ and $\charlie$ output $\inner{y_\charlie}{x}$.

        \item {\bf Hybrid 3.} In this hybrid, we only give $k$ to $\bob$ and $\charlie$. We also change the victory condition: in order to succeed, $\bob$ and $\charlie$ both need to output $x$.

    \end{enumerate}

    Let $p_i$ be the optimal probability that a QPT adversary $\abc$ succeeds in \textbf{Hybrid $i$}. We will split the proof into the following claims:

    \begin{claim}   \label{clm:01}
        $\abs{p_0 - p_1} \le \negl(\secparam)$.
    \end{claim}
    \begin{proof}
        Suppose not, we will construct a non-local adversary $(\bob', \charlie', \rho)$ which can simultaneously distinguish $\ket{\varphi} \otimes \ket{\varphi}$ from $\ket{\varphi} \otimes \ket{\varphi'}$, where $\ket{\varphi}, \ket{\varphi'} \from \distr$. Note that by assumption $\ket{\varphi} \otimes \ket{\varphi'}$ is identically distributed to $\ket{y_\bob} \otimes \ket{y_\charlie}$ where $y_\bob, y_\charlie \from \bit^n$. Now onto the construction of $(\bob', \charlie', \rho)$: \begin{itemize}
            \item The mixed state $\rho$ is defined as follows: Sample a random message $x \uniform \cM$ and key $k \from \gen_\ue(1^\secparam)$. Compute $\ket{\psi} \from \enc_\ue(k,x)$ and $\rho' \from \alice(\ket{\psi})$. Finally sample a random bit $m \uniform \bit$ and output $\rho = \rho' \otimes \ket{k,x,m}_\reg{B} \otimes \ket{k,x,m}_\reg{C}$.
            \item $\bob'$ on input $\ket{\phi} = \sum_y \beta_y \ket{y}$ computes the state $\ket{\phi'} = \sum_{y} \beta_y \ket{y} \ket{\inner{y}{x} \oplus m}$. Then, it runs $\bob$ on input $\brac{ k, \ket{\phi'}, \rho_\reg{B} }$ to obtain $b_\bob \in \bit$ and outputs $b_\bob \oplus m$. We similarly define $\charlie'$.
        \end{itemize}
    On input $\ket{\varphi} \otimes \ket{\varphi}$, the probability that $(\bob', \charlie')$ output $(0,0)$ equals $p_0$, whereas on input $\ket{y_\bob} \otimes \ket{y_\charlie}$ the same probability equals $p_1$. Hence, $(\bob', \charlie', \rho)$ is a $\abs{p_0 - p_1}$-simultaneous state distinguisher for $\distr$, which suffices for the proof.
    \end{proof}

    \begin{claim} \label{clm:12}
        $p_1 \le p_2$.
    \end{claim}
    \begin{proof}
        Suppose $\abc$ succeeds in \textbf{Hybrid 1} with probability $p_1$. Construct $\abcprime$ for \textbf{Hybrid 2} as follows: run $\abc$ so that $\bob'$ obtains $b_\bob$ and $\charlie'$ obtains $b_\charlie$. Let $(k,y_\bob,z_\bob)$ and $(k,y_\charlie,z_\charlie)$ be the inputs given to $\bob'$ and $\charlie'$, respectively. $\bob'$ outputs $b_\bob \oplus z_\bob$ and $\charlie'$ outputs $b_\charlie \oplus z_\charlie$. Conditioned on $\abc$ succeeding, $\abcprime$ will also succeed.
    \end{proof}

    \begin{claim}   \label{clm:23}
        If $p_2 > 1/2$, then $p_3 \ge \poly(p_2 - 1/2)$.
    \end{claim}
    \begin{proof}
        Follows from \Cref{lem:gl_cor}, where $\abc$ corresponds to {\bf Hybrid 2} and $\abcprime$ corresponds to {\bf Hybrid 3}. Note that the bits $b_\bob = \inner{y_\bob}{x} \oplus m$ and $b_\charlie = \inner{y_\charlie}{x} \oplus m$ are uniform with correlation $b_\bob \oplus b_\charlie = \inner{y_\bob}{x} \oplus \inner{y_\charlie}{x}$ since $m$ is uniform.
    \end{proof}

    \begin{claim}   \label{clm:3}
        $p_3 \le \negl(\secparam)$.
    \end{claim}

    \begin{proof}
        This follows directly from the weak-UE security of $(\gen_\ue, \enc_\ue, \dec_\ue)$.
    \end{proof}

    \noindent Combining Claims \ref{clm:01},\ref{clm:12},\ref{clm:23},\ref{clm:3}, it follows that $p_0 \le 1/2 + \negl(\secparam)$.
    
\end{proof}

\paragraph{$t$-Copy Security.}
We show that our construction remains secure if up to $O(n)$ copies of the quantum ciphertext is given to the adversary in the unclonable security experiment.

\begin{theorem} \label{thm:sde_sec_tcopy}
    Let $c>0$ be a constant and $t\le (\log_2(1/\sqrt{C}) - c)n)$, then $(\gen, \enc, \dec)$ above is $t$-copy secure against identical ciphertexts.
\end{theorem}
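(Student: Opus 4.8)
The plan is to run essentially the hybrid argument of \Cref{thm:sde_security}, inserting two reductions at the front to dispose of the $t$ copies and one reduction at the end to remove the side information those introduce. Fix $(m_0,m_1)=(0,1)$, so the ciphertext handed to $\bob$ and $\charlie$ encodes the challenge bit $b$; write $\ket{\varphi}=\sum_y\alpha_y\ket{y}\from\haar_{2^n}$ for the Haar state used by $\enc$ and $\ket{\phi_m}=\sum_y\alpha_y\ket{y}\ket{\inner{y}{x}\oplus m}$ for the processed state, and let $p_i$ denote the optimal success probability of a cloning adversary in \textbf{Hybrid $i$}. \textbf{Hybrid 0} is the experiment of \Cref{def:sde_sec_tcopy}: both parties hold $t$ copies of $\ket{\phi_b}$ built from the \emph{same} $\ket{\varphi}$. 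First I would pass to \textbf{Hybrid 1}, where $\bob$ holds $t$ copies of $\ket{\phi_b}$ built from a Haar state $\ket{\varphi_\bob}$ and $\charlie$ holds $t$ copies built from an \emph{independent} Haar state $\ket{\varphi_\charlie}$. Since $\ket{\phi_b}^{\otimes t}$ is obtained from $\ket{\varphi}^{\otimes t}$ by the fixed copy-wise isometry $\ket{y}\mapsto\ket{y}\ket{\inner{y}{x}\oplus b}$, and $(k,x,b)$ together with the splitting adversary's precomputed state $\alice(\enc_\ue(k,x))$ can be hard-wired into a shared state, any $\abc$ separating \textbf{Hybrid 0} from \textbf{Hybrid 1} yields a non-local distinguisher for $(\distr^t)^{\id}$ versus $(\distr^t)^{\ind}$ with $\distr=\haar_{2^n}$; by \Cref{thm:shi_many_copy} (using $t=O(n)$ and $n$ polynomial in $\secparam$, so $t^2/2^{n/2}$ is negligible) this gives $\abs{p_0-p_1}\le\negl(\secparam)$.

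Next I would move to \textbf{Hybrid 2}, where $\bob$ is given $(k,\{(y_i^\bob,\inner{y_i^\bob}{x}\oplus b)\}_{i=1}^t)$ for i.i.d. uniform strings $y_i^\bob$, and $\charlie$ the analogous data with independent i.i.d. uniform $y_i^\charlie$ and the \emph{same} bit $b$. The point is that, conditioned on the $y_i^\bob$ being distinct (probability $1-O(t^2/2^n)$ by the birthday bound), $\bob$ can coherently symmetrize the $t$ pairs $(y_i^\bob,\inner{y_i^\bob}{x}\oplus b)$ and uncompute the symmetrizing permutation (recoverable because the pairs are distinct); since $\ket{y}\mapsto\ket{y}\ket{\inner{y}{x}\oplus b}$ is an isometry on the relevant $2^n$-dimensional subspace, the result is $\ket{\varphi_\bob}^{\otimes t}$ pushed through this isometry, where $\ket{\varphi_\bob}^{\otimes t}$ is restricted to distinct-type states and hence $O(t^2/2^n)$-close to a genuine $t$-copy Haar state by \Cref{lem:distinct_type}. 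So $\bob$ (and, by the independence of their samples, $\charlie$) can reconstruct the \textbf{Hybrid 1} input up to negligible trace distance, whence $p_1\le p_2+\negl(\secparam)$. \textbf{Hybrid 3} then changes only the winning condition to ``$\bob$ outputs $\inner{y_1^\bob}{x}$ and $\charlie$ outputs $\inner{y_1^\charlie}{x}$''; since each party holds the masked bit $\inner{y_1^\bob}{x}\oplus b$ respectively $\inner{y_1^\charlie}{x}\oplus b$, a \textbf{Hybrid 2} success (both parties correctly guessing $b$) translates into a \textbf{Hybrid 3} success, so $p_2\le p_3$.

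The crux is then getting from \textbf{Hybrid 3} to the weak-UE search experiment. I would reparametrize $\bob$'s data: setting $u:=y_1^\bob$ and $v_i:=y_i^\bob\oplus y_1^\bob$ for $i\ge 2$, the tuple $(u,v_2,\dots,v_t)$ is again i.i.d. uniform, $\bob$'s input is equivalent to $(k,u,\inner{u}{x}\oplus b,\{(v_i,\inner{v_i}{x})\}_{i=2}^t)$, and the extra samples $\{(v_i,\inner{v_i}{x})\}$ are now \emph{noiseless} and independent of the Goldreich--Levin challenge $(u,\inner{u}{x}\oplus b)$ and of the mask. This exhibits \textbf{Hybrid 3} (for both parties, with the shared uniform $b$ enforcing the correlation $\beta_\bob\oplus\beta_\charlie=\inner{u}{x}\oplus\inner{u'}{x}$ between the masks) as an instance of correlated simultaneous Goldreich--Levin, \Cref{lem:gl_cor}, with bipartite side information $\sigma=\bigl(k,\{(v_i,\inner{v_i}{x})\}_{i\ge 2},\{(v_i',\inner{v_i'}{x})\}_{i\ge 2}\bigr)$ and with $\rho$ instantiated as the UE ciphertext $\enc_\ue(k,x)$ handed to $\alice$. \Cref{lem:gl_cor} then yields an extractor $\abcprime$ that, on input $(\enc_\ue(k,x),\sigma)$, outputs $(x,x)$ with probability $\poly(p_3-\tfrac12)$. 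Finally I would turn $\abcprime$ into a cloning adversary against the $C^n$-weak unclonable security of $(\gen_\ue,\enc_\ue,\dec_\ue)$ with message $x$: the only ingredient of $\sigma$ not already available in the weak-UE experiment is the $2(t-1)$ bits $\inner{v_i}{x}$ and $\inner{v_i'}{x}$, which $\bob$ and $\charlie$ simply guess, correctly with probability $2^{-2(t-1)}$. Weak-UE security then forces $2^{-2(t-1)}\cdot\poly(p_3-\tfrac12)\le C^n$; since $2^{-2\log_2(1/\sqrt{C})\,n}=C^n$, a short calculation shows that for $t\le(\log_2(1/\sqrt{C})-c)n$ and $n$ polynomial in $\secparam$ this is possible only if $p_3-\tfrac12$ is negligible. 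Chaining the four bounds gives $p_0\le\tfrac12+\negl(\secparam)$.

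The step I expect to be most delicate is the interaction between the last two reductions: one must check that after reparametrization the noiseless samples $\{(v_i,\inner{v_i}{x})\}$ are genuinely independent of the Goldreich--Levin challenge and mask, so that they may be folded into the $\sigma$-input of \Cref{lem:gl_cor} rather than secretly leaking $y_1^\bob$; that all the nested reductions, down through \textbf{Hybrid 0}, preserve the ``$\abc$ is run once as a subprotocol'' property required by \Cref{lem:quantumGL}; and that the final inequality cleanly absorbs the $\poly(\secparam)$ and $\poly(p_3-\tfrac12)$ slack into the constant $c$. The $t$-copy reconstruction of \textbf{Hybrid 1} from \textbf{Hybrid 2} is the conceptually new ingredient, but it is the same coherent symmetrization-and-uncomputation trick used for unclonable encryption and is routine modulo the birthday-bound distinctness caveat.
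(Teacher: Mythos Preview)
Your proposal is correct and follows essentially the same route as the paper's proof: the paper uses hybrids $0$--$5$ (original $\to$ independent Haar via $t$-copy SHI $\to$ distinct classical tuples $\to$ i.i.d.\ tuples $\to$ reparametrize so only the first sample is masked $\to$ weak-UE search via correlated Goldreich--Levin), and the final reduction guesses the $2(t-1)$ noiseless bits exactly as you do. Your presentation merges the paper's distinct/non-distinct hybrids into one step and makes the ``change the winning condition to inner products'' step explicit (the paper leaves this implicit inside its application of \Cref{lem:gl_cor}), but these are cosmetic rearrangements of the same argument.
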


\begin{proof}
We generalize the proof above to the case when the adversary gets $t$ copies (each) of the ciphertext using the $t$-copy version of SSI. We use the hybrid method as before:  \begin{enumerate}
    \item {\bf Hybrid 0.} This is the original security experiment described in \Cref{def:sde_sec_tcopy}.
    \item {\bf Hybrid 1.} In this hybrid, instead of giving $\qcipher^{\otimes t}$, or\footnote{Note that we can compress $t$ copies of $k$ into a single copy since it is classical.} $ \brac{k, \brac{\sum_{y} \alpha_y \ket{y}\ket{\inner{y}{x} \oplus m} }^{\otimes t}}$, to both $\bob$ and $\charlie$, we give $ \brac{k, \brac{\sum_{y} \alpha_y \ket{y}\ket{\inner{y}{x} \oplus m} }^{\otimes t}}$ to $\bob$ and $ \brac{k, \brac{\sum_{y} \alpha'_y \ket{y}\ket{\inner{y}{x} \oplus m} }^{\otimes t}}$ to $\charlie$, where $\sum_y \alpha_y \ket{y} \from \distr$ and $\sum_y \alpha_y' \ket{y} \from \distr$ are independently generated.
    \item {\bf Hybrid 2.} In this hybrid, we instead give $\brac{ k,\brac{ y_\bob^i, \inner{y_\bob^i}{x} \oplus m}_{i\in [t]} }$ to $\bob$ and $\brac{ k,\brac{ y_\charlie^j, \inner{y_\charlie^j}{x} \oplus m}_{j\in [t]} }$ to $\charlie$, where $y_\bob^i, y_\charlie^j \in \bit^n$ are uniformly random independent classical strings conditioned on $\brac{y_\bob^i}_{i \in [t]}$ being $t$ distinct strings and $(y_\charlie^j)_{j \in [t]}$ being $t$ distinct strings.
    
    \item {\bf Hybrid 3.} In this hybrid, we remove the restriction that $\brac{y_\bob^i}_{i \in [t]}$ and $(y_\charlie^j)_{j \in [t]}$ are distinct.
    \item {\bf Hybrid 4.} In this hybrid, we instead give $\brac{k, y_\bob^1, \inner{y_\bob^1}{x} \oplus m, \brac{ y_\bob^i, \inner{y_\bob^i}{x}}_{i\in [t]\setminus[1]} }$ to $\bob$ and \\ $\brac{k, y_\charlie^1, \inner{y_\charlie^1}{x} \oplus m, \brac{ y_\charlie^i,  \inner{y_\charlie^i}{x}}_{i\in [t]\setminus[1]} }$ to $\charlie$, where $\brac{y_\bob^i}_{i \in [t]}$ and $(y_\charlie^j)_{j \in [t]}$ are uniformly random strings.
    \item {\bf Hybrid 5.} In this hybrid, we give $\brac{k, \brac{ y_\bob^i, \inner{y_\bob^i}{x}}_{i\in [t]\setminus[1]} }$ to $\bob$ and $\brac{k, \brac{ y_\charlie^i, \inner{y_\charlie^i}{x}}_{i\in [t]\setminus[1]} }$ to $\charlie$. We also change the victory condition by requiring that $\bob$ and $\charlie$ output $(x,x)$.
\end{enumerate}
Let $p_i$ be the optimal probability that a cloning adversary $\abc$ succeeds in \textbf{Hybrid $i$}. We will split the proof into the following claims:

\begin{claim}   \label{clm:01_t}
        $\abs{p_0 - p_1} \le \negl(\secparam)$.
\end{claim}
\begin{proof}
    Follows from $(\eps, \distr^t)$-SSI, and the proof is nearly identical to that of \Cref{clm:01}.
\end{proof}
\begin{claim}   \label{clm:12_t}
        $p_2 \ge p_1 - \negl(\secparam) $.
\end{claim}
\begin{proof}
    Suppose $\abc$ succeeds in {\bf Hybrid 1} with probability $p_1$. We will describe $\abcprime$ which succeeds with probability $p_2 - \negl(\secparam)$ in {\bf Hybrid 2}. Note that the only difference between the hybrids is the input given to $(\bob, \charlie)$ or $(\bobprime,\charlieprime)$. Therefore, it suffices to set $\aliceprime = \alice$ and show that $\bobprime$ (resp., $\charlieprime$) can transform his input to what $\bob$ (resp., $\charlie$) receives in {\bf Hybrid 1} up to negligible trace distance. Indeed, given $\brac{ y_\bob^i, \inner{y_\bob^i}{x} \oplus m}_{i\in [t]}$ with $y_\bob^1 \ne y_\bob^2 \ne \dots \ne y_\bob^t$, $\bobprime$ records $\brac{i, y_\bob^i}_{i \in [t]}$ and coherently applies a random index-wise permutation $\sigma \in S_t$, resulting in the state \begin{align*}
        \frac{1}{t!}\sum_{\sigma \in S_t} \ket{\sigma} \bigotimes_{i=1}^t \ket{y_\bob^{\sigma(i)}}\ket{\inner{y_\bob^{\sigma(i)}}{x} \oplus m }.
    \end{align*}
    Then using the table he recorded $\bobprime$ uncomputes and discards the first register, which results in the state \begin{align*}
        \frac{1}{t!}\sum_{\sigma \in S_t} \bigotimes_{i=1}^t \ket{y_\bob^{\sigma(i)}}\ket{\inner{y_\bob^{\sigma(i)}}{x} } = U_x^{\ot t} \ket{\type(y_\bob^1,\dots,y_\bob^t)} \ket{m}.
    \end{align*}
    Over the expectation ${(y_\bob^1,\dots,y_\bob^t) \from \distinct_{t,2^s}} $, the mixed state (by \Cref{lem:distinct_type}) is negligibly close to \begin{align*}
        U_x^{\ot t} \E_{\ket{\varphi} \from \distr} \brac{\varphi \ot \ketbraX{m}}^{\ot t} {U_x^\dagger}^{\ot t},
    \end{align*}
    which is the state $\bob$ receives in {\bf Hybrid 1}. Arguing similarly for $\charlieprime$, the claim follows.
\end{proof}

\begin{claim}   \label{clm:23_t}
        $\abs{p_3 - p_2} \le \negl(\secparam) $.
\end{claim}
\begin{proof}
    By collision bound, the difference is bounded by $O(t^2/2^n)$ which is negligible.
\end{proof}

\begin{claim}   \label{clm:34_t}
        $p_4 \ge p_3$.
\end{claim}
\begin{proof}
    Let $\abc$ succeed in {\bf Hybrid 3} with probability $p_3$. We construct $\abcprime$ that succeeds with the same probability $p_3$ in {\bf Hybrid 4}: \begin{itemize}
        \item $\aliceprime$ is the same as $\alice$.
        \item $\bobprime$ is given $\brac{k, y_\bob^1, \inner{y_\bob^1}{x} \oplus m, \brac{ y_\bob^i, \inner{y_\bob^i}{x}}_{i\in [t]\setminus[1]} }$. For $i \in [t] \setminus [1]$ he computes the XOR of $(y_\bob^1, \inner{y_\bob^1}{x} \oplus m)$ and $(y_\bob^i, \inner{y_\bob^i}{x})$ to get $(y_\bob^1 \oplus y_\bob^i, \inner{y_\bob^1 \oplus y_\bob^i}{x} \oplus m) = (z_\bob^i, \inner{z_\bob^i}{x} \oplus m)$, where $z_\bob^i := y_\bob^1 \oplus y_\bob^i$. Also set $z_\bob^1:= y_\bob^1$. Then $\bobprime$ runs $\bob$ on input $\brac{ k,\brac{ z_\bob^i, \inner{z_\bob^i}{x} \oplus m}_{i\in [t]} }$ and outputs the answer.
        \item $\charlieprime$ is defined similarly.
    \end{itemize}
    Since $\brac{z_\bob^i}_{i \in [t]}$ are uniformly random and independent, the view of $\bob$ is the same as that in {\bf Hybrid 3}. Arguing similarly for $\charlie$, it follows that $\abcprime$ succeeds with probability $p_3$.
\end{proof}

\begin{claim}   \label{clm:45_t}
    If $p_4 > 1/2$, then $p_5 \ge \poly(p_4 - 1/2)$.
\end{claim}

\begin{proof}
    Follows from \Cref{lem:gl_cor}, where $\abc$ corresponds to {\bf Hybrid 4} and $\abcprime$ corresponds to {\bf Hybrid 5}. Note that the input $\sigma$ received by $(\bob, \charlie)$ (and likewise by $(\bobprime, \charlieprime)$) equals the mixed state defined by $\brac{k, \brac{ y_\bob^i, \inner{y_\bob^i}{x}}_{i\in [t]\setminus[1]} }$ and $\brac{k, \brac{ y_\charlie^i, \inner{y_\charlie^i}{x}}_{i\in [t]\setminus[1]} }$.
\end{proof}

\begin{claim}   \label{clm:5_t}
    $p_5 \le \negl(\secparam)$.    
\end{claim}

\begin{proof}
    Suppose $\abc$ succeeds in {\bf Hybrid 5} with probability $p_5$. We will construct $\abcprime$ against the weak unclonable security of $(\gen_\ue, \enc_\ue, \dec_\ue)$ as follows: \begin{itemize}
        \item $\aliceprime$ is the same as $\alice$.
        \item $\bobprime$ receives $k$ as input in the challenge phase. He samples\footnote{We note that this idea was also used in \cite{CG23}.} $y_\bob^i \uniform \bit^n$ and $b_\bob^i$ for $i = 2,\dots,t$. Then it runs $\bob$ on input $\brac{k, \brac{y_\bob^i, b_\bob^i}_{i \in [t]\setminus [1]}}$ and outputs the answer. 
        \item $\charlieprime$ is defined similarly.
    \end{itemize}
    Note that conditioned on the event that $b_\bob^i = \inner{y_\bob^i}{x}$ and $b_\charlie^i = \inner{y_\charlie^i}{x}$ for all $i \in [t]\setminus [1]$, $\abcprime$ perfectly imitates the view of $\abc$ and hence succeeds with probability $p_5$. Thus, by the security of $(\gen_\ue, \enc_\ue, \dec_\ue)$ we have $2^{-2(t-1)}p_5 \le C^n \implies p_5 < 2^{2t}C^n =4^{t-\log(1/\sqrt{C})} \le 4^{-cn} \le \negl(\secparam). $
\end{proof}
\noindent Combining Claims \ref{clm:01_t}, \ref{clm:12_t}, \ref{clm:23_t}, \ref{clm:34_t}, \ref{clm:45_t}, and \ref{clm:5_t}, we have showed that $p_0 \le 1/2 + \negl(\secparam)$ as desired.
\end{proof}

By instantiating $(\gen_\ue, \enc_\ue, \dec_\ue)$ with the construction of \cite{BL20}, we can set $C = 0.86$ and thus $t = n/10$. Therefore, by setting $n=10t$ we get the following corollary:

\begin{corollary}    \label{cor:sde}
    For any $t = \poly(\secparam)$, there exists a single-decryptor encryption scheme with quantum ciphertexts $t$-copy secure against identical ciphertexts (\Cref{def:sde_sec_tcopy}) in the plain model.
\end{corollary}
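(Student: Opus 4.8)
The plan is to derive the corollary by invoking \Cref{thm:sde_sec_tcopy} with a concrete plain-model instantiation of the underlying weakly secure unclonable encryption scheme. Concretely, I would use the Broadbent--Lord construction \cite{BL20}, which (as recalled in the preliminaries) achieves $\eps$-weak unclonable security with $\eps < 0.86^{\secparam}$ for message length $n = \secparam$; equivalently, the construction of \Cref{sec:sde_construction} can be instantiated with the constant $C = 0.86$.

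Given a target copy bound $t = \poly(\secparam)$, the next step is to pick the underlying UE message/security parameter $n$ large enough. Setting $n := \max\{10t,\secparam\}$ works: since $\log_2(1/\sqrt{0.86}) > 0.108$ and $t \le n/10 = 0.1\,n$, the hypothesis $t \le (\log_2(1/\sqrt{C}) - c)\,n$ of \Cref{thm:sde_sec_tcopy} holds with, say, $c = 0.008$. I would then check the remaining negligibility conditions threaded through the proof of \Cref{thm:sde_sec_tcopy}: the $t$-copy simultaneous Haar indistinguishability error $\eps = O(t^2/2^{n/2})$ invoked in \Cref{clm:01_t} (and the collision bound of \Cref{clm:23_t}, governed by \Cref{lem:distinct_type}) is at most $t^2/2^{\secparam/2} = \negl(\secparam)$ because $n \ge \secparam$ and $t = \poly(\secparam)$; and the weak-UE advantage $C^n = 0.86^n \le 0.86^{\secparam}$ invoked in \Cref{clm:5_t} is likewise $\negl(\secparam)$. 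Since $n = \poly(\secparam)$, the scheme is QPT, and its correctness follows from that of the Broadbent--Lord scheme exactly as argued after the construction in \Cref{sec:sde_construction}. Invoking \Cref{thm:sde_sec_tcopy} then yields $t$-copy security against identical ciphertexts, and the scheme is in the plain model since \cite{BL20} is plain-model and the only additional ingredient is a Haar random state (or, by \Cref{rem:tdesign}, a $2t$-state design, which is a plain-model object), so no oracles or unproven conjectures are needed.

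I do not expect a genuine obstacle here: the only delicate point is parameter bookkeeping, namely confirming that the weak-security loss exponent $\log_2(1/\sqrt{0.86})$ leaves positive slack above $t/n$ for the constant $c$, and that choosing $n = \Theta(\max\{t,\secparam\})$ keeps every error term negligible in the genuine security parameter $\secparam$ while still respecting $t = \poly(\secparam)$.
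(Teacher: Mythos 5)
Your proposal matches the paper's proof exactly: instantiate the weakly secure UE scheme with Broadbent--Lord so that $C=0.86$, then apply \Cref{thm:sde_sec_tcopy} with $n=\Theta(t)$ so that $t/n=0.1<\log_2(1/\sqrt{0.86})\approx 0.1088$. Your choice $n=\max\{10t,\secparam\}$ is a slightly more careful version of the paper's $n=10t$ (it handles the degenerate case of sublogarithmic $t$), but the argument is otherwise identical.
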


\begin{remark}[Bounded Number of Copies]   \label{rem:sde_bounded}
    Our construction is not $t$-copy secure against identical ciphertexts if $t$ is an unbounded polynomial due to a simple attack that measures each copy of the ciphertext in the computational basis. Every measurement (except the first) will give a random linear constraint on the bits of $x$, hence a linear number of copies on average suffice to solve for $x$ entirely. Nonetheless, we can set $n$ accordingly for any fixed polynomial $t$.
\end{remark}
\newcommand{\pirexp}[2]{\mathsf{PirExp}_{#1}\brac{ 1^\secparam, #2 }}
\newcommand{\pirexpT}[2]{\mathsf{PirExp}_{#1}^t\brac{ 1^\secparam, #2 }}
\newcommand{\ueq}{\mathsf{UE}^\mathsf{Q}}
\newcommand{\referee}{\mathcal{R}}

\subsection{Unclonable Encryption with Quantum Keys} \label{sec:ue}
Using our ideas in \Cref{sec:sde}, we construct unclonable encryption with quantum keys in the plain model. Besides allowing the decryption key to be quantum, we do not relax the syntax of unclonable encryption. In particular, we achieve identical-challenge security with quantum challenges.
\subsubsection{Definitions}

\noindent We consider a relaxation of the definitions of unclonable encryption considered by~\cite{BL20,AK21}. 

\begin{definition}[UE with Quantum Decryption Keys]  \label{def:ue_qkeys}
    An \emph{unclonable encryption scheme with quantum decryption keys} is a triplet $(\gen, \enc, \dec)$ of QPT algorithms with the following syntax: \begin{itemize}
        \item $\gen(1^\secparam)$ on input a security parameter in unary outputs a classical encryption key $\ekey$ and a quantum decryption key $\qkey$. We require that $\gen$ is pseudo-deterministic.
        \item $\enc(\ekey, m)$ on input a classical encryption key $\ekey$ and a classical message $m$ outputs a quantum ciphertext $\qcipher$.
        \item $\dec(\qkey, \qcipher)$ on input a quantum decryption key $\qkey$ and a quantum ciphertext outputs a classical message $m$.
    \end{itemize}
\end{definition}

\noindent See \Cref{rem:identical_ct} regarding the pseudo-determinism requirement for $\gen$.

\begin{remark}[Asymmetric secret keys.]
    Note that the syntax above is atypical in the sense that it is a secret-key encryption scheme with asymmetric encryption/decryption keys. The advantage of our syntax is that only the decryption key has to be quantum, while the encryption key can remain classical. We leave it for future work to consider the symmetric-key variant of this primitive with quantum encryption keys. 
\end{remark}

\begin{definition}[Correctness] \label{def:ue_qkeys_correctness}
    An unclonable encryption scheme $(\gen, \enc, \dec)$ with quantum decryption keys is \emph{correct} if for any security parameter $\secparam$ and any message $m$ we have \begin{align*}
        \pr{ m' = m \; : \; \substack{ (\ekey, \qkey) \from \gen(1^\secparam) \\ \qcipher \from \enc(\ekey, m) \\ m' \from \dec(\qkey, \qcipher)} } \ge 1 - \negl(\secparam).
    \end{align*}
\end{definition}

\noindent Below we use the notation $\gen^T(1^\secparam)$ defined similarly as in \Cref{sec:sde}.

\begin{definition}[Unclonable Security] \label{def:ue_qkeys_security}
    An unclonable encryption scheme $\ueq = (\gen, \enc, \dec)$ with quantum decryption keys satisfies \emph{unclonable security} if for any cloning adversary $\abc$ we have 
    
    \begin{align*}
        \pr{b_\bob = b_\charlie = b \; : \; \substack{ (m_0, m_1) \from \alice(1^\secparam) \\ (\ekey, \qkey^{\ot 2}) \from \gen^2(1^\secparam), \quad b \uniform \bit, \quad \qcipher \from \enc(\ekey, m_b) \\ \rho_\reg{BC} \from \alice(\qcipher) \\ b_\bob \from \bob(\qkey, \rho_\reg{B}), \quad b_\charlie \from \charlie(\qkey, \rho_\reg{C}) }},
    \end{align*}
    where $\rho_\reg{E}$ denotes the $\reg{E}$ register of the bipartite state $\rho_\reg{B C}$ for $\reg{E} \in \{\reg{B}, \reg{C}\}$.
\end{definition}

\noindent Similar to \Cref{def:sde_sec_tcopy}, we can define the $t$-copy security of UE with quantum keys, which we formally state below. The only difference is that the adversary will get $t$ copies of the decryption key in the verification phase rather than a single copy.

\begin{definition}[$t$-Copy Unclonable Security] \label{ue_qkeys_sec_tcopy}
    An unclonable encryption scheme $\ueq = (\gen, \enc, \dec)$ with quantum decryption keys satisfies \emph{$t$-copy unclonable security} if for any cloning adversary $\abc$ we have 
   
    \begin{align*}
        \pr{b_\bob = b_\charlie = b \; : \; \substack{ (m_0, m_1) \from \alice(1^\secparam) \\ (\ekey, \qkey^{\ot 2t}) \from \gen^{2t}(1^\secparam), \quad b \uniform \bit, \quad \qcipher \from \enc(\ekey, m_b) \\ \rho_\reg{BC} \from \alice(\qcipher) \\ b_\bob \from \bob(\qkey^{\ot t}, \rho_\reg{B}), \quad b_\charlie \from \charlie(\qkey^{\ot t}, \rho_\reg{C}) }},
    \end{align*}
    where $\rho_\reg{E}$ denotes the $\reg{E}$ register of the bipartite state $\rho_\reg{B C}$ for $\reg{E} \in \{\reg{B}, \reg{C}\}$.
\end{definition}

\subsubsection{Construction} Let $(\gen_\ue, \enc_\ue, \dec_\ue)$ with message space $\cM \subseteq \bit^n$ and $\distr = \haar_{2^n}$ be defined as in \Cref{sec:sde_construction}. We construct unclonable encryption with quantum decryption keys $\ueq = (\gen, \enc, \dec)$ for single-bit messages as follows:

\begin{itemize}
    \item $\gen(1^\secparam):$ on input security parameter $\secparam$, do the following:
    \begin{itemize}
        \item Compute $k \from \gen_\ue(1^\secparam)$,
        \item Sample $x \xleftarrow{\$} \cM$,
        \item Compute $\ket{\psi} \leftarrow \distr$, 
        \item Sample $\widetilde{b} \xleftarrow{\$} \{0,1\}$,
        \item Compute $\ket{\phi} = U_x \ket{\psi}\ket{\widetilde{b}}$, where $U_x \ket{y}\ket{z} = \ket{y}\ket{\inner{y}{x} \oplus z}$. 
        
    \end{itemize}
    \noindent Output the decryption key $\qkey = (k,\ket{\phi})$ and the encryption key $\ekey=(k, x,\widetilde{b})$.

    \item $\enc(\ekey, m):$ on input $\ekey=(k,x, \widetilde{b})$ and a message $m \in \{0,1\}$, do the following: 
    \begin{itemize}
        \item Compute $\qct \leftarrow \enc_{\ue}(k,x)$.
    \end{itemize}
    Output $\ct=(\qct, \widetilde{b} \oplus m)$
    \item $\dec(\qkey, {\ct}):$ on input $\qkey = (k,\ket{\phi})$ and $\ct=(\qct, b')$, compute $x \from \dec_\ue(k, \qct)$. Then compute $U_x\ket{\phi}$ and measure the second register to obtain $\widetilde{b}$. Output $\widetilde{b} \oplus b'$.
\end{itemize}

\paragraph{Security.}
Security of this scheme can be seen to be equivalent to the security of the SDE construction in the previous section. We give the formal details below.

\begin{theorem} \label{thm:ue_qkeys_sec}
    The construction $\ueq = (\gen, \enc, \dec)$ above satisfies correctness and unclonable security.
\end{theorem}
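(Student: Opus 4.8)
The plan is to reduce correctness and unclonable security of $\ueq = (\gen, \enc, \dec)$ directly to the corresponding properties of the single-decryptor encryption scheme $(\gen_{\mathsf{SDE}}, \enc_{\mathsf{SDE}}, \dec_{\mathsf{SDE}})$ constructed in \Cref{sec:sde_construction}, for which we already proved security (\Cref{thm:sde_security}, \Cref{thm:sde_sec_tcopy}). The key observation is that the two schemes are essentially the same construction with the roles of the ``decryption key'' and the ``ciphertext'' swapped: in the SDE scheme the decryption key is $\enc_\ue(k,x)$ and the ciphertext carries the Haar state $\sum_y \alpha_y \ket{y}\ket{\inner{y}{x}\oplus m}$; in $\ueq$ the decryption key carries the Haar state $\sum_y \alpha_y \ket{y}\ket{\inner{y}{x}\oplus \widetilde b}$ and the ciphertext is $\enc_\ue(k,x)$ together with the single classical bit $\widetilde b \oplus m$. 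First I would verify correctness: by correctness of $(\gen_\ue,\enc_\ue,\dec_\ue)$, $\dec_\ue(k,\qct)$ recovers $x$ except with negligible probability, after which $U_x\ket{\phi} = U_x U_x \ket{\psi}\ket{\widetilde b} = \ket{\psi}\ket{\widetilde b}$, so measuring the second register yields $\widetilde b$, and $\widetilde b \oplus b' = \widetilde b \oplus (\widetilde b \oplus m) = m$.

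For unclonable security, I would run the same sequence of hybrids as in the proof of \Cref{thm:sde_security} (and \Cref{thm:sde_sec_tcopy} for the $t$-copy case), adapted to the $\ueq$ syntax. Concretely: \textbf{Hybrid 0} is \Cref{def:ue_qkeys_security}; \textbf{Hybrid 1} replaces the two identical copies of the Haar state $\sum_y\alpha_y\ket{y}$ inside the decryption keys given to $\bob$ and $\charlie$ with independently sampled Haar states, which is indistinguishable up to $\negl(\secparam)$ by $(\eps,\distr^t)$-SSI (\Cref{thm:shi_many_copy}) exactly as in \Cref{clm:01}; \textbf{Hybrid 2} measures the $\ket{y}$ registers, handing $\bob$ the classical pair $(y_\bob, \inner{y_\bob}{x}\oplus\widetilde b)$ and $\charlie$ the pair $(y_\charlie, \inner{y_\charlie}{x}\oplus\widetilde b)$ (together with $k$ and the ciphertext $\qct$); \textbf{Hybrid 3} changes the victory condition to requiring $\bob$ to output $\inner{y_\bob}{x}$ and $\charlie$ to output $\inner{y_\charlie}{x}$, which can only help the adversary since $\bob$ knows $\widetilde b \oplus m$ is publicly available in $\ct$ and can XOR it off (here the key point is that, unlike in SDE, the one-time-pad bit appears in the classical part of the ciphertext rather than needing to be recovered from the quantum key); \textbf{Hybrid 4} invokes \Cref{lem:gl_cor} (correlated simultaneous Goldreich-Levin) — the bits $\inner{y_\bob}{x}\oplus\widetilde b$ and $\inner{y_\charlie}{x}\oplus\widetilde b$ are uniform with XOR correlation $\inner{y_\bob}{x}\oplus\inner{y_\charlie}{x}$ since $\widetilde b$ is uniform — to extract $(x,x)$, and \textbf{Hybrid 5} is the weak unclonable security game of $(\gen_\ue,\enc_\ue,\dec_\ue)$, where $\bob$ and $\charlie$ must each output $x$ from $k$ and $\qct = \enc_\ue(k,x)$; this probability is $\negl(\secparam)$ by \Cref{def:ue_sec_weak}. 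For the $t$-copy version one additionally inserts the intermediate hybrids from \Cref{thm:sde_sec_tcopy} that switch from $t$ copies of a Haar state to $t$ independent samples (via the type-state argument of \Cref{lem:distinct_type} and \Cref{clm:12_t}) and that absorb the extra Goldreich-Levin samples $(y_\bob^i,\inner{y_\bob^i}{x})_{i\ge 2}$ into the guessing argument of \Cref{clm:5_t}, which is where the linear bound $t \le (\log_2(1/\sqrt C) - c)n$ on the number of copies enters.

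The main obstacle — though it is more a matter of careful bookkeeping than a new idea — is checking that the one-time-pad bit $\widetilde b$ plays its role correctly across the reshuffled syntax: in the SDE proof the message bit is recovered inside the quantum ciphertext, whereas here the adversary already holds $\widetilde b \oplus m$ in the clear, so I must confirm that the transition to the ``output $\inner{y_\bob}{x}$'' victory condition (\textbf{Hybrid 2} to \textbf{Hybrid 3}) goes through with equality of optimal success probabilities and that the correlation hypothesis of \Cref{lem:gl_cor} is met with $\widetilde b$ (rather than the message bit $m$ as in SDE) supplying the uniform randomness. Given that these checks succeed, chaining the claims via triangle inequalities yields $\pr{b_\bob = b_\charlie = b} \le 1/2 + \negl(\secparam)$, completing the proof; the cleanest writeup is to state the hybrids explicitly and remark that each transition is identical to the correspondingly-numbered claim in \Cref{thm:sde_security} (resp.\ \Cref{thm:sde_sec_tcopy}), since the schemes differ only in which component is labeled ``key'' and which is labeled ``ciphertext.''
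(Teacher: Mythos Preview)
Your proposal is correct and rests on the same observation the paper makes: the $\ueq$ scheme and the SDE scheme of \Cref{sec:sde_construction} are the same object with the labels ``decryption key'' and ``ciphertext'' swapped, so the SDE security proof carries over. Correctness is handled identically.

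The difference is purely in execution. You propose to re-run the full hybrid sequence of \Cref{thm:sde_security} inside the $\ueq$ syntax (SSI, measure the Haar register, change victory condition, correlated Goldreich--Levin, weak UE security). The paper instead performs a \emph{single} black-box reduction to \Cref{thm:sde_security} itself: it defines one intermediate hybrid in which $\alice$ receives only $\qct$ (dropping $\widetilde b \oplus m$) and the victory condition is changed to ``$\bob,\charlie$ each output $\widetilde b$,'' then observes that this hybrid \emph{is} the SDE security game with $\widetilde b$ playing the role of the SDE message bit. The only step left is to show $p_0 \le p_1$, which is done by having $\aliceprime$ sample a fresh bit $b_1$, feed $(\qct, b_1)$ to $\alice$, and have $\bobprime,\charlieprime$ XOR $b_1$ into the outputs of $\bob,\charlie$. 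This is essentially your Hybrid~$2\to3$ transition promoted to a full reduction, after which \Cref{thm:sde_security} is invoked wholesale rather than replayed.

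Both routes are valid; the paper's buys brevity and avoids the bookkeeping you flag (verifying the $\widetilde b$ correlation hypothesis of \Cref{lem:gl_cor}, etc.)\ by pushing all of that into the already-proved SDE theorem. Your unrolled version has the minor advantage of being self-contained, and your concluding remark that ``the cleanest writeup is to \ldots\ remark that each transition is identical'' is exactly the paper's shortcut taken one step further.
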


\begin{proof}
    Correctness is trivial to check. We show security by defining a hybrid experiment: \begin{enumerate}
        \item {\bf Hybrid 0.} 
        
        This is the original piracy experiment defined in \Cref{def:ue_qkeys_security}.
        We can set $m_0 = 0, m_1 = 1$ since the message-length is 1.
        \item {\bf Hybrid 1.}
        In this hybrid, we only give $\qcipher$ to $\alice$ instead of $(\qcipher, \widetilde{b} \oplus m)$. In addition, we change the victory condition for $\bob$ and $\charlie$: they now each need to output $\widetilde{b}$ instead of $m$.
        
    \end{enumerate}
    Let $p_i$ be the optimal winning probability of a cloning adversary in {\bf Hybrid $\mathbf{i}$}. Observe that {\bf Hybrid 1} is a relabeling of {\bf Hybrid 0} in the proof of \Cref{thm:sde_security} (i.e. the SDE security experiment), so that $p_1 \le 1/2 + \negl(\secparam)$. Thus, it suffices to show $p_0 \le p_1$, which we do below:
    \par Let $\abc$ be an adversary that succeeds in {\bf Hybrid 0} with probability $p_0$. We will describe $\abcprime$ which succeeds in {\bf Hybrid 1} with probability $p_0$: \begin{itemize}
        \item $\aliceprime$ receives $\qcipher$ from the referee. She samples a random bit $b_1 \in \bit$ and runs $\alice$ on input $(\qcipher, b_1)$.
        \item $\bobprime$ runs $\bob$ on his input to obtain $m_B \in \bit$. He outputs $m_B \oplus b_1$.
        \item Similarly $\charlieprime$ runs $\charlie$ to obtain $m_C \in \bit$ and outputs $m_C \oplus b_1$.
    \end{itemize} 
    By looking at the view of $\abc$ we see that $\abcprime$ succeeds if and only if $\abc$ succeeds.
    
\end{proof}

\begin{remark}
    Our reduction above closely follows \cite{GZ20} who show an equivalence between single-decryptor encryption and unclonable encryption.
\end{remark}

\noindent Combining the proofs of \Cref{thm:sde_sec_tcopy,thm:ue_qkeys_sec}, we get many copy unclonable security for our construction, which we formally state below:

\begin{theorem} \label{thm:ue_qkeys_sec_tcopy}
    Let $c>0$ be a constant and $t \le (\log_2(1/\sqrt{C})-c)n$, then $\ueq = (\gen, \enc, \dec)$ above satisfies $t$-copy unclonable security.
\end{theorem}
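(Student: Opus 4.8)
The plan is to combine the proofs of \Cref{thm:sde_sec_tcopy} and \Cref{thm:ue_qkeys_sec}, exactly as the sentence preceding the theorem suggests: \Cref{thm:sde_sec_tcopy} already gives $t$-copy security of the SDE scheme against identical ciphertexts, and \Cref{thm:ue_qkeys_sec} shows that unclonable security of $\ueq$ reduces to SDE security by a one-line relabeling. I would simply replay that relabeling reduction in the $t$-copy setting. Concretely, I would introduce two hybrids for the experiment of \Cref{ue_qkeys_sec_tcopy}, taking $m_0 = 0$, $m_1 = 1$ without loss of generality:
\begin{itemize}
    \item \textbf{Hybrid 0:} the original $t$-copy unclonable security experiment. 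Here $\alice$ receives the ciphertext $\ct = (\qct, \widetilde{b} \oplus m_b)$ with $\qct \from \enc_\ue(k,x)$, while $\bob$ and $\charlie$ each receive $t$ copies of the quantum decryption key $\qkey = (k, \ket{\phi})$, where $\ket{\phi} = U_x\ket{\psi}\ket{\widetilde{b}} = \sum_y \alpha_y \ket{y}\ket{\inner{y}{x}\oplus\widetilde{b}}$ and $\ket{\psi} = \sum_y \alpha_y \ket{y} \from \distr$.
    \item \textbf{Hybrid 1:} give only $\qct$ (dropping the classical bit $\widetilde{b}\oplus m_b$) to $\alice$, and change the victory condition so that $\bob$ and $\charlie$ must each output $\widetilde{b}$.
\end{itemize}
Write $p_i$ for the optimal winning probability of a cloning adversary in \textbf{Hybrid }$i$.

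First I would show $p_0 \le p_1$ by the same reduction as in \Cref{thm:ue_qkeys_sec}: given $\abc$ for \textbf{Hybrid 0}, build $\abcprime$ for \textbf{Hybrid 1} where $\aliceprime$ samples a uniform bit $b_1$ and runs $\alice$ on $(\qct, b_1)$ — this perfectly simulates \textbf{Hybrid 0} with $b_1$ in the role of $\widetilde{b}\oplus m_b$ — and $\bobprime,\charlieprime$ run $\bob,\charlie$ and XOR their outputs with $b_1$; one checks that $\abcprime$ wins \textbf{Hybrid 1} iff $\abc$ wins \textbf{Hybrid 0}. Since the $t$ copies of $\ket{\phi}$ handed to $\bob$ and $\charlie$ are passed through untouched by this reduction, the argument is identical to the single-copy case. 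Next I would observe that \textbf{Hybrid 1} is literally a relabeling of \textbf{Hybrid 0} in the proof of \Cref{thm:sde_sec_tcopy}: in the SDE construction, the quantum ciphertext distributed to $\bob$ and $\charlie$ (in $t$ copies each) is precisely $(k, \sum_y \alpha_y\ket{y}\ket{\inner{y}{x}\oplus m})$, the SDE decryption key given to $\alice$ is $\enc_\ue(k,x)$, and the goal is to guess the message bit — which is exactly \textbf{Hybrid 1} with $\widetilde{b}$ playing the role of the SDE message. Because $t \le (\log_2(1/\sqrt{C}) - c)n$ lies in the range permitted by \Cref{thm:sde_sec_tcopy}, we get $p_1 \le 1/2 + \negl(\secparam)$, and hence $p_0 \le 1/2 + \negl(\secparam)$, which is the claim.

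I expect the only real obstacle to be bookkeeping rather than genuine difficulty: one must carefully verify that ``$t$ copies of the quantum decryption key of $\ueq$'' coincides, after the relabeling, with ``$t$ copies of the quantum ciphertext of the SDE scheme'' (including that the classical component $k$ is compressed to a single copy in both experiments, as is already done implicitly), and that the pseudo-determinism of $\gen$ — needed for $\gen^{2t}$ in \Cref{ue_qkeys_sec_tcopy} to be well-defined — matches the pseudo-determinism of $\enc$ in the SDE scheme (both amount to the Haar state being sampled from classical randomness). No new analysis of simultaneous Haar indistinguishability, the type basis, or correlated Goldreich-Levin is required; all of it is inherited wholesale from \Cref{thm:sde_sec_tcopy}.
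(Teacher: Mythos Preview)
Your proposal is correct and takes essentially the same approach as the paper, which simply states that the result follows by ``combining the proofs of \Cref{thm:sde_sec_tcopy,thm:ue_qkeys_sec}'' without spelling out any details. Your write-up is in fact a faithful and careful fleshing-out of exactly that combination: the relabeling reduction from \Cref{thm:ue_qkeys_sec} goes through unchanged in the $t$-copy setting (since the copies of $\qkey$ are passed through untouched), and \textbf{Hybrid 1} is indeed the $t$-copy SDE experiment of \Cref{thm:sde_sec_tcopy} with $\widetilde{b}$ playing the role of the SDE message bit.
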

\begin{corollary}    \label{cor:ue}
    For any $t = \poly(\secparam)$, there exists an unclonable encryption scheme with quantum decryption keys in the plain model which is $t$-copy unclonable secure.(\Cref{def:sde_sec_tcopy}).
\end{corollary}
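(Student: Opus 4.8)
The plan is to deduce \Cref{cor:ue} from \Cref{thm:ue_qkeys_sec_tcopy} by instantiating the construction $\ueq$ of \Cref{sec:ue} with concrete parameters, exactly in the way \Cref{cor:sde} is obtained from \Cref{thm:sde_sec_tcopy}. First I would instantiate the underlying weakly secure unclonable encryption scheme $(\gen_\ue, \enc_\ue, \dec_\ue)$ with the Broadbent--Lord scheme~\cite{BL20}, which has $C^n$-weak unclonable security for $n$-bit messages with $C = 0.86$. Since $\log_2(1/\sqrt{0.86}) > 1/10$, one can fix a constant $c > 0$ with $\log_2(1/\sqrt C) - c \ge 1/10$, so that the hypothesis $t \le (\log_2(1/\sqrt C) - c)\,n$ of \Cref{thm:ue_qkeys_sec_tcopy} is implied by the simpler condition $t \le n/10$. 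The resulting scheme is for single-bit, one-time messages, which is what the corollary asserts.

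Next, given the target polynomial $t = t(\secparam)$, I would set the message-length parameter of the construction to $n := \max(10\,t(\secparam), \secparam)$, which is still a polynomial in $\secparam$. This single choice discharges both constraints needed by the security analysis: (i) $t \le n/10$, so \Cref{thm:ue_qkeys_sec_tcopy} applies; and (ii) $n \ge \secparam$, so that $2^{n/2} \ge 2^{\secparam/2}$ is superpolynomial and hence $t^2/2^{n/2}$ — the quantity that must be negligible for $(\eps,\distr^t)$-SSI to hold with $\eps = \negl(\secparam)$ via \Cref{thm:shi_many_copy}, as well as the collision-bound terms of the form $O(t^2/2^n)$ used in the hybrids — is $\negl(\secparam)$. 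In particular $t = o(2^{n/4})$ holds, so \Cref{thm:shi_many_copy} is in its valid regime.

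There is one genuine subtlety, namely efficiency: \Cref{def:ue_qkeys} requires $(\gen,\enc,\dec)$ to be QPT, whereas an exact Haar-random state cannot be prepared efficiently. Following \Cref{rem:tdesign}, I would replace the Haar-random state $\ket{\psi}$ in $\gen$ with a state sampled from classical randomness out of an exact $2t$-state design; this keeps $\gen$ pseudo-deterministic and QPT. I would then verify that every invocation of the Haar measure in the proof of \Cref{thm:ue_qkeys_sec_tcopy} — the moment identities behind \Cref{thm:shi_many_copy}, the type-state approximation \Cref{lem:distinct_type}, and the expectations $\E_{\ket{\varphi} \from \distr}\brac{\varphi \ot \ketbraX{m}}^{\ot t}$ appearing in hybrid transitions such as \Cref{clm:12_t} — depends only on the $(2t)$-th moment of the distribution, which a $2t$-design reproduces exactly; hence the entire hybrid chain goes through verbatim with the design in place of the Haar measure.

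Assembling the pieces: with this instantiation $\ueq$ is a tuple of QPT algorithms in the plain model, correctness holds by construction (as in \Cref{thm:ue_qkeys_sec}), and \Cref{thm:ue_qkeys_sec_tcopy} yields $t$-copy unclonable security, which proves the corollary. The main work is not the parameter bookkeeping but the efficiency step: carefully confirming that no step of the long hybrid argument underpinning \Cref{thm:sde_sec_tcopy}/\Cref{thm:ue_qkeys_sec_tcopy} secretly uses a property of the full Haar measure beyond its low-order moments. Modulo that routine-but-tedious check, the corollary is immediate.
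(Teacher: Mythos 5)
Your proof takes essentially the same route as the paper: instantiate the underlying weak UE scheme with the Broadbent--Lord construction to get $C = 0.86$, choose $n$ large enough that both $t \le n/10$ and $t^2/2^{n/2} = \negl(\secparam)$ hold, and invoke \Cref{thm:ue_qkeys_sec_tcopy} together with \Cref{rem:tdesign} for efficiency. The only difference is that you are slightly more careful than the paper's one-line "set $n=10t$" by taking $n = \max(10t,\secparam)$, which is a genuine (if minor) fix ensuring negligibility when $t$ is small, and you flag explicitly that the hybrids only use the $(2t)$-th moment of the Haar measure; both are within the spirit of what the paper already asserts.
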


\newcommand{\share}{\mathsf{Share}}
\newcommand{\recons}{\mathsf{Rec}}
\newcommand{\leak}{\mathsf{Leak}}

\subsection{Secret Sharing}
\label{sec:secret:sharing}

\paragraph{Definition.} Below we formally define a quantum secret sharing scheme for a single-bit message $m \in \bit$, which is secure against an eavesdropper if only classical strings are leaked from the quantum shares.
\begin{definition}[2-out-of-$n$ Classical-Leakage-Resilient Quantum Secret Sharing] \label{def:ss}
    Let $n =\poly(\secparam)$. A \emph{2-out-of-$n$ classical-leakage-resilient quantum secret sharing scheme} is a tuple of algorithms $(\share, \recons)$ with the following syntax: \begin{itemize}
        \item $\share(1^\secparam,m)$ takes as input a security parameter and a message $m \in \bit$; it outputs a product state $\ket{\psi_m} = \bigotimes_{i=1}^n \ket{\psi_m^i}$ over registers $\reg{S_1,S_2,\dots,S_n}$ with $\poly(\secparam)$ qubits each. 
        \item $\recons(i,j,{\psi^i},{\psi^j})$ takes as input two indices $i,j \in [n]$ and quantum shares $\ket{\psi^i}, \ket{\psi^j}$ over registers $\reg{S_i,S_j}$. It outputs a message $m \in \bit$.
    \end{itemize} 
    It satisfies the following properties:
    \begin{enumerate}
        \item {\bf $\delta$-Correctness:} For all $m \in \bit, \secparam \in \N, (i,j) \in [n] \times [n]$ we have \begin{align*}
            \pr{ m \from \recons(i,j,{\psi_m^i},{\psi_m^j}) \; : \; \ket{\psi} = \bigotimes_{i=1}^n \ket{\psi_m^i} \from \share(1^\secparam, m) } \ge \delta.
        \end{align*}
        \item {\bf Perfect Secrecy:} For all $i \in [n]$, we have \begin{align*}
            \E_{\ket{\psi_0} \from \share(1^\secparam,0)} \ketbraX{\psi_0^i} = \E_{\ket{\psi_1} \from \share(1^\secparam,1)} \ketbraX{\psi_1^i}.
        \end{align*}
        \item {\bf $\ell$-bit $\eps$ Classical-Leakage-Resilience:} For any quantum algorithms $\leak_1,\dots,\leak_n$ that output $\ell$ classical bits, any $n$-partite state $\rho$, and any distinguisher $\alice$, we have \begin{align*}
            \abs{\pr{ 1 \from \alice(y_1,\dots,y_n) \; : \; \substack{{\psi_0} \from \share(1^\secparam, 0) \\ y_i\from \leak_i({\psi_0^i}, \rho_{i}), \; i \in [n]} } - \pr{ 1 \from \alice(y_1,\dots,y_n) \; : \; \substack{{\psi_1} \from \share(1^\secparam, 1) \\ y_i\from \leak_i({\psi_1^i}, \rho_{i}), \; i \in [n]} }} \le \eps,
        \end{align*}
        where $\rho_{i}$ is the $i$-th register of $\rho$.
    \end{enumerate}
\end{definition}

\begin{remark}
Note that we consider the perfect secrecy and the classical leakage resilience properties separately. It is natural to ask if it is possible to combine the two properties into a stronger property that guarantees security even against adversaries who in addition to receiving one of the shares, also receives as input classical leakage on the rest of the shares. Unfortunately, this stronger property cannot be satisfied due to a simple attack via quantum teleportation. Thus, we need to consider these two properties separately.   
\end{remark}

\begin{remark}
The above notion can be generalized in many ways. Firstly, for simplicity, we define the shares to be pure states and we could consider a general notion where the shares could be entangled with each other. Secondly, we can consider sharing schemes guaranteeing security against different adversarial access structures. 
\end{remark}

\subsubsection{Construction}
\noindent We will construct the primitive above using any distribution $\distr$ that satisfies $(\eps, \distr^t, \ell)$-SSI for $\eps = \negl(\secparam)$ as well as the first bullet of \Cref{clm:haar_reduction}. $\distr$ can be instantiated with the Haar distribution $\haar_d$ for $d=2^{4\ell + \secparam}$ by \Cref{cor:ssi_many_party} and \Cref{thm:shi_many_copy}, and $\haar_d$ can in turn be instantiated using a $nt$-design. We will pick $\omega(\log \secparam) \le t \le \poly(\secparam)$, so that the construction is efficient.

\begin{itemize}
    \item $\share(1^\secparam, m)$: Set $d = 2^{4n\ell + \secparam}$ and $\distr = \haar_d$. If $m=0$, it samples $nt$ copies of $\ket{\psi} \from \distr$ and outputs $\ket{\psi}^{\ot t}_{S_1} \ot \dots \ot \ket{\psi}^{\ot t}_{S_n}$. If $m=1$, it independently samples $t$ copies of $\ket{\psi_i} \from \distr$ for $i \in [n]$ and outputs $\ket{\psi_1}^{\ot t}_{S_1} \ot \dots \ot \ket{\psi_n}^{\ot t}_{S_n}$.
    \item $\recons(i,j,\ket{\varphi}_{S_i},\ket{\varphi'}_{S_j})$: It parses $\reg{S_i} = \reg{S_i^{(1)}\dots S_i^{(t)}}$ and $\reg{S_j} = \reg{S_j^{(1)}\dots S_i^{(t)}}$ as $t$ registers. It applies a SWAP test to the registers $\reg{S_i^{(k)}}$ and $\reg{S_j^{(k)}}$ for $k\in [t]$. It outputs $0$ if at least $\lfloor 3t/4 \rfloor$ of the SWAP tests succeed and outputs $1$ otherwise.
\end{itemize}

\subsubsection{Correctness and Secrecy} The construction above has $\delta$-correctness for $\delta \ge 1 - \negl(\secparam)$. Note that for $m=0$ each SWAP test will succeed with probability $1$, so $\recons$ will always output the correct message. If $m=1$ on the other hand, each SWAP test will succeed with probability negligibly close to $1/2$, so by a Chernoff bound $\recons$ will output $0$ with only negligible probability since $t = \omega(\log \secparam)$.
\par Perfect secrecy is trivial given that each share is distributed according to $\distr$.
\subsubsection{Classical Leakage Resilience.}
By \Cref{cor:ssi_many_party} and \Cref{thm:shi_many_copy}, the total variation distance between the output distributions of any $\ell$-bit leakage functions $\leak_1,\dots,\leak_n$ with respect to shares of $m=0$ and $m=1$ is at most \begin{align*}
    O\brac{ \frac{2^{2n\ell}n^3t^2}{\sqrt{d}} } = O\brac{ \frac{n^3t^2}{2^{\secparam/2}} } \le \negl(\secparam).
\end{align*}

\subsection*{Acknowledgements}
The authors would like to thank Ludovico Lami for answering many questions related to non-local state discrimination. PA and FK are supported by the National Science Foundation under Grant No. 2329938. HY is supported by AFOSR awards
FA9550-21-1-0040 and FA9550-23-1-0363, NSF CAREER award CCF-2144219, NSF award CCF-2329939, and the Sloan Foundation.

\bibliographystyle{alpha}
\bibliography{refs}

\newpage 

\appendix
\section{Additional Proofs} \label{sec:additional_proofs}
\subsection{Proof of Simultaneous Quantum Goldreich-Levin}
Below we prove \Cref{lem:quantumGL} by repeating the proof of \cite{AKL23} and observing that the weaker assumption is sufficient. Quantum Goldreich-Levin was originally shown by \cite{AC02} and later by \cite{CLLZ21} for auxiliary input.
\begin{proof}[Proof of \Cref{lem:quantumGL}] 
We can assume that the state $\rho \otimes \sigma = \ketbraX{\varphi_{x}}$ (quantum part of the input of $\alice$) is a pure state, for the mixed state case follows by convexity. By the deferred measurement principle, we can model $\bob, \charlie$ as unitary maps $(U^{r}_B, U^{r'}_C)$ that depend on the classical input $(r,r')$, each followed by a measurement. Below, we will track the registers belonging to $\bob (\bobprime)$ and/or $\charlie (\charlieprime)$ with subscripts $\reg{B},\reg{C}$. The resulting state then is given by \begin{align*} \left( U^{r}_B \otimes U^{r'}_C \right) \ket{\varphi_{x}}_\reg{BC}
&= \bigg( \alpha_{x,r,r'} \ket{\inner{r}{x}}_\reg{B} \ket{\inner{r'}{x}}_\reg{C} \ket{\phi_{x,r,r'}^0}_\reg{BC}
+ \beta_{x,r,r'} \ket{\inner{r}{x}}_\reg{B} \ket{\overline{\inner{r'}{x}}}_\reg{C} \ket{\phi_{x,r,r'}^1}_\reg{BC} \\
&+ \theta_{x,r,r'} \ket{\overline{\inner{r}{x}}}_\reg{B} \ket{\inner{r'}{x}}_\reg{C} \ket{\phi_{x,r,r'}^2}_\reg{BC}
+ \gamma_{x,r,r'} \ket{\overline{\inner{r}{x}}}_\reg{B} \ket{\overline{\inner{r'}{x}}}_\reg{C} \ket{\phi_{x,r,r'}^3}_\reg{BC} \bigg) \\
&=: \ket{\Gamma_{x,r,r'}}, \end{align*}
where $\ket{\phi^j_{x,r,r'}}$ is a normalized state for $j \in \bracC{0,1,2,3}$ and $\alpha_{x,r,r'}, \gamma_{x,r,r'}$ are the coefficients corresponding to the case of $\alice$ succeeding, so that we can express the assumption as \begin{align*} \E_{x,r,r'} \abs{\alpha_{x,r,r'}}^2 + \abs{\gamma_{x,r,r'}}^2 \ge \frac{1}{2} + \varepsilon
\end{align*}
and hence \begin{align}
    &\E_{x,r,r'} \abs{\alpha_{x,r,r'}}^2 - \abs{\beta_{x,r,r'}}^2 - \abs{\theta_{x,r,r'}}^2 + \abs{\gamma_{x,r,r'}}^2 = \E_{x,r,r'} \brac{2\abs{\alpha_{x,r,r'}}^2 + 2\abs{\gamma_{x,r,r'}}^2 - 1} \ge 2\varepsilon \label{eq:coeff_bound}
\end{align}

We now describe the extractor $\aliceprime = (\bobprime, \charlieprime, \rho)$: \begin{itemize}
    \item After $(\bobprime, \charlieprime)$ receives $\ket{\varphi_x}$ as overall input, $\bobprime$ prepares a uniform superposition over $r \in \cR$ and applies the unitary $U_B$, where we define $U_E$ as $U_E \ket{r}\ket{\varphi} = \ket{r}U^{r}_E \ket{\varphi}$ for $E \in \bracC{B,C}$. Then, $\bob'$ applies a $Z$ gate to the register storing the inner product $\inner{r}{x}$, and applies $U_B^\dagger$ to its state. Finally, $\bob'$ measures the register storing the random coins $r$ in the Fourier basis and outputs the result.
    \item $\charlie'$ is defined in a similar fashion.
\end{itemize}
Next, we will analyze the evolution of the state shared by $\bob'$ and $\charlie'$ step by step. Since the actions of $\bob'$ and $\charlie'$ commute, we can synchronously track their operations. After the first step, the state is given by \begin{align*} \left(U_B \otimes U_C \right) \frac{1}{{|\cR|}} \sum_{r,r' \in \cR} \ket{r}_\reg{B}\ket{r'}_\reg{C} \ket{\varphi_{x}}_\reg{BC} =  \frac{1}{{|\cR|}} \sum_{r,r' \in \cR} \ket{r}_\reg{B}\ket{r'}_\reg{C} \ket{\Gamma_{x,r,r'}}. 
\end{align*}

Next, $\bob'$ and $\charlie'$ each apply a $Z$ gate to their register storing the inner product, which results in the state \begin{align*} &\frac{1}{{|\cR|}} \sum_{r,r' \in \cR}  \ket{r}_\reg{B}\ket{r'}_\reg{C} (-1)^{\inner{r}{x} \oplus \inner{r'}{x}} \bigg( \alpha_{x,r,r'} \ket{\inner{r}{x}}_\reg{B} \ket{\inner{r'}{x}}_\reg{C} \ket{\phi_{x,r,r'}^0}_\reg{BC} \\
&- \beta_{x,r,r'} \ket{\inner{r}{x}}_\reg{B} \ket{\overline{\inner{r'}{x}}}_\reg{C} \ket{\phi_{x,r,r'}^1}_\reg{BC}
- \theta_{x,r,r'} \ket{\overline{\inner{r}{x}}}_\reg{B} \ket{\inner{r'}{x}}_\reg{C} \ket{\phi_{x,r,r'}^2}_\reg{BC} \\
&+ \gamma_{x,r,r'} \ket{\overline{\inner{r}{x}}}_\reg{B} \ket{\overline{\inner{r'}{x}}}_\reg{C} \ket{\phi_{x,r,r'}^3}_\reg{BC} \bigg) \\
&=: \frac{1}{{|\cR|}} \sum_{r,r' \in \cR} \ket{r}_\reg{B}\ket{r'}_\reg{C} \ket{\Gamma'_{x,r,r'}},
\end{align*}
with \begin{align*} \inner{\Gamma_{x,r,r'}}{\Gamma'_{x,r,r'}} = (-1)^{\inner{r}{x} \oplus \inner{r'}{x}}\left(|\alpha_{x,r,r'}|^2 - |\beta_{x,r,r'}|^2 - |\theta_{x,r,r'}|^2 + |\gamma_{x,r,r'}|^2\right).
\end{align*}
Now $\bob'$ and $\charlie'$ uncompute the unitary $U_B \otimes U_C$, and the state becomes \begin{align*}
&\left(U_B \otimes U_C\right)^{\dagger} \frac{1}{{|\cR|}} \sum_{r \in \cR} \ket{r}_\reg{B}\ket{r'}_\reg{C} \ket{\Gamma'_{x,r,r'}} =  \frac{1}{{|\cR|}} \sum_{r \in \cR} \ket{r}_\reg{B}\ket{r'}_\reg{C} \left(U^{r}_B \otimes U^{r'}_C\right)^{\dagger} \ket{\Gamma'_{x,r,r'}} \\
&= \frac{1}{{|\cR|}} \sum_{r,r' \in \cR} \ket{r}_\reg{B}\ket{r'}_\reg{C} \left((-1)^{\inner{r}{x} \oplus \inner{r'}{x}}\left(|\alpha_{x,r,r'}|^2 - |\beta_{x,r,r'}|^2 - |\theta_{x,r,r'}|^2 + |\gamma_{x,r,r'}|^2\right) \ket{\varphi_{x}}_\reg{BC} + \ket{\mathsf{err}_{x,r,r'}} \right), 
\end{align*}
where $\ket{\mathsf{err}_{x,r,r'}}$ is a subnormalized state orthogonal to $\ket{\varphi_{x}}_\reg{BC}$. \\

\par Next, $\bob'$ and $\charlie'$ each apply a Quantum Fourier Transform (QFT) on their random coins, resulting in the state \begin{align*}
    \frac{1}{{|\cR|^{2}}} &\sum_{r,r' \in \cR} \bigg((-1)^{\inner{r}{x} \oplus \inner{r'}{x}}\big(|\alpha_{x,r,r'}|^2 - |\beta_{x,r,r'}|^2 \nonumber \\
    &- |\theta_{x,r,r'}|^2 + |\gamma_{x,r,r'}|^2\big) \ket{\varphi_{x}}_\reg{BC} + \ket{\mathsf{err}_{x,r,r'}} \bigg) \left( \sum_{y,z \in \cR} (-1)^{\inner{r}{y} \oplus \inner{r'}{z}}\ket{y}_\reg{B}\ket{z}_\reg{C} \right).
\end{align*}
Note that the coefficient of $\ket{x}_\reg{B}\ket{x}_\reg{C}\ket{\varphi_{x}}$ equals \begin{align*}
    \frac{1}{|\cR|^{2}} \sum_{r,r' \in \cR} |\alpha_{x,r,r'}|^2 - |\beta_{x,r,r'}|^2 - |\theta_{x,r,r'}|^2 + |\gamma_{x,r,r'}|^2,
\end{align*}
so the probability that $\bob'$ and $\charlie'$ both output $x$ is lower bounded by \begin{align*}
    \pr{y = z = x} &\ge \E_{x}  \abs{ \frac{1}{|\cR|^{2}} \sum_{r,r' \in \cR} |\alpha_{x,r,r'}|^2 - |\beta_{x,r,r'}|^2 - |\theta_{x,r,r'}|^2 + |\gamma_{x,r,r'}|^2 }^2 \\
    &= \E_{x} \abs{ \E_{r,r'} |\alpha_{x,r,r'}|^2 - |\beta_{x,r,r'}|^2 - |\theta_{x,r,r'}|^2 + |\gamma_{x,r,r'}|^2 }^2 \\
    &\ge \abs{ \E_{x,r,r'} |\alpha_{x,r,r'}|^2 - |\beta_{x,r,r'}|^2 - |\theta_{x,r,r'}|^2 + |\gamma_{x,r,r'}|^2 }^2  \\
    &\ge 4\varepsilon^2,
\end{align*}
where we applied Jensen's inequality (due to the convexity of the square and absolute value functions) as well as \cref{eq:coeff_bound}.

\end{proof}

\end{document}